\numberwithin{equation}{section}
\newtheorem{thm}{Theorem}[section]
\newtheorem{cor}[thm]{Corollary}
\newtheorem{lem}[thm]{Lemma}
\DeclareMathOperator*{\argmin}{arg\,min}
\newcommand{\dm}{d}
\newcommand{\RR}{\mathbb{R}}
\newcommand{\vc}[1]{\bm{#1}}
\newcommand{\diff}{\mathrm{d}}
\renewcommand{\Pr}{\mathbb{P}}
\newcommand{\dto}{\xrightarrow{d}} 
\newcommand{\pto}{\overset{\Pr}{\rightarrow}}
\newcommand{\eps}{\varepsilon}
\newcommand{\Omegan}{\widehat{\Omega}_n}
\newcommand{\Hessian}{\mathcal{H}}
\newcommand{\norm}[1]{\lVert {#1} \rVert}
\providecommand{\keywords}[1]{\textbf{Keywords:} #1}
\title{An M-estimator of spatial tail dependence}
\author[1]{John H.J.~Einmahl}
\author[2]{Anna Kiriliouk}
\author[3]{Andrea Krajina}
\author[2]{Johan Segers}
\affil[1]{\footnotesize \emph{Department of Econometrics \& OR and CentER, Tilburg University,
P.O. Box 90153,} \authorcr
\emph{5000 LE Tilburg, The Netherlands. E-mail:} \url{j.h.j.einmahl@uvt.nl}}
\affil[2]{\footnotesize \emph{Institut de Statistique, Biostatistique et Sciences Actuarielles, Universit\'e catholique de Louvain,} \authorcr
\emph{Voie du Roman Pays 20, B-1348 Louvain-la-Neuve, Belgium. E-mail:} \authorcr
\url{anna.kiriliouk@uclouvain.be}, \url{johan.segers@uclouvain.be}}
\affil[3]{\footnotesize \emph{Institute for Mathematical Stochastics, University of G\"{o}ttingen, G\"{o}ttingen, Germany.} \authorcr
\emph{E-mail:} \url{andrea.krajina@mathematik.uni-goettingen.de}}
\begin{document}
\maketitle
 \vspace{-1cm}

\begin{abstract}
Tail dependence models for distributions attracted to a max-stable law are fitted using observations above a high threshold. To cope with spatial, high-dimensional data, a rank-based M-estimator is proposed relying on bivariate margins only. A data-driven weight matrix is used to minimize the asymptotic variance.
Empirical process arguments show that the estimator is consistent and asymptotically normal.
Its finite-sample performance is assessed in simulation experiments involving popular max-stable processes perturbed with additive noise.
An analysis of wind speed data from the Netherlands illustrates the method.
\end{abstract}

\keywords{%
Brown--Resnick process;
exceedances;
multivariate extremes;
ranks;
spatial statistics;
stable tail dependence function
}


\section{Introduction}
Max-stable random processes have become the standard for modelling extremes of environmental quantities, such as wind speed, precipitation, or snow depth. In such a context, data are modelled as realizations of spatial processes, observed at a finite number of locations. The statistical problem then consists of modelling the joint tail of a multivariate distribution. This problem can be divided into two separate issues: modelling the marginal distributions and modelling the dependence structure. A popular practice is to transform the marginals into an appropriate form and to fit a max-stable model to componentwise monthly or annual maxima using composite likelihood methods. This is done either in a frequentist setting \citep{padoan2010,davison2012} or in a Bayesian one \citep{reich2012,cooley2012}.
Alternatively, \citet{yuen2013} propose an M-estimator based on finite-dimensional cumulative distribution functions.
Popular parametric models for max-stable processes include the ones proposed by \cite{smith1990}, \cite{schlather2002} and \cite{kabluchko2009}, going back to \citet{brown1977}. Recent review articles on spatial extremes include \citet{cooley2012nr2}, \citet{davison2012} and \citet{ribatet2013}.

The above approaches consider block maxima, whereas more information can be extracted from the data by using all data vectors of  which at least one component is large. Although threshold-based methods are common in multivariate extreme-value theory, in spatial extremes they are only starting to be developed. A first example is \citet{dehaanpereira2006}, where several one- and two-dimensional models for spatial extremes are proposed. Another parametric model for spatial tail dependence is introduced in \citet{buishand2008}. The parameter estimator is shown to be asymptotically normal and the method is applied to daily rainfall data.
In \citet{huser2014}, a pairwise censored likelihood is used to analyse space-time extremes. The method is applied to an extension of Schlather's model.
Another study of space-time extremes can be found in \citet{davis2013}, where asymptotic normality of the pairwise likelihood estimators of the parameters of a Brown--Resnick process is proven for a jointly increasing number of spatial locations and time points.
In \citet{jeon2012}, bivariate threshold exceedances are modelled using a composite likelihood procedure. Asymptotic normality of the estimator is obtained by assuming second-order regular variation for the distribution function that is in the max-domain of attraction of an extreme-value distribution. A numerical study comparing two distinct approaches for composite likelihoods can be found in \citet{bacro2013}.
In \citet{wadsworth2014}, a censored Poisson process likelihood is considered in order to simplify the likelihood expressions in the Brown--Resnick process and in \citet{engelke2014}, the distribution of extremal increments of processes that are in the max-domain of attraction of the Brown--Resnick process is investigated. Finally, in \citet{bienvenue2014}, a censored likelihood procedure is used to fit high-dimensional extreme value models for which the tail dependence function has a particular representation.

The above methods all require estimation of the tails of the marginal distributions. This is not necessarily an easy task if the number of variables is large. Moreover, they are likelihood-based
and therefore cannot be used to fit, e.g., spectrally discrete max-stable models \citep{wang2011}.

The aim of this paper is to propose a new method for fitting multivariate tail dependence models to high-dimensional data arising for instance in spatial statistics. No likelihoods come into play as our approach relies on the stable tail dependence function, which is related to the upper tail of the underlying cumulative distribution function. The method is threshold-based in the sense that a data point is considered to be extreme if the rank of at least one component is sufficiently high. The only assumption is that the copula corresponding to the underlying distribution is attracted to a parametrically specified multivariate extreme-value distribution, see \eqref{eq:ell2} below.

By reducing the data to their ranks, the tails of the univariate marginal distributions need not be estimated. Indeed, the marginal distributions are not even required to be attracted to an extreme-value distribution. Another advantage of the rank-based approach is that the estimator is invariant under monotone transformations of the margins, notably for Box--Cox type of transformations.

Our starting point is \citet{einmahl2012}, where an M-estimator for a parametrically modelled tail dependence function in dimension $\dm$ is derived. However, that method crucially relies on $\dm$-dimensional integration, which becomes intractable in high dimensions. This is why we consider tail dependence functions of pairs of variables only. Our estimator is constructed as the minimizer of the distance between a vector of integrals of parametric pairwise tail dependence functions and the vector of their empirical counterparts. The asymptotic variance of the estimator can be minimized by replacing the Euclidean distance by a quadratic form based on a weight matrix estimated from the data. In the simulation studies we will compute estimates in dimensions up to 100.

We show that our estimator is consistent under minimal assumptions and asymptotically normal under an additional condition controlling the growth of the threshold. In our analysis, we take into account the variability stemming from the rank transformation, the randomness of the threshold, the random weight matrix and, in particular, the fact  that the max-stable model is only an approximation in the tail.

A point worth noticing is the generality of our methodology. Where many studies focus on a specific parametric (tail) model, ours is generic and makes weak assumptions only. Also, the field of application of extreme-value analysis in high dimensions is not restricted to environmental studies: see for example \citet{dematteo2013}, where a spectral clustering approach is introduced and applied to gas pressure data in the shipping industry.

For our approach a common, continuous distribution is required. The method does not apply to count data, for instance, and care must be taken with environmental variables that exhibit yearly seasonality or a trend, for instance due to global warming. In our case study, we study data on wind speeds in the Netherlands over a relatively short time period and limited to the summer months only.

The paper is organized as follows. Section~\ref{sec:background} presents the necessary background on multivariate extreme-value theory and extremes of stochastic processes. Section~\ref{sec:Mestimator} contains the definition of the pairwise M-estimator and the main theoretical results on consistency and asymptotic normality, as well as the practical aspects of the choice of the weight matrix. In Section~\ref{sec:spatial} the tail dependence functions of the anisotropic Brown--Resnick process and the Smith model are presented, as well as several simulation studies: two for a large number of locations, illustrating the computational feasibility of the estimator in high dimensions, and one for a smaller number of locations, presenting the benefits of the weight matrix. In addition, we compare the
performance of our estimator to the one proposed in \citet{engelke2014}. Section~\ref{sec:EKS12comp} contains comparisons between our pairwise M-estimator and the estimator proposed in \citet{einmahl2012}. Finally, in Section~\ref{sec:applic} we present an application to wind speed data from the Netherlands. Proofs are deferred to Appendix~A. The wind speed data and the programs that were used for the simulation studies are implemented in the \textsf{R} package \textsf{spatialTailDep} \citep{spatialTailDep}.

\section{Background}
\label{sec:background}

\subsection{Multivariate extreme-value theory}
\label{sec:background:mevt}

Let $\vc{X}_i = (X_{i1},\ldots,X_{i\dm})$, $i \in \{1,\ldots,n\}$, be independent random vectors in $\RR^{\dm}$
with common continuous distribution function $F$ and marginal distribution functions $F_1,\ldots,F_\dm$. Write $M_{nj} = \max_{i=1,\ldots,n} X_{ij}$ for $j=1,\ldots,d$. We say that $F$ is in the max-domain of attraction of an extreme-value distribution $G$ if there exist sequences $a_{nj} >0$ and $b_{nj} \in \RR$ for $j=1,\ldots,\dm$ such that
\begin{equation}\label{eq:ell1}
\lim_{n \rightarrow \infty} \Pr \bigg[ \frac{M_{n1} - b_{n1}}{a_{n1}} \leq x_1 , \ldots , \frac{M_{n\dm} - b_{n\dm}}{a_{n\dm}} \leq x_\dm \bigg] = G(\vc{x}), \qquad \vc{x} \in \RR^{\dm}.
\end{equation}
The margins, $G_1, \ldots, G_\dm$, of $G$ are univariate extreme-value distributions and the function $G$ is determined by
\begin{equation*}
G(\vc{x}) = \exp{\{- \ell(- \log G_1 (x_1), \ldots , - \log G_\dm (x_\dm))\}},
\end{equation*}
where $\ell: [0,\infty)^\dm \rightarrow [0,\infty)$ is called the stable tail dependence function. The distribution function of $(1/ \{1 - F_j (X_{1j})\} )_{j=1,\ldots,\dm}$ is in the max-domain of attraction of the extreme-value distribution $G_0 (\vc{z}) = \exp{\{-\ell(1/z_1,\ldots,1/z_\dm)\}}$, $\vc{z} \in (0, \infty)^\dm$, and we can retrieve the function $\ell$ via
\begin{equation}\label{eq:ell2}
  \ell(\vc{x})
  = \lim_{t \downarrow 0} t^{-1} \,
  \Pr[
    1 - F_1(X_{11}) \leq tx_1
    \text{ or } \ldots \text{ or }
    1 - F_\dm (X_{1\dm}) \leq t x_\dm
  ],
  \qquad \vc{x} \in [0, \infty)^\dm.
\end{equation}
Note that $G_0$ has unit Fr\'{e}chet margins, $G_{0,j} (z_j) = \exp{(-1/z_j)}$ for $z_j > 0$ and $j=1,\ldots,d$.

Relation \eqref{eq:ell1} is equivalent to relation \eqref{eq:ell2} \emph{and} convergence of the $d$ marginal distributions in \eqref{eq:ell1}. As a consequence, \eqref{eq:ell2} is substantially weaker than \eqref{eq:ell1}: it only concerns the distribution function of $(F_1(X_{11}), \ldots, F_d (X_{1d}))$, which does not depend on the marginal distributions $F_1,\ldots,F_d$.
That is, condition \eqref{eq:ell2} only concerns the dependence structure of $F$, described by, for example, the copula $C$ corresponding to $F$. Since $F(x_1,\dots,x_d) = C(F_1(x_1),\dots,F_d(x_d))$, we have
\begin{equation*}
\ell(\vc{x})
= \lim_{t \downarrow 0} t^{-1} \,
\left(1-C( 1 - tx_1,\dots,1 - t x_d )\right),
\qquad \vc{x} \in [0, \infty)^d.
\end{equation*}
The class of distribution functions satisfying \eqref{eq:ell2} is hence much larger than the class of functions satisfying the multivariate max-domain of attraction condition \eqref{eq:ell1}. 
It contains, for instance, all distributions of the form $F(x) = F_1 (x_1) \cdots F_d (x_d)$ with continuous margins, even if some of those margins do not belong to the max-domain of attraction of a univariate extreme value distribution.
Note also that if $F$ is already an extreme-value distribution, then it is attracted by itself.

From now on we will only assume relation \eqref{eq:ell2}, making no assumptions on the marginal distributions $F_1, \ldots, F_d$ except for continuity.
The function $\ell$ is convex, homogeneous of order one and satisfies $\ell(0,\ldots,0,x_j,0,\ldots,0) = x_j$ for $j=1,\ldots,\dm$. We assume that $\ell$ belongs to some parametric family $\{ \ell(\cdot\, ;\theta) : \theta \in \Theta \}$, with $\Theta \subset \mathbb{R}^p$.
There are numerous such parametric models and new families of models continue to be invented. We will see some examples of parametric stable tail dependence functions in Section~\ref{sec:spatial}. For more examples and background on multivariate extreme-value theory, see \citet{coles2001}, \citet{beirlant2004}, or \citet{dehaanferreira2006}.

\subsection{Extremes of stochastic processes}
\label{sec:background:spatial}
Max-stable processes arise in the study of component-wise maxima of random processes rather than of random vectors.
Let $\mathcal{S}$ be a compact subset of $\mathbb{R}^2$ and let $\mathbb{C}(\mathcal{S})$ denote the space of continuous, real-valued functions on $\mathcal{S}$, equipped with the supremum norm $\|f\|_{\infty} = \sup_{\vc{s} \in \mathcal{S}} | f(\vc{s})|$ for $f \in \mathbb{C}(\mathcal{S})$. The restriction to $\mathbb{R}^2$ is for convenience only. In the applications to spatial data that we have in mind, $\mathcal{S}$ will represent the region of interest.

Consider independent copies $\{X_i(\vc{s})\}_{\vc{s} \in \mathcal{S}}$ for $i \in \{1,\ldots,n\}$ of a process $\{X(\vc{s})\}_{\vc{s} \in \mathcal{S}}$ in $\mathbb{C}(\mathcal{S})$. Then $X$ is in the max-domain of attraction of the max-stable process $Z$ if there exist sequences of continuous functions $a_n (\vc{s}) >0$ and $b_n (\vc{s})$ such that
\begin{equation*}
\bigg\{ \frac{ \max_{i=1,\ldots,n} X_i (\vc{s}) - b_n(\vc{s})}{a_n(\vc{s})} \bigg\}_{\vc{s} \in \mathcal{S}} \overset{w}{\rightarrow} \{Z(\vc{s})\}_{\vc{s} \in \mathcal{S}}, \qquad \text{ as } n \rightarrow \infty,
\end{equation*}
where $\overset{w}{\rightarrow}$ denotes weak convergence in $\mathbb{C}(\mathcal{S})$;
see \citet{dehaanlin2001} for a full characterization of max-domain of attraction conditions for the case $\mathcal{S} = [0,1]$. A max-stable process $Z$ is called simple if its marginal distribution functions are all unit Fr\'{e}chet.

Although our interest lies in the underlying stochastic processes $X_i$, data are always obtained on a finite subset of $\mathcal{S}$ only, i.e., at fixed locations $\vc{s}_1,\ldots,\vc{s}_\dm$. As a consequence, statistical inference is based on a sample of $\dm$-dimensional random vectors. The finite-dimensional distributions of 
$Z$ are multivariate extreme-value distributions. This brings us back to the ordinary, multivariate setting.

\section{M-estimator}
\label{sec:Mestimator}

\subsection{Estimation}
As in Section~\ref{sec:background:mevt}, let $\vc{X}_1, \ldots, \vc{X}_n$ be an independent random sample from a $\dm$-variate distribution $F$ with continuous margins and with stable tail dependence function $\ell$, see equation~\eqref{eq:ell2}. Assume that $\ell$ belongs to a parametric family, $\{\ell(\cdot \, ; \theta) : \theta \in \Theta \}$, with $\Theta \subset \RR^p$. Let $\theta_0$ denote the true parameter vector, that is, $\theta_0$ is the unique point in $\Theta$ such that $\ell(x) = \ell(x; \theta_0)$ for all $\vc{x} \in [0, \infty)^\dm$. The goal is to estimate the parameter vector $\theta_0$.

To this end, we first define a nonparametric estimator of $\ell$. Let $R_{ij}^n$ denote the rank of $X_{ij}$ among $X_{1j}, \ldots , X_{nj}$ for $j=1,\ldots,\dm$. Replacing $F$ and $F_1, \ldots, F_d$ in~\eqref{eq:ell2} by their empirical counterparts and replacing $t$ by $k/n$ yields
\begin{equation}
\label{eq:npstdf}
  \widehat{\ell}_{n,k} (\vc{x})
  \coloneqq
  \frac{1}{k} \sum_{i=1}^n \mathbbm{1}
  \left\{
    R_{i1}^n > n + \frac{1}{2} - kx_1
    \text{ or } \ldots \text{ or }
    R_{id}^n > n + \frac{1}{2} - kx_\dm
  \right\}.
\end{equation}
For the estimator to be consistent, we need $k = k_n \in \{1, \ldots, n\}$ to depend on $n$ in such a way that $k \rightarrow \infty$ and $k/n \rightarrow 0$ as $n \rightarrow \infty$. The estimator was originally defined in the bivariate case in \citet{huang1992} and \citet{dreeshuang1998}.

Let $\ell = \ell(\cdot \, ; \theta_0)$, and let $g=(g_1, \ldots, g_q)^T : [0,1]^d \rightarrow \mathbb{R}^q$ with $q \geq p$ denote a column vector of integrable functions. In \citet{einmahl2012}, an M-estimator of $\theta_0$ is defined by
\begin{equation}\label{eq:theta}
\widehat{\theta}'_n \coloneqq \argmin_{\theta \in \Theta} \sum_{m=1}^q \left( \int_{[0,1]^\dm} g_m (\vc{x}) \left\{ \widehat{\ell}_{n,k} (\vc{x}) - \ell(\vc{x} ; \theta) \right\} \, \diff \vc{x} \right)^2.
\end{equation}
Under suitable conditions, the estimator $\widehat{\theta}'_n$ is consistent and asymptotically normal. The use of ranks via the nonparametric estimator in~\eqref{eq:npstdf} permits to avoid having to fit a model to the (tails of the) marginal distributions. In fact, the only assumption on $F$, the existence of the stable tail dependence function~$\ell$ in \eqref{eq:ell2}, is even weaker than the assumption that $F$ belongs to the maximal domain of attraction of a max-stable distribution.

However, the approach is ill-adapted to the spatial setting, where data are gathered from dozens of locations. In high dimensions, the computation of $\widehat{\theta}'_n$ becomes infeasible due to the presence of $\dm$-dimensional integrals in the objective function in~\eqref{eq:theta}.

Akin to composite likelihood methods, we opt for a pairwise approach, minimizing over quadratic forms of vectors of two-dimensional integrals. Let $q$ represent the number of pairs of locations that we wish to take into account, so that $p \leq q \leq d(d-1)/2$.
Let $\pi$ be the function from
$\{1, \dots, q\}$ to $\{1, \dots, d\}^2$ that describes these
pairs, that is, for $m \in \{1,\ldots,q\}$, we have $\pi(m)=(\pi_1(m),\pi_2(m))=(u,v)$ with $1 \leq u < v \leq d$.
In the spatial setting (cf.~Section~\ref{sec:background:spatial}), the indices $u$ and $v$ correspond to locations $\vc{s}_u$ and $\vc{s}_v$ respectively.

The bivariate margins of the stable tail dependence function $\ell(\cdot\,;\theta)$ and the nonparametric estimator in \eqref{eq:npstdf} are given by
\begin{align*}
\ell_{\pi(m)} (x_{\pi_1(m)},x_{\pi_2(m)} ; \theta) & = \ell_{uv} (x_u,x_v ; \theta) \coloneqq \ell(0,\ldots,0,x_u,0,\ldots,0,x_v,0,\ldots,0 ;  \theta), \\
\widehat{\ell}_{n,k,\pi(m)} (x_{\pi_1(m)},x_{\pi_2(m)})& = \widehat{\ell}_{n,k,uv} (x_u,x_v ) \coloneqq
\widehat{\ell}_{n,k}(0,\ldots,0,x_u,0,\ldots,0,x_v,0,\ldots,0),
\end{align*}
respectively.
Consider the random $q \times 1$ column vector
\begin{equation*}
L_{n,k} (\theta) \coloneqq \left( \int_{[0,1]^2} \left\{ \widehat{\ell}_{n, k, \pi(m)} (x_{\pi_1(m)},x_{\pi_2(m)}) - \ell_{\pi(m)} (x_{\pi_1(m)},x_{\pi_2(m)} ; \theta) \right\} \, \diff x_{\pi_1(m)} \diff x_{\pi_2(m)} \right)_{m=1}^q.
\end{equation*}
Let $\Omegan  \in \mathbb{R}^{q \times q}$ be a symmetric, positive definite, possibly random  matrix. Define
\begin{equation*}
  f_{n,k,\Omegan} (\theta)
  \coloneqq L_{n,k} (\theta)^T \, \Omegan \, L_{n,k} (\theta),
  \qquad \theta \in \Theta.
\end{equation*}
The pairwise M-estimator of $\theta_0$ is defined as
\begin{equation}\label{eq:thetafinal}
\widehat{\theta}_n \coloneqq \argmin_{\theta \in \Theta} f_{n,k,\Omegan} (\theta) = \argmin_{\theta \in \Theta} \left\{ L_{n,k} (\theta)^T \, \Omegan \, L_{n,k} (\theta) \right\}.
\end{equation}
The simplest choice for $\Omegan$ is just the $q \times q$ identity matrix $I_q$, yielding
\begin{equation}
\label{eq:fnkIq}
  f_{n,k,I_q}(\theta)
  = \sum_{(u,v)}
  \left(
    \int_{[0,1]^2}
      \left\{
	\widehat{\ell}_{n, k, uv} (x_{u},x_{v}) - \ell_{uv} (x_{u},x_{v} ; \theta)
      \right\} \,
    \diff x_u \, \diff x_v
  \right)^2.
\end{equation}
Note the similarity of this objective function with the one for the original M-estimator in equation~\eqref{eq:theta}. The role of the matrix $\Omegan$ is to be able to assign data-driven weights to quantify the size of the vector of discrepancies $L_{n,k} (\theta)$ via a generalized Euclidian norm. As we will see in Section~\ref{asymptotic}, a judicious choice of this matrix will allow to minimize the asymptotic variance.

\subsection{Asymptotic results and choice of the weight matrix}
\label{asymptotic}

We show consistency and asymptotic normality of the rank-based pairwise M-estimator. Moreover, we provide a data-driven choice for $\Omegan$ which minimizes the asymptotic covariance matrix of the limiting normal distribution. Results for the construction of confidence regions and hypothesis tests are presented as well.

A quantity related to the stable tail dependence function $\ell$ is the exponent measure $\Lambda$, which is a measure on $[0,\infty]^d \setminus \{ (\infty,\ldots,\infty)\}$ determined by
\begin{equation*}
\Lambda(\{\vc{w} \in [0,\infty]^\dm : w_1 \leq x_1 \text{ or } \ldots \text{ or } w_\dm \leq x_\dm \}) = \ell(\vc{x}), \qquad \vc{x} \in [0,\infty)^d.
\end{equation*}
Let $W_{\Lambda}$ be a mean-zero Gaussian process, indexed by the Borel sets of $[0,\infty]^d \setminus \{(\infty,\ldots,\infty)\}$ and with covariance function
\begin{equation*}
\mathbb{E}[W_{\Lambda} (A_1) \, W_{\Lambda} (A_2)] = \Lambda(A_1 \cap A_2),
\end{equation*}
where $A_1$, $A_2$ are Borel sets in $[0,\infty]^d \setminus \{(\infty,\ldots,\infty)\}$. For $\vc{x} \in [0, \infty)^d$, define
\begin{align*}
W_{\ell} (\vc{x}) & = W_{\Lambda} (\{\vc{w} \in [0,\infty]^d \setminus \{(\infty,\ldots,\infty)\} : w_1 \leq x_1 \text{ or } \ldots \text{ or } w_d \leq x_d \}), \\
W_{\ell,j} (x_j) & = W_{\ell} (0,\ldots,0,x_j,0,\ldots,0), \qquad j=1,\ldots,d.
\end{align*}
Let $\dot{\ell}_j$ be the partial derivative of $\ell$ with respect to $x_j$, and define
\begin{equation*}
  B(\vc{x})
  \coloneqq
  W_{\ell} (\vc{x}) - \sum_{j=1}^d \dot{\ell}_j (\vc{x}) \, W_{\ell,j} (x_j),
  \qquad \vc{x} \in [0,\infty)^d.
\end{equation*}
For $m \in \{1,\ldots,q\}$ with $\pi(m) = (\pi_1 (m),\pi_2 (m)) = (u,v)$, put
\begin{equation*}
B_{\pi(m)} (x_{\pi_1 (m)}, x_{\pi_2 (m)}) = B_{uv} (x_u, x_v) \coloneqq B(0,\ldots,0,x_u,0,\ldots,0,x_v,0,\ldots,0).
\end{equation*}
Also define the mean-zero random column vector
\begin{equation*}
\widetilde{B} \coloneqq \left( \int_{[0,1]^2} B_{\pi(m)} (x_{\pi_1 (m)}, x_{\pi_2 (m)}) \, \diff x_{\pi_1 (m)} \diff  x_{\pi_2 (m)} \right)_{m=1}^q.
\end{equation*}
The law of $\widetilde{B}$ is zero-mean Gaussian and its covariance matrix $\Gamma(\theta_0) \in \RR^{q \times q}$ depends on $\theta_0$ via the model assumption $\ell = \ell(\cdot\,;\theta_0)$.
For pairs $\pi(m) = (u,v)$ and $\pi(m') = (u',v')$, we can obtain the $(m,m')$-th entry of $\Gamma (\theta)$ by
\begin{equation}\label{eq:gamma}
\Gamma_{(m,m')} (\theta)  = \mathbb{E}[\widetilde{B}_{m} \widetilde{B}_{m'}] =  \int_{[0,1]^4} \mathbb{E} \left[  B_{u v} (x_{u},x_{v}) \, B_{u' v'} (x_{u'},x_{v'}) \right] \, \diff x_{u} \diff x_{v} \diff x_{u'} \diff x_{v'}.
\end{equation}
Define $\psi: \Theta \rightarrow \mathbb{R}^{q}$ by
\begin{equation}\label{eq:psi}
\psi (\theta)   \coloneqq  \left( \int_{[0,1]^2} \ell_{\pi(m)} (x_{\pi_1 (m)}, x_{\pi_2 (m)} ; \theta) \, \diff x_{\pi_1 (m)} \, \diff x_{\pi_2 (m)}
\right)_{m=1}^q.
\end{equation}
Assuming $\theta$ is an interior point of $\Theta$ and $\psi$ is differentiable in $\theta$, let $\dot{\psi} (\theta) \in \mathbb{R}^{q \times p}$ denote the total derivative of $\psi$ at $\theta$.

\begin{thm}[Existence, uniqueness and consistency]\label{resultmain1}
Let $\{ \ell(\cdot\,;\theta): \theta \in \Theta \}$, $\Theta \in \RR^p$, be a parametric family of $d$-variate stable tail dependence functions and let $(\pi(m))_{m=1}^q$, with $p \le q \le d(d-1)/2$, be $q$ distinct pairs in $\{1, \ldots, d\}$ such that the map $\psi$ in \eqref{eq:psi} is a homeomorphism from $\Theta$ to $\psi(\Theta)$. Let the $d$-variate distribution function $F$ have continuous margins and stable tail dependence function $\ell(\,\cdot\,;\theta_0)$ for some interior point $\theta_0 \in \Theta$. Let $\vc{X}_1, \ldots, \vc{X}_n$ be an iid sample from $F$.
Let $k = k_n \in \{1, \ldots, n\}$ satisfy $k \rightarrow \infty$ and $k/n \rightarrow 0$, as $n \rightarrow \infty$.
Assume also that
\begin{itemize}
\item[(C1)]
$\psi$ is twice continuously differentiable on a neighbourhood of $\theta_0$ and $\dot{\psi} (\theta_0)$ is of full rank;
\item[(C2)]
there exists a symmetric, positive definite matrix $\Omega$ such that $\Omegan \pto \Omega$ entry-wise.
\end{itemize}
Then with probability tending to one, the minimizer $\widehat{\theta}_n$ of $f_{n,k,\Omegan}$ exists and is unique. Moreover,
\begin{equation*}
\widehat{\theta}_n \overset{\mathbb{P}}{\rightarrow} \theta_0, \qquad \text{ as } n \rightarrow \infty.
\end{equation*}
\end{thm}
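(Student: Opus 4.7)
The objective can be rewritten as $f_{n,k,\Omegan}(\theta) = (\widehat{L}_{n,k} - \psi(\theta))^T \, \Omegan \, (\widehat{L}_{n,k} - \psi(\theta))$, where $\widehat{L}_{n,k}$ is the random $q$-vector whose $m$-th entry is $\int_{[0,1]^2} \widehat{\ell}_{n,k,\pi(m)}(x_u,x_v) \, \diff x_u \, \diff x_v$. My plan proceeds in three phases: first show $\widehat{L}_{n,k} \pto \psi(\theta_0)$; second, use the smoothness and full-rank part of (C1) together with (C2) to prove that $f_{n,k,\Omegan}$ is locally strictly convex around $\theta_0$ and therefore admits a unique local minimizer converging to $\theta_0$; third, use the homeomorphism property of $\psi$ to rule out minimizers drifting away from $\theta_0$.

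For the first phase, classical pointwise consistency of the bivariate nonparametric stable tail dependence estimator $\widehat{\ell}_{n,k,uv}(x_u,x_v) \pto \ell_{uv}(x_u,x_v;\theta_0)$ from \citet{huang1992} and \citet{dreeshuang1998}, together with the deterministic bound $\widehat{\ell}_{n,k,uv}(x_u,x_v) \leq x_u + x_v \leq 2$ on $[0,1]^2$, gives $\widehat{L}_{n,k} \pto \psi(\theta_0)$ by dominated convergence applied to $\mathbb{E}|\widehat{\ell}_{n,k,uv} - \ell_{uv}|$. Continuity of $\psi$ and condition (C2) then yield $f_{n,k,\Omegan}(\theta) \pto f_0(\theta) \coloneqq (\psi(\theta_0) - \psi(\theta))^T \Omega (\psi(\theta_0) - \psi(\theta))$, uniformly in $\theta \in \Theta$ since $\psi(\Theta)$ is bounded (the inequality $\max(x_u,x_v) \leq \ell_{uv}(x_u,x_v;\theta) \leq x_u+x_v$ forces $\psi(\Theta) \subset [2/3,1]^q$). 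Positive definiteness of $\Omega$ together with injectivity of $\psi$ make $\theta_0$ the unique zero, and global minimum, of $f_0$.

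For the local analysis, differentiating gives $\nabla f_{n,k,\Omegan}(\theta) = -2\, \dot{\psi}(\theta)^T \Omegan \, (\widehat{L}_{n,k} - \psi(\theta))$, which at $\theta_0$ is $o_{\mathbb{P}}(1)$ by Phase 1. A further differentiation yields a Hessian of the form $2\, \dot{\psi}(\theta)^T \Omegan \, \dot{\psi}(\theta)$ plus a correction term linear in $\widehat{L}_{n,k} - \psi(\theta)$. Choose $\eta > 0$ small enough that $B(\theta_0,\eta)$ lies inside the neighbourhood on which $\psi$ is $C^2$ (from (C1)) and on which continuity of $\dot{\psi}$ keeps $2\, \dot{\psi}(\theta)^T \Omega \dot{\psi}(\theta)$ close to the positive definite matrix $2\, \dot{\psi}(\theta_0)^T \Omega \dot{\psi}(\theta_0)$. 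The correction term is $O_{\mathbb{P}}(\eta) + o_{\mathbb{P}}(1)$ in operator norm on this ball, so with probability tending to one the Hessian stays positive definite uniformly on $B(\theta_0,\eta)$. Strict convexity and the vanishing of the gradient at $\theta_0$ then yield a unique critical point $\widehat{\theta}_n \in B(\theta_0,\eta)$; solving the first-order condition via an inverse-function-theorem argument gives $\widehat{\theta}_n \pto \theta_0$.

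The main obstacle is ruling out a global minimizer outside $B(\theta_0,\eta)$, and this is precisely where the homeomorphism assumption does its work. Continuity of $\psi^{-1} : \psi(\Theta) \to \Theta$ implies that any sequence $\theta_n \in \Theta$ with $\psi(\theta_n) \to \psi(\theta_0)$ must satisfy $\theta_n \to \theta_0$; consequently $\epsilon \coloneqq \inf \{ \norm{\psi(\theta) - \psi(\theta_0)} : \theta \in \Theta, \norm{\theta - \theta_0} \geq \eta \} > 0$. Combined with Phase 1, this produces the uniform lower bound $f_{n,k,\Omegan}(\theta) \geq \lambda_{\min}(\Omegan) (\epsilon/2)^2$ on $\Theta \setminus B(\theta_0, \eta)$ with probability tending to one, while $f_{n,k,\Omegan}(\theta_0) = o_{\mathbb{P}}(1)$. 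Any global minimizer is therefore forced into $B(\theta_0,\eta)$, where the local analysis from the previous paragraph supplies existence, uniqueness and convergence, completing the proof.
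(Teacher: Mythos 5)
Your proposal is correct and follows essentially the same route as the paper's proof: consistency of the integrated empirical pairwise tail dependence vector $\int \widehat{\ell}_{n,k}\,\mu \pto \psi(\theta_0)$, the homeomorphism property of $\psi$ to bound $\norm{\psi(\theta)-\psi(\theta_0)}$ away from zero outside any ball around $\theta_0$ (which, together with control of the eigenvalues of $\Omegan$, forces every minimizer into that ball and yields existence), and uniform positive definiteness of the Hessian of $f_{n,k,\Omegan}$ on a small ball around $\theta_0$ for uniqueness via strict convexity. The only loosely phrased step is attributing consistency to an \emph{inverse-function-theorem argument}; the cleaner route, already contained in your Phase 3 and used by the paper, is simply to apply the separation bound with an arbitrary radius $\varepsilon \in (0,\eta]$.
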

Let $\Delta_{d-1} = \{\vc{w} \in [0,1]^d : w_1 + \cdots + w_d =1\}$ denote the unit simplex in $\mathbb{R}^d$.
\begin{thm}[Asymptotic normality]\label{resultmain2}
If in addition to the assumptions of Theorem~\ref{resultmain1}
\begin{itemize}
\item[(C3)] $t^{-1} \Pr[1 - F_1(X_{11}) \leq t x_1 \textnormal{ or } \ldots \textnormal{ or } 1 - F_d (X_{1d}) \leq t x_d] - \ell(\vc{x} ; \theta_0) = O(t^{\alpha})$ uniformly in $\vc{x} \in \Delta_{d-1}$ as $t \downarrow 0$ for some $\alpha >0$;
\item[(C4)] $k = o(n^{2 \alpha/(1 + 2 \alpha)})$ and $k \rightarrow \infty$ as $n \rightarrow \infty$,
\end{itemize}
then
\begin{equation*}
\sqrt{k} \, (\widehat{\theta}_n - \theta_0) \dto \mathcal{N}_p(0,M(\theta_0))
\end{equation*}
where, for $\theta \in \Theta$ such that $\dot{\psi}(\theta)$ is of full rank,
\begin{equation}\label{eq:asym}
  M(\theta)
  \coloneqq
  \bigl( \dot{\psi} (\theta)^T \, \Omega \, \dot{\psi} (\theta) \bigr)^{-1} \,
  \dot{\psi} (\theta)^T \, \Omega \, \Gamma (\theta) \, \Omega \, \dot{\psi}(\theta) \,
  \bigl( \dot{\psi} (\theta)^T \, \Omega \, \dot{\psi} (\theta) \bigr)^{-1}.
\end{equation}
\end{thm}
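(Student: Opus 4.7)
The plan is to reduce the asymptotic normality of $\widehat{\theta}_n$ to a standard Z-estimator linearisation, with all the extreme-value-specific content compressed into one weak convergence statement for $\sqrt{k}\,L_{n,k}(\theta_0)$.

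The analytic heart is to establish
\begin{equation*}
\sqrt{k} \, L_{n,k}(\theta_0) \dto \widetilde{B} \sim \mathcal{N}_q(0, \Gamma(\theta_0)),
\end{equation*}
which I would deduce from the functional convergence $\sqrt{k}(\widehat{\ell}_{n,k} - \ell) \rightsquigarrow B$ on a compact rectangle of $[0,\infty)^d$, followed by applying the continuous linear map that integrates each bivariate margin over $[0,1]^2$. Condition (C3) plays the role of a second-order regular variation assumption: the pre-asymptotic bias, namely the difference between $t^{-1}\Pr[1-F_1(X_{11}) \le tx_1 \text{ or } \ldots \text{ or } 1-F_d(X_{1d}) \le tx_d]$ evaluated at $t = k/n$ and $\ell(\vc{x};\theta_0)$, is $O((k/n)^\alpha)$ uniformly in $\vc{x}$ on the simplex (and hence, by homogeneity of $\ell$, on any bounded set), while (C4) yields $\sqrt{k}(k/n)^\alpha = o(1)$ so that this bias is invisible at the $\sqrt{k}$ scale. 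The limit $B$ automatically carries the rank-transformation correction $\sum_j \dot{\ell}_j \, W_{\ell,j}$ because $\widehat{\ell}_{n,k}$ involves the marginal empirical quantile functions, which are Hadamard-differentiable; this is precisely the $B$ defined just before Theorem~\ref{resultmain1}, so the covariance of $\widetilde{B}$ is $\Gamma(\theta_0)$ of \eqref{eq:gamma} by construction.

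Given this, the argmin analysis is classical. Since $L_{n,k}(\theta) = L_{n,k}(\theta_0) - [\psi(\theta) - \psi(\theta_0)]$, the first-order condition for $\widehat{\theta}_n$, which exists and lies in a neighbourhood of $\theta_0$ eventually by Theorem~\ref{resultmain1}, reads
\begin{equation*}
\dot{\psi}(\widehat{\theta}_n)^T \, \Omegan \, L_{n,k}(\widehat{\theta}_n) = 0.
\end{equation*}
Using the Taylor expansion $\psi(\widehat{\theta}_n) - \psi(\theta_0) = \dot{\psi}(\theta_0)(\widehat{\theta}_n - \theta_0) + r_n$ with $\|r_n\| = O(\|\widehat{\theta}_n-\theta_0\|^2)$ from (C1), together with (C2), consistency of $\widehat{\theta}_n$, and invertibility of $\dot{\psi}(\theta_0)^T \Omega \dot{\psi}(\theta_0)$ (which follows from full rank and positive definiteness), a standard two-pass argument --- first using $L_{n,k}(\theta_0) = O_P(k^{-1/2})$ to deduce $\|\widehat{\theta}_n - \theta_0\| = O_P(k^{-1/2})$ and hence $\sqrt{k}\, r_n = o_P(1)$ --- yields
\begin{equation*}
\sqrt{k} \, (\widehat{\theta}_n - \theta_0) = \bigl( \dot{\psi}(\theta_0)^T \, \Omega \, \dot{\psi}(\theta_0) \bigr)^{-1} \, \dot{\psi}(\theta_0)^T \, \Omega \, \sqrt{k} \, L_{n,k}(\theta_0) + o_P(1).
\end{equation*}
Feeding in the weak limit from the first step and computing the resulting Gaussian covariance gives exactly $M(\theta_0)$ as in \eqref{eq:asym}.

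The hard part is the process-level convergence of $\sqrt{k}(\widehat{\ell}_{n,k}-\ell)$ to $B$. Three sources of fluctuation have to be controlled simultaneously at the $\sqrt{k}$ rate: the empirical process on the Vapnik--Chervonenkis class of unions of half-spaces $\{R_{ij}^n > n + \tfrac{1}{2} - kx_j\}$, the correction coming from the rank transformation (responsible for the $\dot{\ell}_j \, W_{\ell,j}$ terms in $B$ and requiring a functional delta method argument), and the deterministic tail approximation bias, which is precisely the trade-off that dictates the upper bound on $k$ in (C4). Once this functional convergence is in hand, the remaining Z-estimator calculus is mechanical.
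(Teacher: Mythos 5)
Your overall architecture coincides with the paper's: both proofs reduce the theorem to the single weak-convergence statement $\sqrt{k}\,L_{n,k}(\theta_0) \dto \widetilde{B} \sim \mathcal{N}_q(0,\Gamma(\theta_0))$ and then finish by a routine linearisation of the estimating equation. The paper does the second step by applying the mean value theorem to $\nabla f_{n,k,\Omegan}$ and reusing the uniform convergence of the Hessian established in the proof of Theorem~\ref{resultmain1}; your first-order-condition plus two-pass Taylor expansion of $\psi$ is equivalent bookkeeping, and your use of (C3)--(C4) to make the pre-asymptotic bias $o(k^{-1/2})$ is exactly how those conditions enter.

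The genuine gap is in how you obtain the key convergence. You propose to prove sup-norm functional convergence of $\sqrt{k}(\widehat{\ell}_{n,k}-\ell)$ to $B$ on a compact rectangle and then integrate via the continuous mapping theorem, with the rank-transformation term $\sum_j \dot{\ell}_j\, W_{\ell,j}$ produced by a functional delta method. That process-level statement requires the partial derivatives $\dot{\ell}_j$ to exist and be continuous (as in the empirical copula/tail copula process literature), and no such smoothness of $\ell$ is assumed in Theorem~\ref{resultmain2}: convexity and Lipschitz continuity only give $\dot{\ell}_j$ almost everywhere, which is enough to define $B$, $\widetilde{B}$ and $\Gamma(\theta_0)$ after integration, but not enough for weak convergence of the process in the supremum norm (think of spectrally discrete models, which the paper explicitly wants to cover). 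The paper sidesteps this by invoking \citet[Proposition~7.3]{einmahl2012} with $g(\vc{x})\,\diff\vc{x}$ replaced by the measures $\mu_m(\diff\vc{x})$: that result is stated directly for the integrated statistics $\sqrt{k}\int(\widehat{\ell}_{n,k}-\ell)\,\diff\mu_m$, precisely so that only a.e.\ existence of $\dot{\ell}_j$ is needed. So either argue at the level of the integrated functionals (as the paper does) or add a continuity assumption on $\dot{\ell}_j$; as written, your process-plus-continuous-mapping step does not follow from the stated hypotheses.
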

The proofs of Theorems~\ref{resultmain1} and~\ref{resultmain2} are deferred to Appendix~\ref{sec:proofs}.

An asymptotically optimal choice for the random weight matrix $\Omegan$ would be one for which the limit $\Omega$ minimizes the asymptotic covariance matrix $M(\theta_0)$ with respect to the positive semi-definite partial ordering on the set of symmetric matrices. This minimization problem shows up in other contexts as well, and its solution is well-known: provided $\Gamma(\theta)$ is invertible, the minimum is attained at $\Omega = \Gamma(\theta)^{-1}$, the matrix $M(\theta)$ simplifying to
\begin{equation}
\label{eq:Mopt}
  M_{\text{opt}}(\theta) =
  \bigl( \dot\psi(\theta)^T \, \Gamma(\theta)^{-1} \, \dot\psi(\theta) \bigr)^{-1},
\end{equation}
see for instance \citet[page 339]{abadir2005}. However, this choice of the weight matrix requires the knowledge of $\theta_0$, which is unknown. One possible solution consists of computing the optimal weight matrix evaluated at a preliminary estimator of $\theta_0$.

For $\theta \in \Theta$, let $H_{\theta}$ be the spectral measure related to $\ell (\cdot\,;\theta)$ \citep{dehaan1977,resnick1987}: it is a finite measure defined on the unit simplex  $\Delta_{d-1}$ and it satisfies
\begin{equation*}
  \ell(\vc{x};\theta)
  =
  \int_{\Delta_{d-1}}
    \max_{j=1,\ldots,d} \left\{ w_j x_j \right\}
  H_{\theta} (\diff \vc{w}),
  \qquad \vc{x} \in [0, \infty)^d.
\end{equation*}

\begin{cor}[Optimal weight matrix]
\label{cor1}
In addition to the assumptions of Theorem~\ref{resultmain2}, assume the following:
\begin{itemize}
\item[(C5)]
for all $\theta$ in the interior of $\Theta$, the matrix $\Gamma(\theta)$ in \eqref{eq:gamma} has full rank;
\item[(C6)]
the mapping $\theta \mapsto H_{\theta}$ is weakly continuous at $\theta_0$.
\end{itemize}
Assume $\widehat{\theta}_n^{(0)}$ converges in probability to $\theta_0$ and let $\widehat{\theta}_n$ be the pairwise M-estimator with weight matrix $\Omegan = \Gamma(\widehat{\theta}_n^{(0)})^{-1}$. Then, with $M_{\textnormal{opt}}$ as in \eqref{eq:Mopt}, we have
\[
  \sqrt{k} ( \widehat{\theta}_n - \theta_0 )
  \dto
  \mathcal{N}_p(0, M_{\textnormal{opt}}(\theta_0)),
  \qquad n \to \infty.
\]
For any choice of the positive definite matrix $\Omega$ in \eqref{eq:asym}, the difference $M(\theta_0) - M_{\textnormal{opt}}(\theta_0)$ is positive semi-definite.
\end{cor}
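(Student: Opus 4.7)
The corollary splits into two pieces. The distributional limit $\sqrt{k}(\widehat{\theta}_n - \theta_0) \dto \mathcal{N}_p(0, M_{\textnormal{opt}}(\theta_0))$ reduces to applying Theorem~\ref{resultmain2} to the plug-in weight matrix $\Omegan = \Gamma(\widehat{\theta}_n^{(0)})^{-1}$ and then simplifying the sandwich in \eqref{eq:asym}; the optimality inequality $M(\theta_0) - M_{\textnormal{opt}}(\theta_0) \succeq 0$ is the classical minimum-distance (GMM) identity and can be handled separately.

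\textbf{Application of Theorem~\ref{resultmain2}.} The heart of the matter is to verify (C2) with $\Omega = \Gamma(\theta_0)^{-1}$, which boils down to showing that $\theta \mapsto \Gamma(\theta)$ is continuous at $\theta_0$. Granting this, consistency of $\widehat{\theta}_n^{(0)}$, the continuous mapping theorem, and continuity of matrix inversion at the nonsingular matrix $\Gamma(\theta_0)$ (guaranteed by (C5)) yield $\Omegan \pto \Gamma(\theta_0)^{-1}$, a symmetric positive definite limit. To prove continuity of $\Gamma$, I would expand the integrand in \eqref{eq:gamma} using the identity $\mathbb{E}[W_{\ell}(\vc{x}) W_{\ell}(\vc{y})] = \ell(\vc{x};\theta) + \ell(\vc{y};\theta) - \ell(\vc{x} \vee \vc{y};\theta)$ and its analogues for the marginal processes $W_{\ell,j}$, so that each entry of $\Gamma(\theta)$ becomes an integral over $[0,1]^4$ of a finite sum of products of values of $\ell(\cdot\,;\theta)$ and its first partial derivatives $\dot{\ell}_j(\cdot\,;\theta)$. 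Each factor is a continuous functional of the spectral measure $H_\theta$, so (C6) yields pointwise convergence as $\theta \to \theta_0$ at almost every argument, and uniform boundedness ($\ell$ is $1$-Lipschitz and $\dot{\ell}_j \in [0,1]$) lets dominated convergence close the argument. With (C2) verified, Theorem~\ref{resultmain2} delivers asymptotic normality with covariance \eqref{eq:asym}, and the choice $\Omega = \Gamma(\theta_0)^{-1}$ algebraically collapses the sandwich to $(\dot{\psi}(\theta_0)^T \Gamma(\theta_0)^{-1} \dot{\psi}(\theta_0))^{-1} = M_{\textnormal{opt}}(\theta_0)$.

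\textbf{Optimality and main obstacle.} For the matrix inequality, set $A = \dot{\psi}(\theta_0)$, $\Gamma = \Gamma(\theta_0)$, and introduce the $q \times p$ matrices $U = \Gamma^{1/2} \Omega A$ and $V = \Gamma^{-1/2} A$, so that $A^T \Omega \Gamma \Omega A = U^T U$, $A^T \Gamma^{-1} A = V^T V$, and $A^T \Omega A = U^T V$. Inverting the desired inequality $M(\theta_0) \succeq M_{\textnormal{opt}}(\theta_0)$ reduces it to $V^T V \succeq V^T U (U^T U)^{-1} U^T V$, i.e.\ $V^T (I - P_U) V \succeq 0$ with $P_U = U(U^T U)^{-1} U^T$ the orthogonal projection onto the range of $U$, which is immediate. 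The one genuinely delicate step in the whole argument is the continuity of $\theta \mapsto \Gamma(\theta)$: weak continuity of $H_\theta$ transfers directly to pointwise continuity of $\ell(\vc{x};\theta)$, but the partial derivatives $\dot{\ell}_j(\vc{x};\theta)$ exist only Lebesgue-almost everywhere in $\vc{x}$. I would handle this via convex analysis, since $\ell(\cdot\,;\theta)$ is convex in $\vc{x}$: pointwise convergence of a sequence of convex functions to a convex limit forces convergence of the gradients at every point of differentiability of the limit, and the set of non-differentiability of $\ell(\cdot\,;\theta_0)$ has Lebesgue measure zero --- exactly the regularity that the integrals in \eqref{eq:gamma} demand.
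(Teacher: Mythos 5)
Your proposal is correct and follows the same skeleton as the paper's (very short) proof: verify (C2) with $\Omega=\Gamma(\theta_0)^{-1}$ via continuity of $\theta\mapsto\Gamma(\theta)$ at $\theta_0$ plus consistency of the pilot estimator and the continuous mapping theorem, apply Theorem~\ref{resultmain2}, collapse the sandwich in \eqref{eq:asym} to \eqref{eq:Mopt}, and finish with the classical weighting optimality. The difference is that the paper outsources the two non-trivial ingredients to citations --- the implication (C6) $\Rightarrow$ continuity of $\Gamma$ is quoted from \citet[Lemma 7.2]{einmahl2008}, and the matrix inequality from \citet[page 339]{abadir2005} --- whereas you prove both from scratch: the continuity via the covariance identity $\mathbb{E}[W_\ell(\vc{x})W_\ell(\vc{y})]=\ell(\vc{x})+\ell(\vc{y})-\ell(\vc{x}\vee\vc{y})$, weak convergence of $H_\theta$, convergence of gradients of convex functions at differentiability points of the limit, and dominated convergence; the optimality via the projection identity $M_{\textnormal{opt}}(\theta_0)^{-1}-M(\theta_0)^{-1}=V^T(I-P_U)V\succeq 0$ together with antitonicity of inversion on positive definite matrices, which is exactly the standard GMM argument the citation contains. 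Both self-contained arguments are sound; the gain is a proof that does not lean on external lemmas, at the cost of length. One small refinement you should make in the continuity step: the partial derivatives entering \eqref{eq:gamma} are evaluated at points of the form $x_u\vc{e}_u+x_v\vc{e}_v$, i.e.\ on two-dimensional faces of $[0,\infty)^d$, so almost-everywhere differentiability of the $d$-variate convex function $\ell(\cdot\,;\theta_0)$ with respect to $d$-dimensional Lebesgue measure is not the relevant statement; instead apply your convex-analysis argument directly to the bivariate margins $\ell_{uv}(\cdot\,;\theta)$, which are themselves convex, converge pointwise (as restrictions of $\ell$), and are differentiable Lebesgue-almost everywhere on $[0,1]^2$ --- with that substitution the dominated convergence argument closes as you describe.
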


In view of Corollary~\ref{cor1}, we propose the following two-step procedure:
\begin{enumerate}
\item  Compute the pairwise M-estimator $\widehat{\theta}_n^{(0)}$ with the weight matrix equal to the identity matrix, i.e., by minimizing $f_{n,k,I_q}$ in \eqref{eq:fnkIq}.
\item Calculate the pairwise M-estimator $\widehat{\theta}_n$ by minimizing $f_{n,k,\Omegan}$ with $\Omegan = \Gamma (\widehat{\theta}^{(0)}_n)^{-1}$.
\end{enumerate}
We will see in Section~\ref{applic} that this choice of $\Omegan$ indeed reduces the estimation error.

Calculating $M (\theta)$ can be a challenging task. The matrix $\Gamma (\theta)$ can become quite large since for a $d$-dimensional model, the maximal number of pairs is $d(d-1)/2$. In practice we will choose a smaller number of pairs: we will see in Section~\ref{applic} that this may even have a positive influence on the quality of our estimator. The entries of $\Gamma (\theta)$ are four-dimensional integrals of  $\mathbb{E}[B_{uv} (x_u,x_v) B_{u'v'} (x_{u'},x_{v'})]$ for $\pi(m) = (u,v)$ and $\pi(m') = (u',v')$, $m=1,\ldots,q$. The online supplementary material for this paper contains details on the calculation and implementation of the matrix $\Gamma (\theta)$. 

A natural competitor of the two-step procedure could be a one-step procedure where the weight matrix $\Gamma(\theta)^{-1}$ is recalculated within the minimisation routine. This resembles, but is substantially different from a continuously updating generalised method of moments \citep{hansen:heaton:yaron:1996}. Rather than as in equation~\eqref{eq:thetafinal}, the pairwise M-estimator of $\theta_0$ would be defined as the minimizer of the function
\[
  \theta \mapsto L_{n,k}(\theta)^T \, \Gamma(\theta)^{-1} \, L_{n,k}(\theta).
\]
Calculation of  $\Gamma(\theta)$ being time-consuming however, such an approach would be computationally unwieldy.

Finally, we present results that can be used for the construction of confidence regions and hypothesis tests.
\begin{cor}\label{cor2}
If the assumptions from Corollary \ref{cor1} are satisfied, then
\begin{equation*}
k (\widehat{\theta}_n - \theta_0)^T M(\widehat{\theta}_n)^{-1} (\widehat{\theta}_n - \theta_0) \overset{d}{\rightarrow} \chi_p^2, \qquad \text{ as } n \rightarrow \infty.
\end{equation*}
\end{cor}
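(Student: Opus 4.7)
The proof is a standard Slutsky/continuous-mapping argument layered on top of Corollary~\ref{cor1}. The plan is to combine the asymptotic normality result $\sqrt{k}(\widehat{\theta}_n-\theta_0) \dto \mathcal{N}_p(0, M_{\mathrm{opt}}(\theta_0))$ with consistency of the plug-in estimator $M(\widehat{\theta}_n)$ for $M(\theta_0) = M_{\mathrm{opt}}(\theta_0)$, then recognise the resulting quadratic form as a squared norm of a standard Gaussian.

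First I would establish continuity of the map $\theta \mapsto M(\theta)$ at $\theta_0$ when the weight matrix is $\Omega = \Gamma(\theta_0)^{-1}$ (the setting of Corollary~\ref{cor1}). Continuity of $\dot{\psi}$ follows from (C1). The delicate piece is continuity of $\theta \mapsto \Gamma(\theta)$, which by \eqref{eq:gamma} reduces to continuity of $(u,v,u',v')$-type covariance integrals $\mathbb{E}[B_{uv}(x_u,x_v) B_{u'v'}(x_{u'},x_{v'})]$ in $\theta$. These covariances can be written in terms of $\ell(\cdot;\theta)$ and its first-order partial derivatives through the exponent measure, and weak continuity of the spectral measure $H_\theta$ at $\theta_0$ (condition (C6)) together with homogeneity yields convergence of $\ell(\cdot;\theta_n) \to \ell(\cdot;\theta_0)$ uniformly on compacts and convergence of the relevant integrated quantities. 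Invertibility of $\Gamma(\theta_0)$ by (C5) allows inversion via the continuous mapping theorem, so $\Gamma(\theta)^{-1}$ is continuous at $\theta_0$, and hence so is $M(\theta)$.

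Next, Theorem~\ref{resultmain1} gives $\widehat{\theta}_n \pto \theta_0$, so by the continuous mapping theorem $M(\widehat{\theta}_n) \pto M(\theta_0) = M_{\mathrm{opt}}(\theta_0)$ and $M(\widehat{\theta}_n)^{-1} \pto M_{\mathrm{opt}}(\theta_0)^{-1}$. Writing $Z_n \coloneqq \sqrt{k}(\widehat{\theta}_n-\theta_0)$, Corollary~\ref{cor1} gives $Z_n \dto Z \sim \mathcal{N}_p(0, M_{\mathrm{opt}}(\theta_0))$. Slutsky's lemma then yields
\begin{equation*}
  k (\widehat{\theta}_n - \theta_0)^T M(\widehat{\theta}_n)^{-1} (\widehat{\theta}_n - \theta_0)
  = Z_n^T \, M(\widehat{\theta}_n)^{-1} \, Z_n
  \dto
  Z^T \, M_{\mathrm{opt}}(\theta_0)^{-1} \, Z,
\end{equation*}
by joint convergence of $(Z_n, M(\widehat{\theta}_n)^{-1})$ and continuity of the map $(z,A) \mapsto z^T A z$.

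Finally, since $Z \sim \mathcal{N}_p(0, M_{\mathrm{opt}}(\theta_0))$ with $M_{\mathrm{opt}}(\theta_0)$ positive definite, the random variable $Z^T M_{\mathrm{opt}}(\theta_0)^{-1} Z$ has a $\chi_p^2$ distribution, completing the proof. The only non-routine step is the continuity of $\Gamma(\theta)$ at $\theta_0$; the rest is bookkeeping via Slutsky and continuous mapping.
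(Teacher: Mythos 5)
Your proof is correct and follows essentially the route the paper intends: the paper omits the proof of Corollary~\ref{cor2}, referring to Corollary~4.3 of \citet{einmahl2012}, whose argument is precisely this plug-in/Slutsky/continuous-mapping scheme applied to the asymptotic normality from Corollary~\ref{cor1}. Your one non-routine ingredient, continuity of $\theta \mapsto \Gamma(\theta)$ at $\theta_0$ under (C6), is the same fact the paper already invokes (via Lemma~7.2 of \citet{einmahl2008}) in its proof of Corollary~\ref{cor1}.
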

Let $r < p$ and $\theta = (\theta_1,\theta_2) \in \Theta$ with $\theta_1 \in \mathbb{R}^{p-r}$ and $\theta_2 \in \mathbb{R}^r$. Suppose we want to test $\theta_2 = \theta^*_2$ against $\theta_2 \neq \theta^*_2$.
Write
$\widehat{\theta}_n = (\widehat{\theta}_{1n},\widehat{\theta}_{2n})$ and let $M_2 (\theta)$ be the $r \times r$ matrix corresponding to the lower right corner of $M(\theta)$.
\begin{cor}\label{cor3}
If the assumptions from Corollary \ref{cor1} are satisfied and if $\theta_0 = (\theta_1,\theta^*_2) \in \Theta$ for some $\theta_1$, then
\begin{equation*}
k (\widehat{\theta}_{2n} - \theta^*_2)^T M_2 (\widehat{\theta}_{1n},\theta^*_2)^{-1} (\widehat{\theta}_{2n} - \theta^*_2) \overset{d}{\rightarrow} \chi_r^2.
\end{equation*}
\end{cor}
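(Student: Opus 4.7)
\medskip

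\noindent\textbf{Proof proposal for Corollary~\ref{cor3}.}
The plan is to reduce the statement to the standard fact that if $Z \sim \mathcal{N}_r(0,\Sigma)$ with $\Sigma$ positive definite, then $Z^T \Sigma^{-1} Z \sim \chi_r^2$, and then to replace $\Sigma$ by a consistent estimator via Slutsky's theorem. The work consists only in verifying the ingredients.

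First I would appeal directly to Corollary~\ref{cor1}, which gives
\[
  \sqrt{k} \, (\widehat\theta_n - \theta_0) \dto \mathcal{N}_p\bigl(0, M_{\text{opt}}(\theta_0)\bigr).
\]
Partition the parameter vector as $\theta = (\theta_1, \theta_2)$ with $\theta_1 \in \RR^{p-r}$ and $\theta_2 \in \RR^r$, and write $M_{\text{opt}}(\theta_0)$ in the conformable block form whose lower-right $r\times r$ block is, by the notation introduced in the statement, $M_2(\theta_0)$. Projecting onto the last $r$ coordinates (a continuous linear map) and using the hypothesis $\theta_0 = (\theta_1, \theta_2^*)$ yields
\[
  \sqrt{k} \, (\widehat\theta_{2n} - \theta_2^*)
  \dto
  \mathcal{N}_r\bigl(0, M_2(\theta_0)\bigr).
\]
Under (C5) the matrix $\Gamma(\theta_0)$ has full rank and $\dot\psi(\theta_0)$ has full column rank by (C1); hence $M_{\text{opt}}(\theta_0) = (\dot\psi(\theta_0)^T \Gamma(\theta_0)^{-1} \dot\psi(\theta_0))^{-1}$ is symmetric positive definite, so its principal submatrix $M_2(\theta_0)$ is positive definite as well, and the quadratic form
$k(\widehat\theta_{2n} - \theta_2^*)^T M_2(\theta_0)^{-1} (\widehat\theta_{2n} - \theta_2^*)$ converges in distribution to $\chi_r^2$ by the continuous mapping theorem.

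It remains to replace $M_2(\theta_0)$ by $M_2(\widehat\theta_{1n}, \theta_2^*)$. By Theorem~\ref{resultmain1}, $\widehat\theta_n \pto \theta_0$, so in particular $\widehat\theta_{1n} \pto \theta_1$ and therefore $(\widehat\theta_{1n}, \theta_2^*) \pto (\theta_1, \theta_2^*) = \theta_0$. I would then show that $\theta \mapsto M(\theta)$ is continuous at $\theta_0$: the map $\theta \mapsto \dot\psi(\theta)$ is continuous by (C1), and continuity of $\theta \mapsto \Gamma(\theta)$ follows from the representation~\eqref{eq:gamma} together with the weak continuity of the spectral measure $H_\theta$ at $\theta_0$ assumed in (C6), since the bivariate margins $\ell_{uv}(\cdot\,;\theta)$ and their partial derivatives entering the covariance structure of the process $B$ can be written as integrals of bounded continuous functions against $H_\theta$. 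Continuity of the matrix inverse at an invertible point then yields $M_2(\widehat\theta_{1n}, \theta_2^*)^{-1} \pto M_2(\theta_0)^{-1}$, and a final application of Slutsky's theorem gives
\[
  k \, (\widehat\theta_{2n} - \theta_2^*)^T M_2(\widehat\theta_{1n}, \theta_2^*)^{-1} (\widehat\theta_{2n} - \theta_2^*) \dto \chi_r^2.
\]

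The only non-routine step is the continuity of $\Gamma(\theta)$ at $\theta_0$, and this is the step I would write out most carefully; however, the same continuity is already implicitly used in Corollary~\ref{cor1} to replace $\Gamma(\theta_0)^{-1}$ by $\Gamma(\widehat\theta_n^{(0)})^{-1}$ in the weight matrix, so the argument is available from the proof of that corollary and can simply be invoked.
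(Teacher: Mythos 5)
Your argument is correct and is essentially the intended one: the paper itself does not spell out a proof of Corollary~\ref{cor3}, deferring to Corollaries 4.3--4.4 of \citet{einmahl2012}, whose proofs follow exactly your route (marginal asymptotic normality of the subvector from Corollary~\ref{cor1}, positive definiteness of the corresponding block of $M_{\textnormal{opt}}(\theta_0)$, plug-in of the consistent point $(\widehat\theta_{1n},\theta_2^*)$ via continuity of $\theta\mapsto\dot\psi(\theta)$ and of $\theta\mapsto\Gamma(\theta)$ at $\theta_0$, the latter being the same continuity already used in the proof of Corollary~\ref{cor1} via (C6), and then Slutsky with the continuous mapping theorem). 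No gaps beyond the continuity point you yourself flag and correctly resolve by invoking that earlier step.
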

We will not prove these corollaries here, since their proofs are straightforward extensions of those in \citet[Corollary 4.3; Corollary 4.4]{einmahl2012}.

\section{Spatial models}
\label{sec:spatial}

\subsection{Theory and definitions}\label{sec:theory}
The isotropic Brown--Resnick process on $\mathcal{S} \subset \mathbb{R}^2$ is given by
\begin{equation*}
Z(\vc{s}) =  \max_{i \in \mathbb{N}} \xi_i \exp{ \left\{ \epsilon_i (\vc{s}) - \gamma(\vc{s}) \right\}}, \qquad \vc{s} \in \mathcal{S},
\end{equation*}
where $\{\xi_i\}_{i \geq 1}$ is a Poisson process on $(0,\infty]$ with intensity measure $\xi^{-2} \,\diff \xi$ and $\{\epsilon_i (\cdot) \}_{i \geq 1}$ are independent copies of a Gaussian process with stationary increments, $\epsilon (0) = 0$, variance $2 \gamma (\cdot)$, and semi-variogram $\gamma (\cdot)$.
The process with $\gamma (\vc{s}) = (||\vc{s}|| / \rho)^{\alpha}$ appears as the only limit of (rescaled) maxima of stationary and isotropic Gaussian random fields \citep{kabluchko2009}; here $\rho > 0$ and $0 < \alpha  \leq 2$. Since isotropy may not be a reasonable assumption for many spatial applications, we follow \citet{blanchet2011} and \citet{engelke2014} and introduce a transformation matrix $V$ defined by
 \begin{equation*}
V \coloneqq V(\beta,c) \coloneqq
\begin{bmatrix}
\cos{\beta} & -\sin{\beta} \\
c \sin{\beta} & c \cos{\beta}
\end{bmatrix},
\qquad \beta \in [0,\pi/2), \, c > 0,
\end{equation*}
and a transformed space $\mathcal{S}' = \{V^{-1}\vc{s} : \vc{s} \in \mathcal{S}\}$, so that an isotropic process on $\mathcal{S}$ is transformed to an anisotropic process on $\mathcal{S}'$. For $\vc{s}' \in \mathcal{S}'$ we focus on the anisotropic Brown--Resnick process
\begin{equation}\label{eq:BR}
Z_V (\vc{s}') \coloneqq Z(V \vc{s}') = \max_{i \in \mathbb{N}} \xi_i \exp{ \left\{ \epsilon_i (V \vc{s}') - \gamma(V \vc{s}') \right\}},
\end{equation}
whose semi-variogram is defined by
 \begin{equation*}
\gamma_V (\vc{s}') \coloneqq \gamma (V \vc{s}') = \left[ {\vc{s}'}^T \frac{V^T V}{\rho^2} \vc{s}' \right]^{\alpha/2}.
 \end{equation*}
The pairwise stable tail dependence function for a pair $(u,v)$, corresponding to locations $(\vc{s}'_u,\vc{s}'_v)$,  is given by
\begin{equation*}
\ell_{uv}(x_u,x_v) = x_u \Phi \bigg( \frac{a_{uv}}{2} + \frac{1}{a_{uv}} \log{\frac{x_u}{x_v}} \bigg)+ x_v \Phi \bigg( \frac{a_{uv}}{2} + \frac{1}{a_{uv}} \log{\frac{x_v}{x_u}} \bigg),
\end{equation*}
where $a_{uv} \coloneqq \sqrt{2 \gamma_V(\vc{s}'_u-\vc{s}'_v)}$ and $\Phi$ is the standard normal distribution function. Observe that the choice $\alpha = 2$ leads to
\begin{equation*}
a^2_{uv} = 2 \gamma(V(\vc{s}'_u-\vc{s}'_v)) = (\vc{s}'_u-\vc{s}'_v)^T  \Sigma^{-1} (\vc{s}'_u-\vc{s}'_v) , \quad \text{ for some } \Sigma =
\begin{bmatrix}
\sigma_{11} & \sigma_{12} \\
\sigma_{12} & \sigma_{22}
\end{bmatrix},
\end{equation*}
where $\Sigma$ represents any valid $2 \times 2$ covariance matrix.
This submodel is known as the Gaussian extreme value process or simply the Smith model \citep{smith1990}. We will present simulation studies for processes in the domain of attraction, in the sense of \eqref{eq:ell2}, of both the Smith model and the anisotropic Brown--Resnick process.
To calculate the weight matrix $\Gamma(\theta)^{-1}$, we will need to compute integrals over the four-dimensional margins of the stable tail dependence function, see~\eqref{eq:gamma} and the online supplementary material for this paper. 

In \cite{huser2013} the following representation is given for $\ell(x_1,\ldots,x_d ; \theta)$ for general $d$.
If $Z_V $ is defined as in \eqref{eq:BR} then for $\vc{s}_{u_1},\ldots,\vc{s}_{u_d} \in \mathbb{R}^2$
\begin{equation*}
\ell_{u_1,\ldots,u_d} (x_1,\ldots,x_d) =  \sum_{i=1}^d  x_i \Phi_{d-1} ( \eta^{(i)} (1/x) ; R^{(i)} ) ,
\end{equation*}
where
\begin{align*}
\eta^{(i)} (x) & = (\eta_1^{(i)} (x_1,x_i), \ldots, \eta_{i-1}^{(i)} (x_{i-1},x_i), \eta_{i+1}^{(i)} (x_{i+1},x_i) , \ldots , \eta_d^{(i)} (x_d,x_i) ) \,\,\, \in \mathbb{R}^{d-1}, \\
\eta_j^{(i)} (x_j,x_i) & = \sqrt{\frac{\gamma_V (s_{u_i} - s_{u_j})}{2}} + \frac{\log{(x_j/x_i)}}{\sqrt{2 \gamma_V (s_{u_i} - s_{u_j})}} \,\,\, \in \mathbb{R},
\end{align*}
and $R^{(i)} \in \mathbb{R}^{(d-1) \times (d-1)}$ is the correlation matrix with entries
\begin{equation*}
R^{(i)}_{jk} = \frac{\gamma_V (s_{u_i} - s_{u_j}) + \gamma_V(s_{u_i} - s_{u_k}) - \gamma_V(s_{u_j} - s_{u_k})}{2 \sqrt{\gamma_V(s_{u_i}-s_{u_j}) \gamma_V(s_{u_i} - s_{u_k})}}, \qquad j,k=1,\ldots,d; \, j,k \neq i.
\end{equation*}

\subsection{Simulation studies}\label{applic}
In order to study the performance of the pairwise M-estimator when the underlying distribution function $F$ satisfies \eqref{eq:ell2} for a function $\ell$ corresponding to the max-stable models described before, we generate random samples from Brown--Resnick processes and Smith models perturbed with additive noise. If $\vc{Z}= (Z_1,\ldots,Z_{d})$ is a max-stable process observed at $d$ locations, then we consider
\begin{equation*}
X_j = Z_j + \epsilon_j, \qquad j=1,\ldots,d,
\end{equation*}
where $\epsilon_j$ are independent half normally distributed random variables, corresponding to the absolute value of a normally distributed random variable with standard deviation $1/2$. All simulations are done in \textsf{R} \citep{Rmanual}. Realizations of $Z$ are simulated using the \textsf{SpatialExtremes} package \citep{ribatet}.

\paragraph{Perturbed max-stable processes on a large grid.\\}
Assume that we have $d=100$ locations on a $10 \times 10$ unit distance grid. We simulate 500 samples of size $n=500$ from the perturbed Smith model with parameters
\begin{equation*}
\Sigma =
\begin{bmatrix}
1.0 & 0.5 \\
0.5 & 1.5
\end{bmatrix},
\end{equation*}
and from a perturbed anisotropic Brown--Resnick process with parameters $\alpha = 1$, $\rho = 3$, $\beta=0.5$ and $c = 0.5$. Instead of estimating $\rho$, $\beta$, and $c$ directly, we estimate the three parameters of the matrix
\begin{equation*}
\mathcal{T} = \begin{bmatrix}
\tau_{11} & \tau_{12} \\
\tau_{12} & \tau_{22}
\end{bmatrix}
= \rho^{-2} \, V (\beta,c)^T V (\beta,c).
\end{equation*}
In practice, this parametrization, which is in line with the one of the Smith model, often yields better results.
We study the bias and root mean squared error (RMSE) for $k \in \{25,50,75,100\}$.
We compare the estimators for two sets of pairs: one containing all pairs ($q = 4950$) and one containing only pairs of neighbouring locations ($q = 342$). Although the first option may sound like a time-consuming procedure, estimation of the parameters for one sample takes about 20 seconds for the Smith model and less than two minutes for the anisotropic Brown--Resnick process. We let the weight matrix $\Omega$ be the $q \times q$ identity matrix, since for so many pairs a data-driven computation of the optimal weight matrix is too time-consuming.
Figure \ref{fig:spatsmith} shows the bias and RMSE of $(\sigma_{11},\sigma_{22},\sigma_{12})$ for the Smith model.
We see that great improvements are achieved by using only pairs of neighbouring locations and that the thus obtained estimator performs well. Using all pairs causes the parameters to have a large positive bias, which translates into a high RMSE. In general, distant pairs often lead to less dependence and hence less information about $\ell$ and its parameters.  Observe that small values of $k$ are preferable, i.e. $k = 25$ or $k=50$.

Figure \ref{fig:brerr} shows the bias and RMSE of the pairwise M-estimators of $(\alpha,\rho,\beta,c)$ for the anisotropic Brown--Resnick process. We see again that using only pairs of neighbouring locations improves the quality of estimation.
The corresponding estimators perform well for the estimation of $\alpha$, $\beta$, and $c$. The lesser performance when estimating $\rho$ seems to be inherent to the Brown--Resnick process and appears regardless of the estimation procedure: see for example \citet{engelke2014} or \citet{wadsworth2014}, who both report a positive bias of $\rho$ for small sample sizes. 
Compared to those for the Smith model, the values of $k$ for which the estimation error is smallest are higher, i.e., $k = 50$ or $k = 75$.

\begin{figure}[!ht]
\centering
\subfloat{\includegraphics[width=0.3\textwidth]{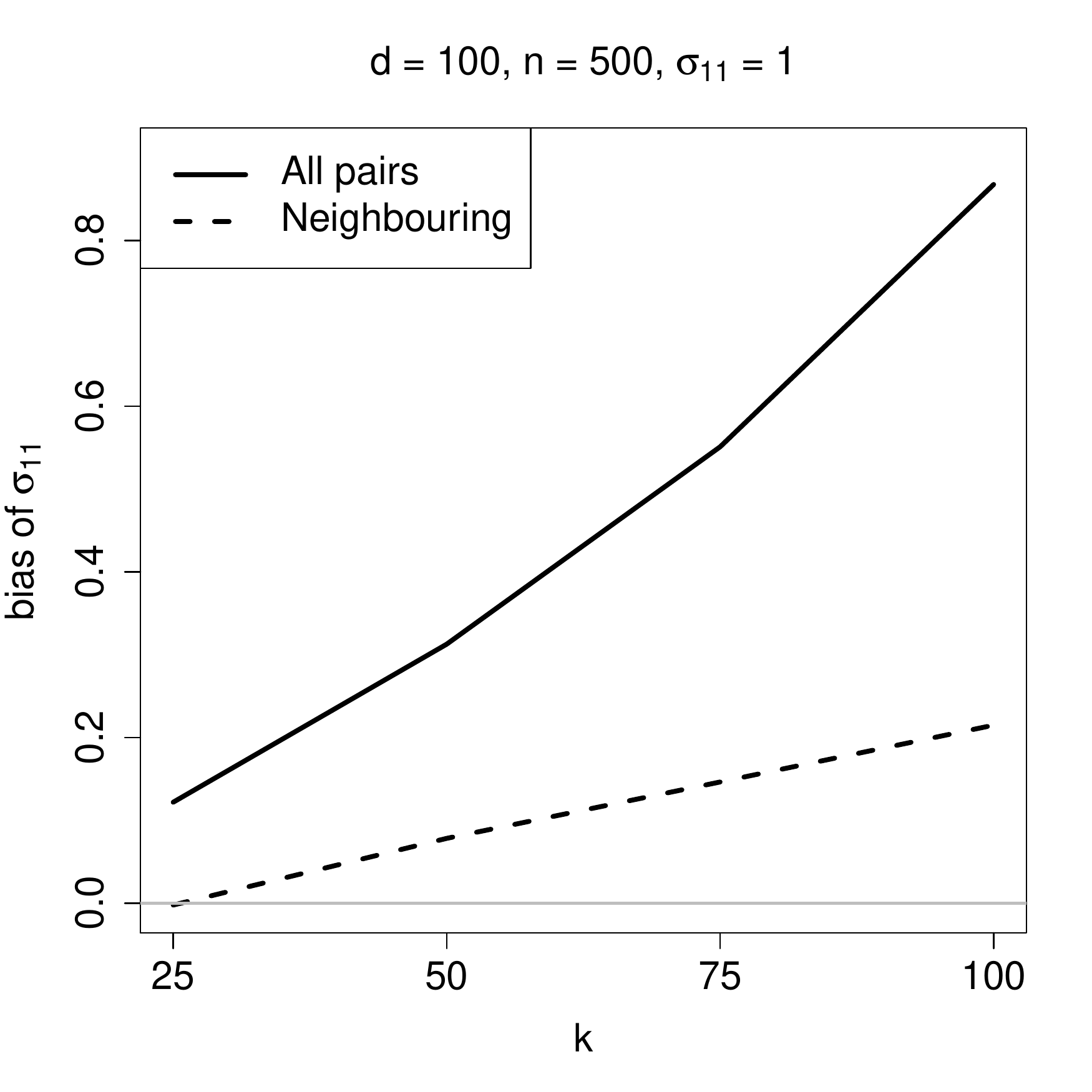}}
\subfloat{\includegraphics[width=0.3\textwidth]{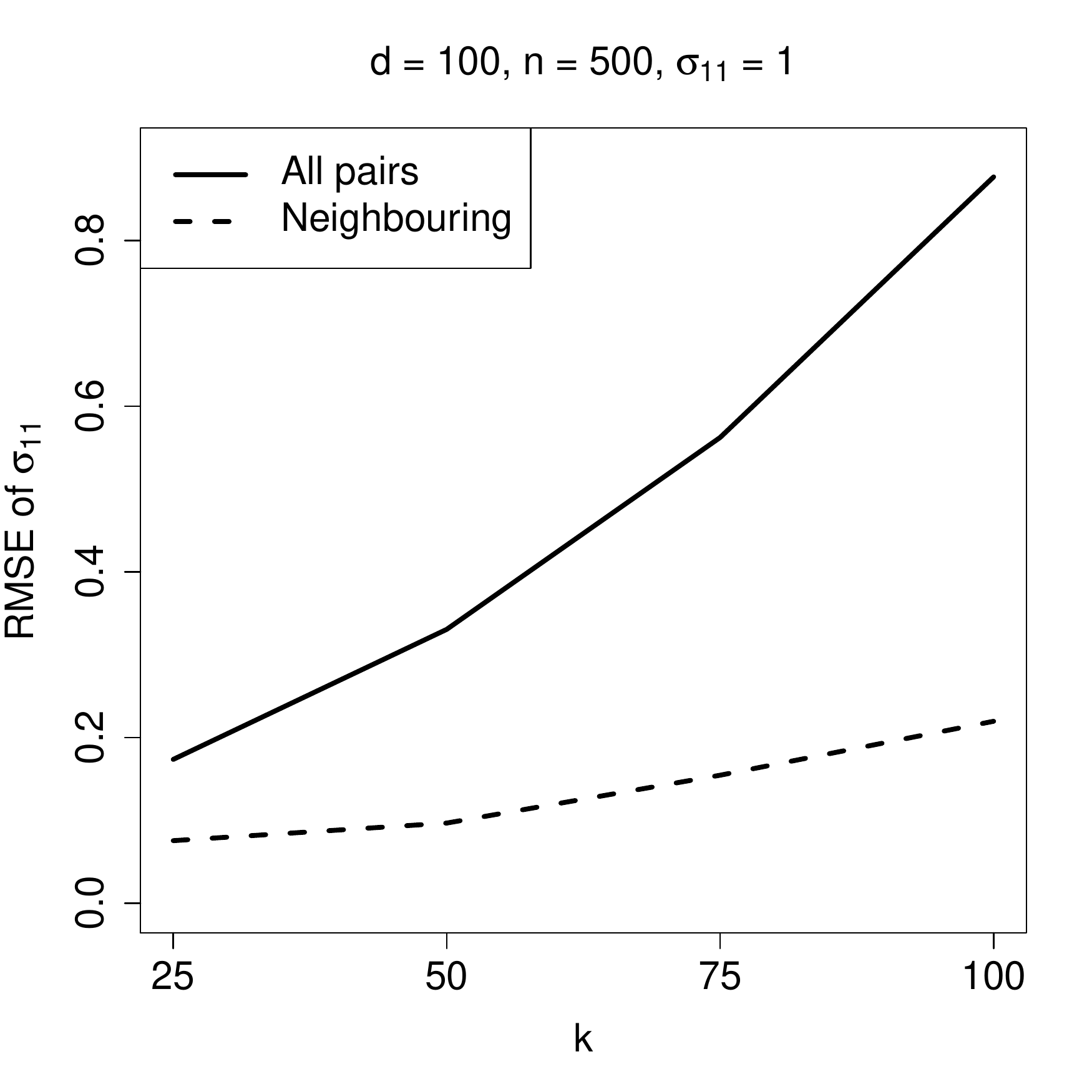}} \\
\subfloat{\includegraphics[width=0.3\textwidth]{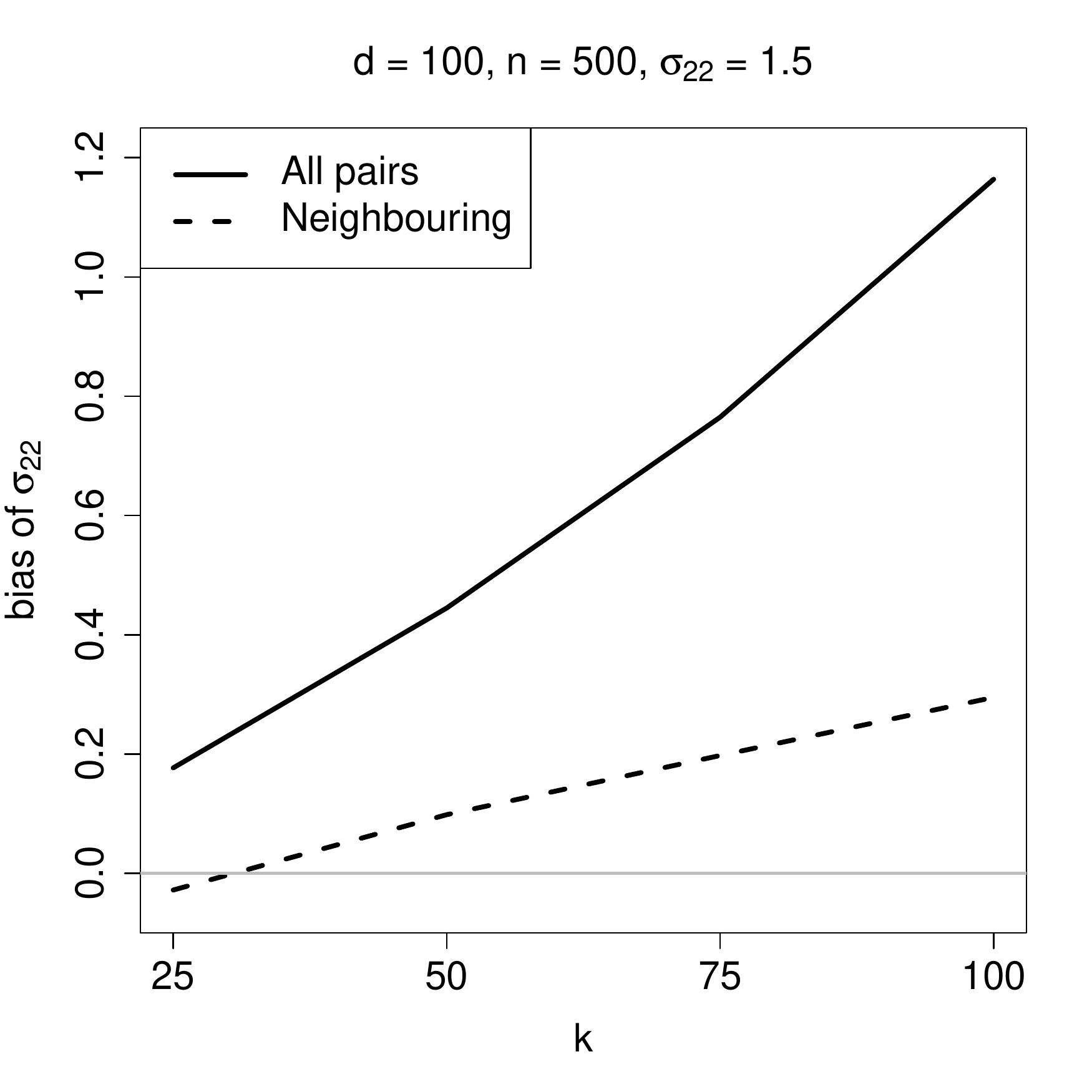}}
\subfloat{\includegraphics[width=0.3\textwidth]{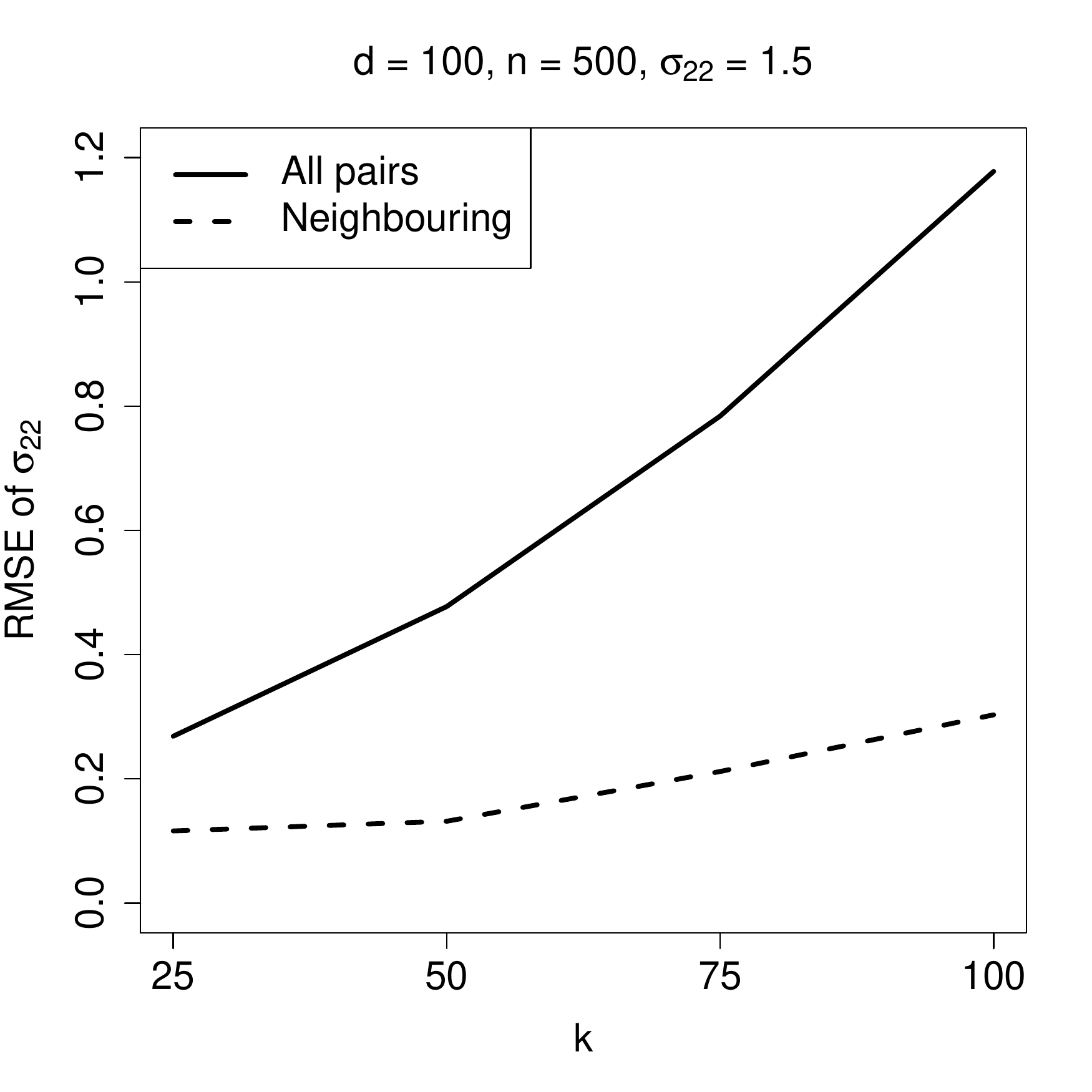}} \\
\subfloat{\includegraphics[width=0.3\textwidth]{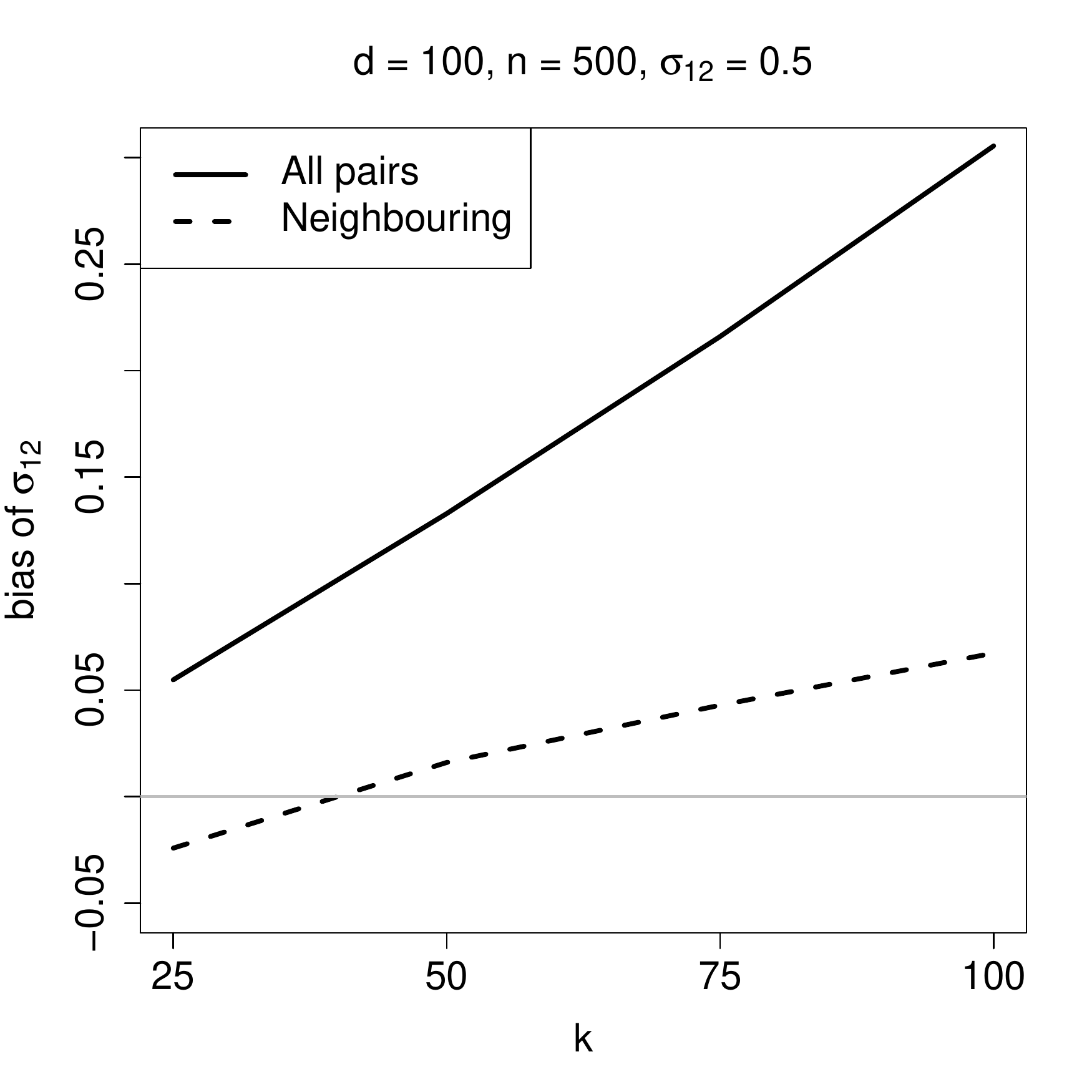}}
\subfloat{\includegraphics[width=0.3\textwidth]{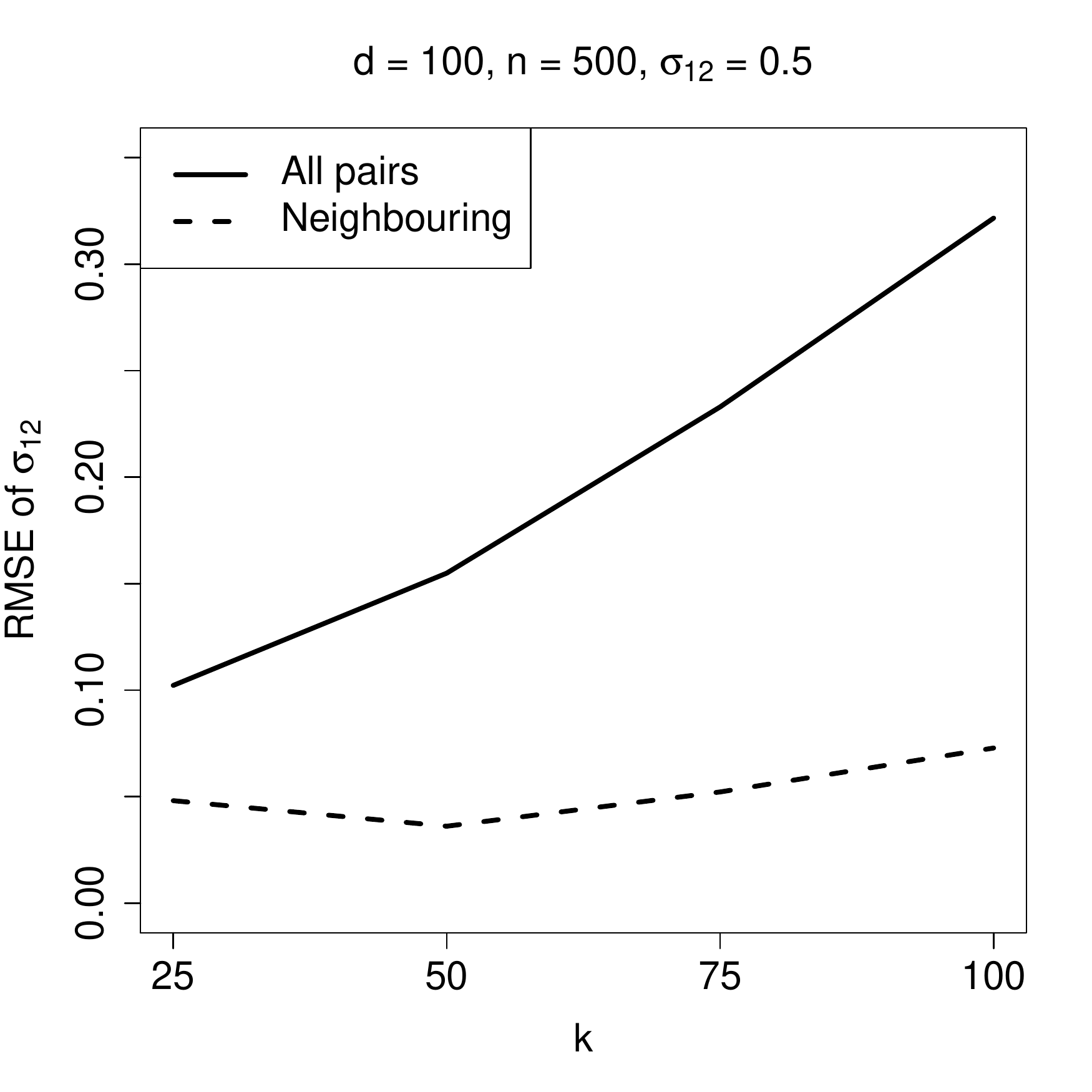}}
\caption{Bias and RMSE for estimators of $\sigma_{11} = 1$ (top), $\sigma_{22} = 1.5$ (middle), and $\sigma_{12} = 0.5$ (bottom) for the perturbed 100-dimensional Smith model with identity weight matrix; 500 samples of $n=500$.}
\label{fig:spatsmith}
\end{figure}

\begin{figure}[p]
\centering
\subfloat{\includegraphics[width=0.3\textwidth]{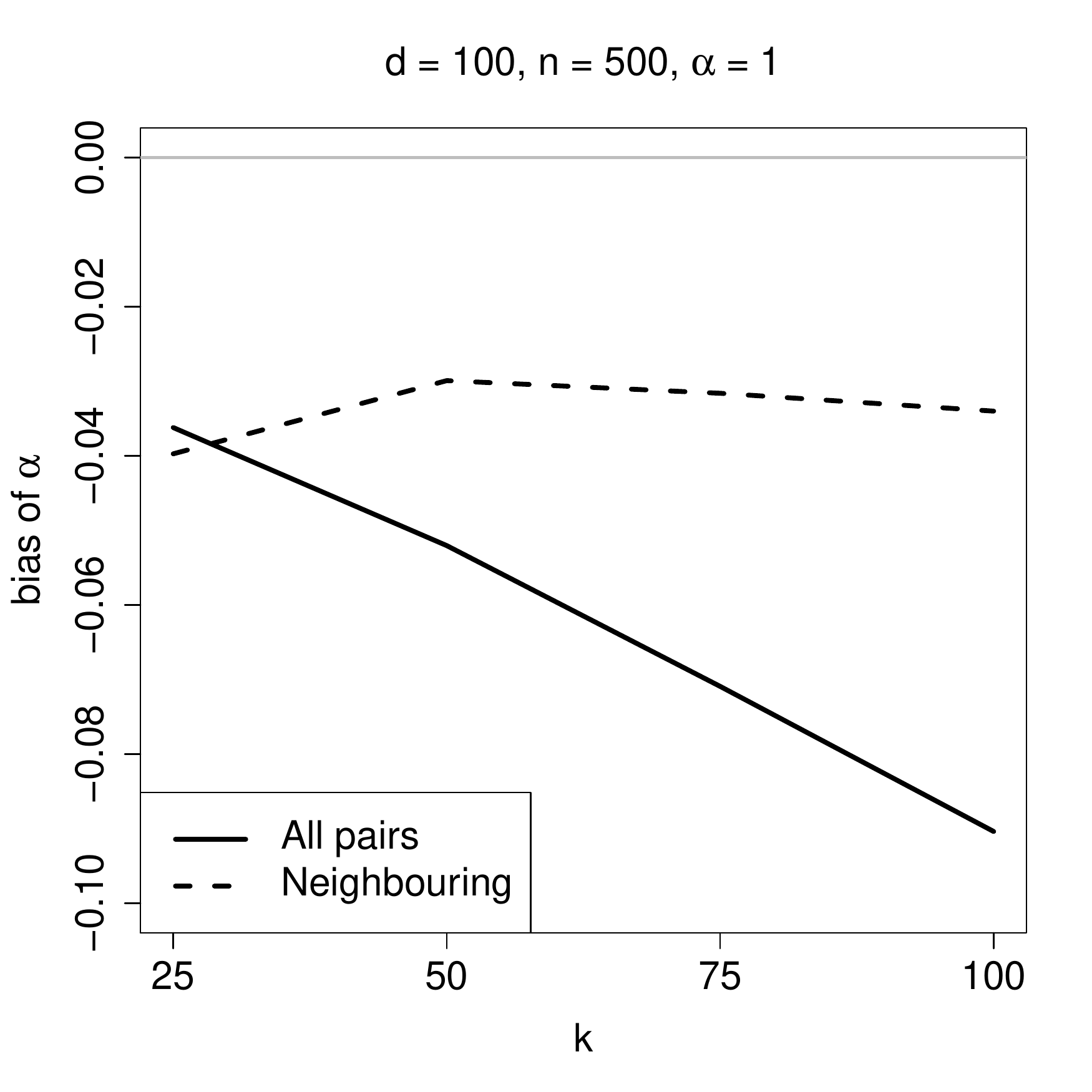}}
\subfloat{\includegraphics[width=0.3\textwidth]{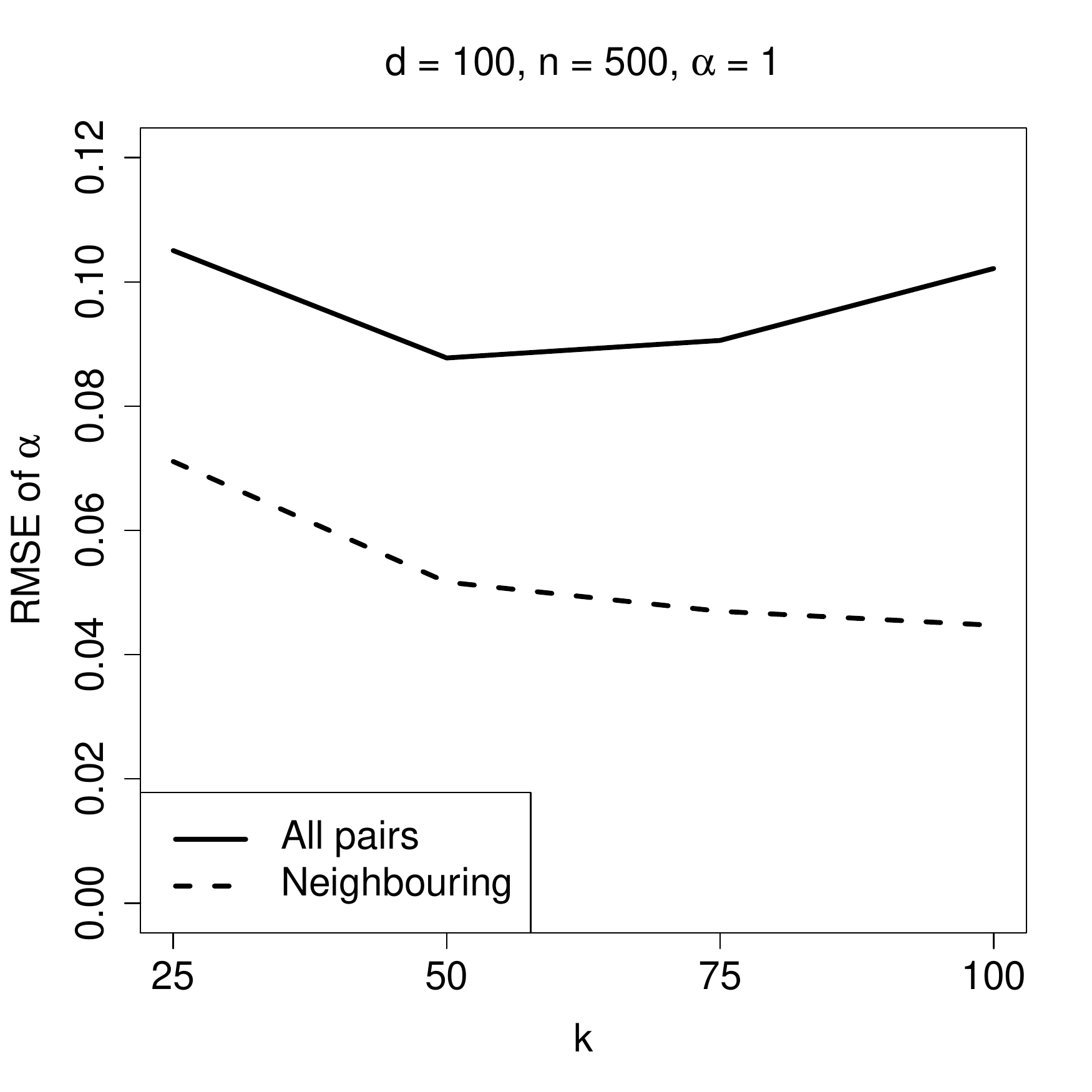}} \\
\subfloat{\includegraphics[width=0.3\textwidth]{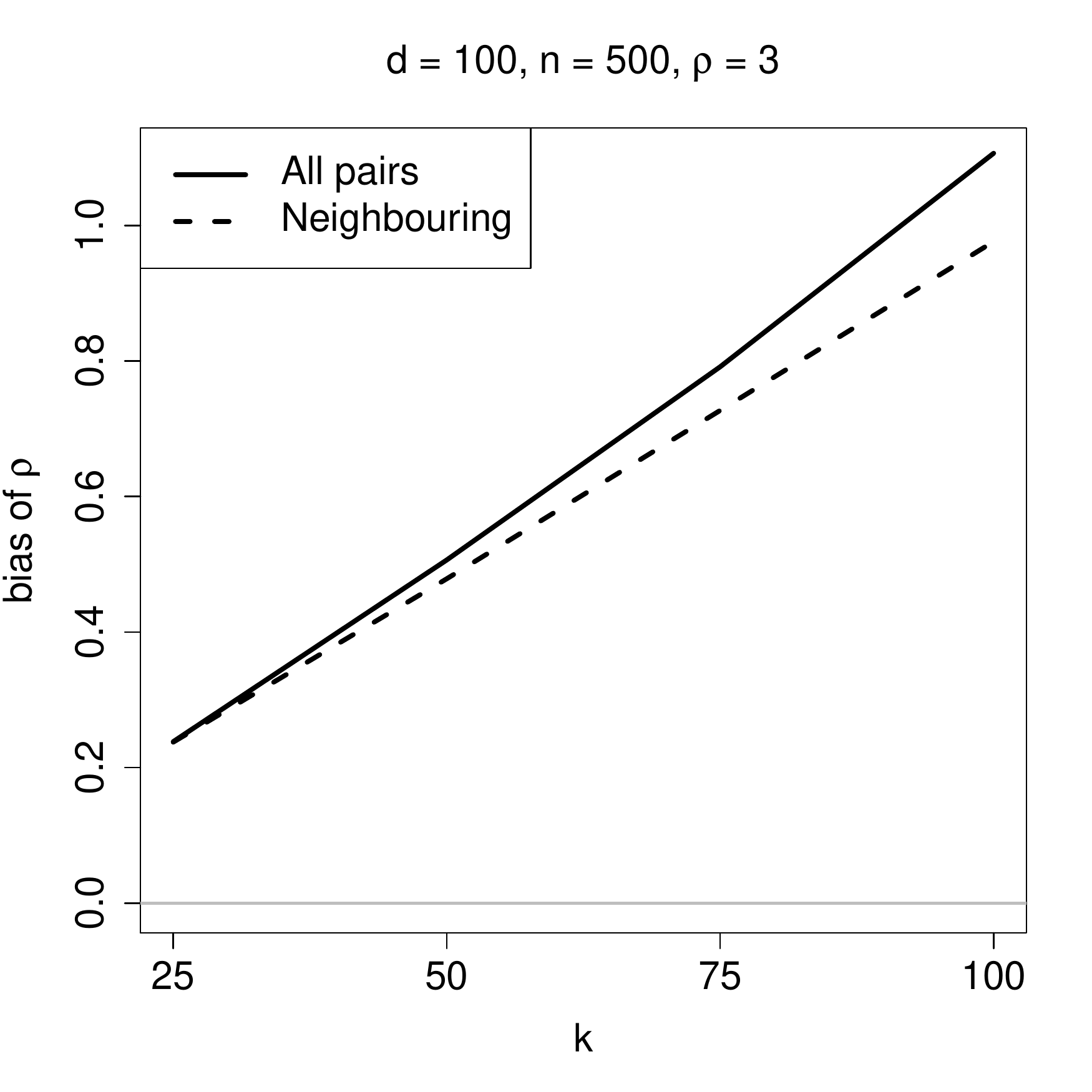}}
\subfloat{\includegraphics[width=0.3\textwidth]{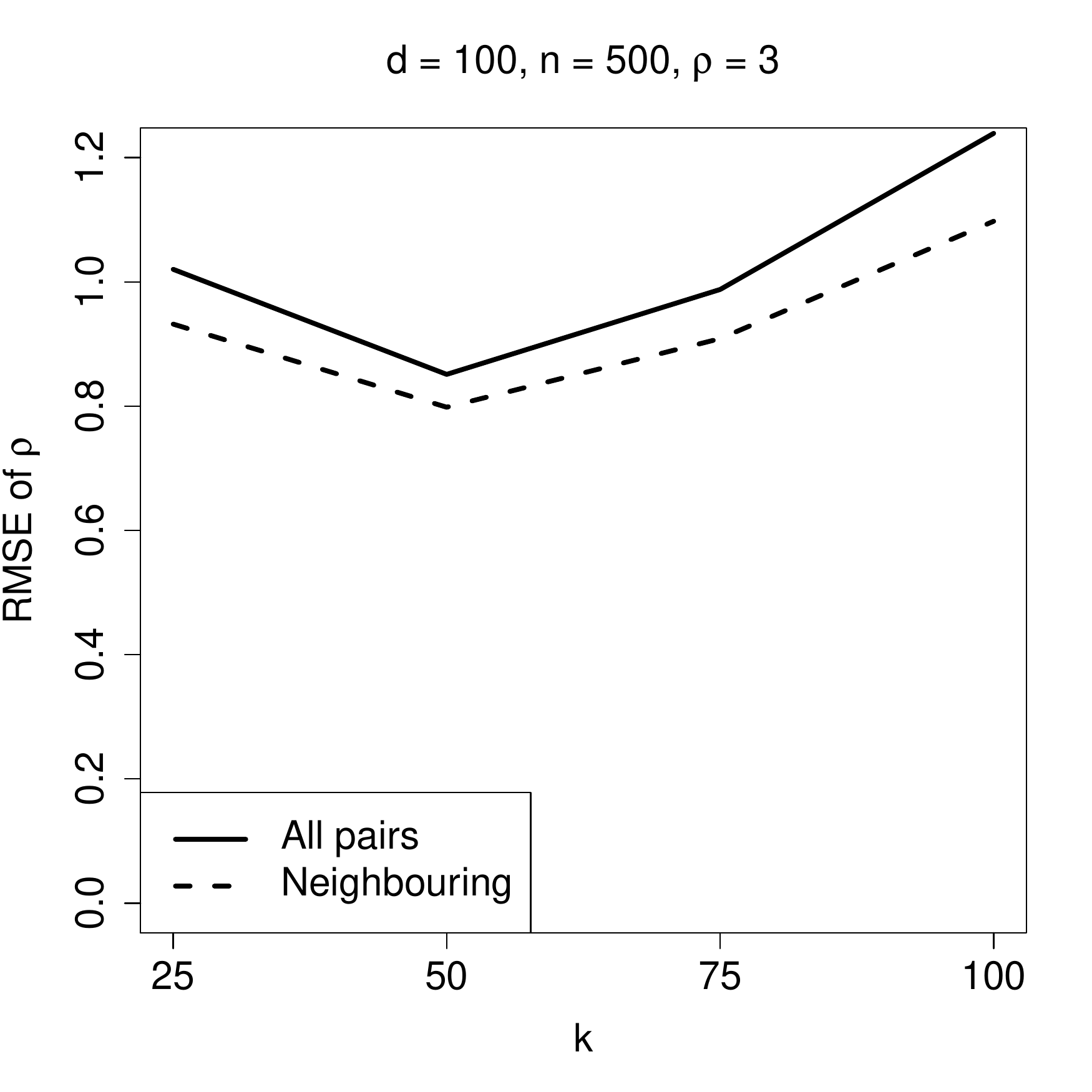}} \\
\subfloat{\includegraphics[width=0.3\textwidth]{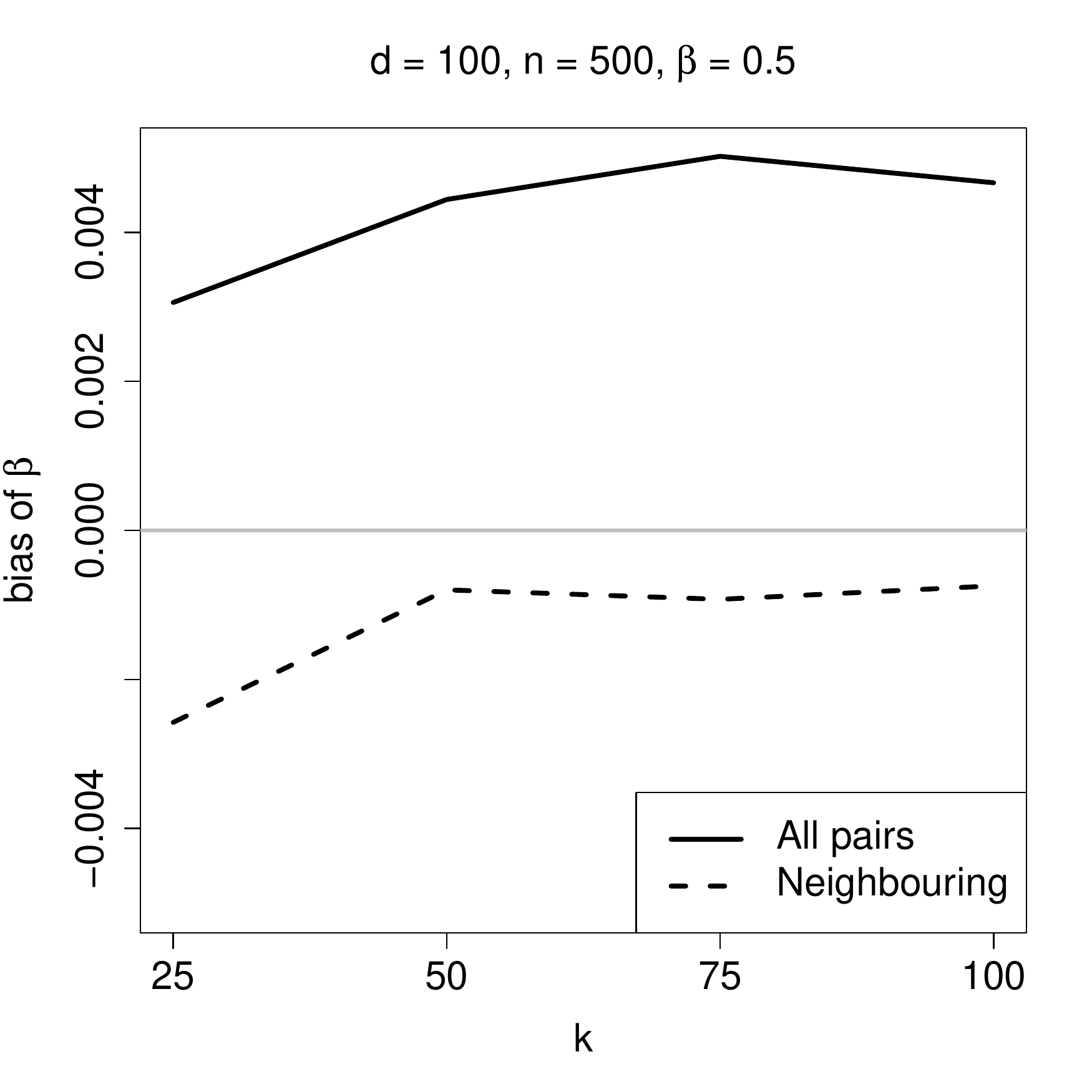}}
\subfloat{\includegraphics[width=0.3\textwidth]{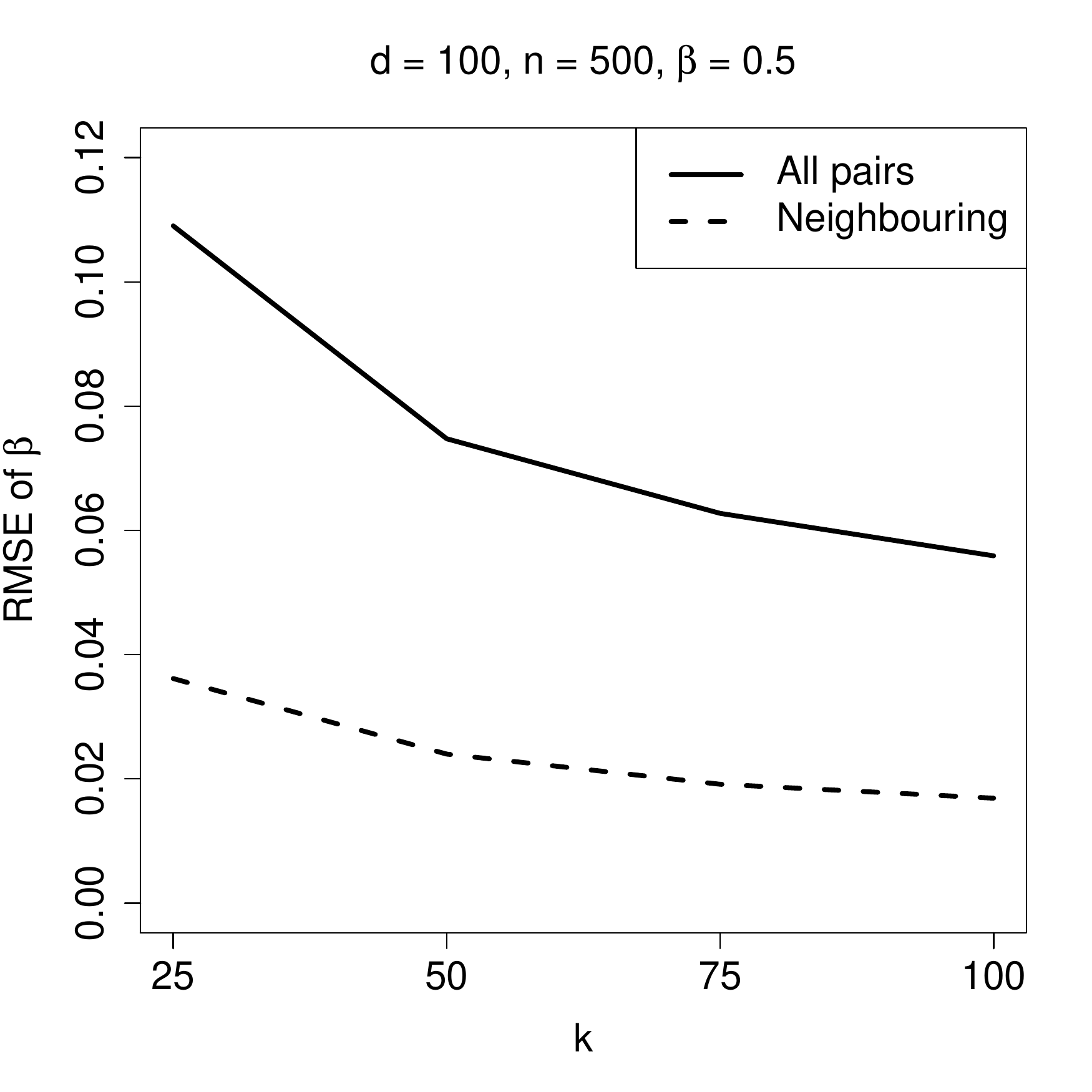}} \\
\subfloat{\includegraphics[width=0.3\textwidth]{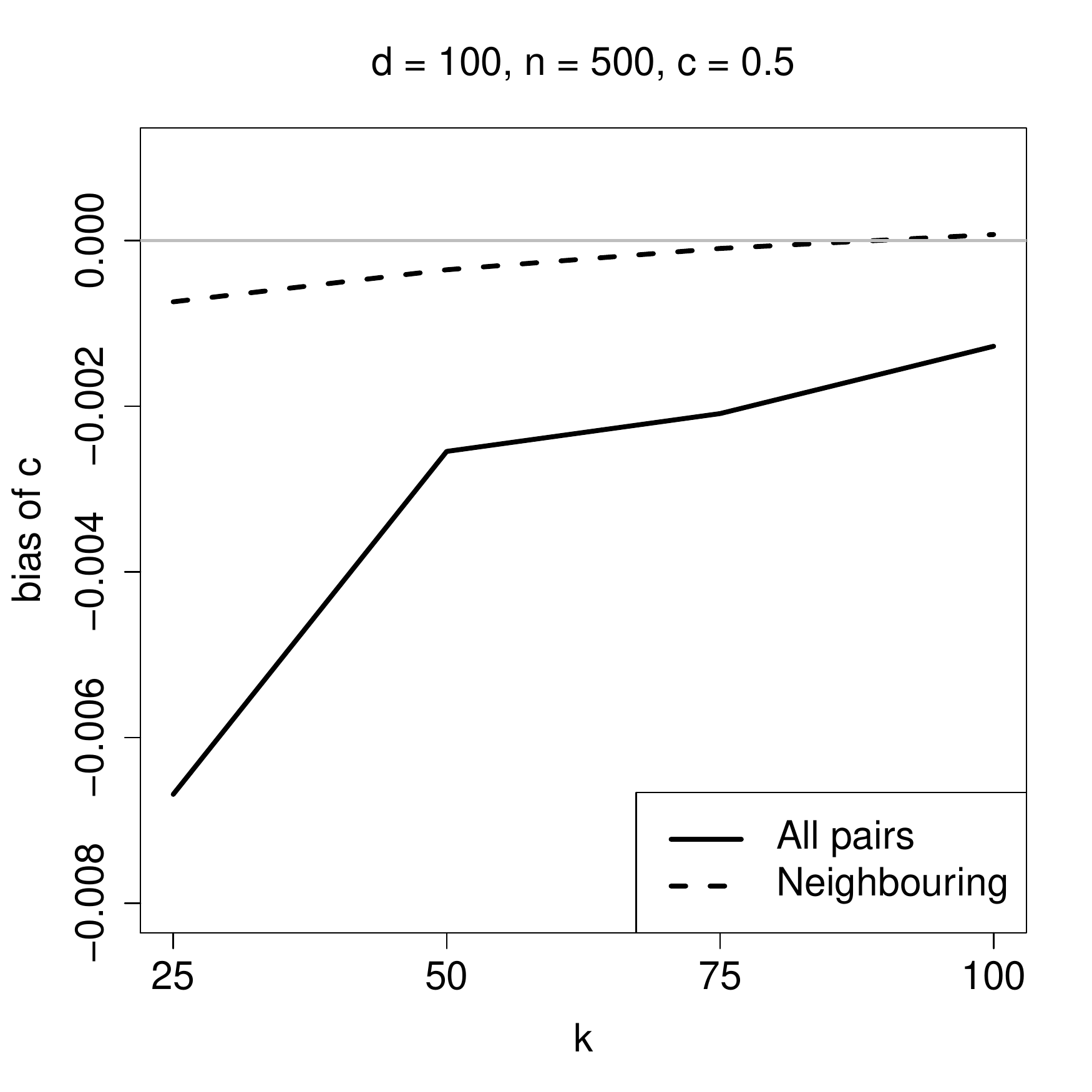}}
\subfloat{\includegraphics[width=0.3\textwidth]{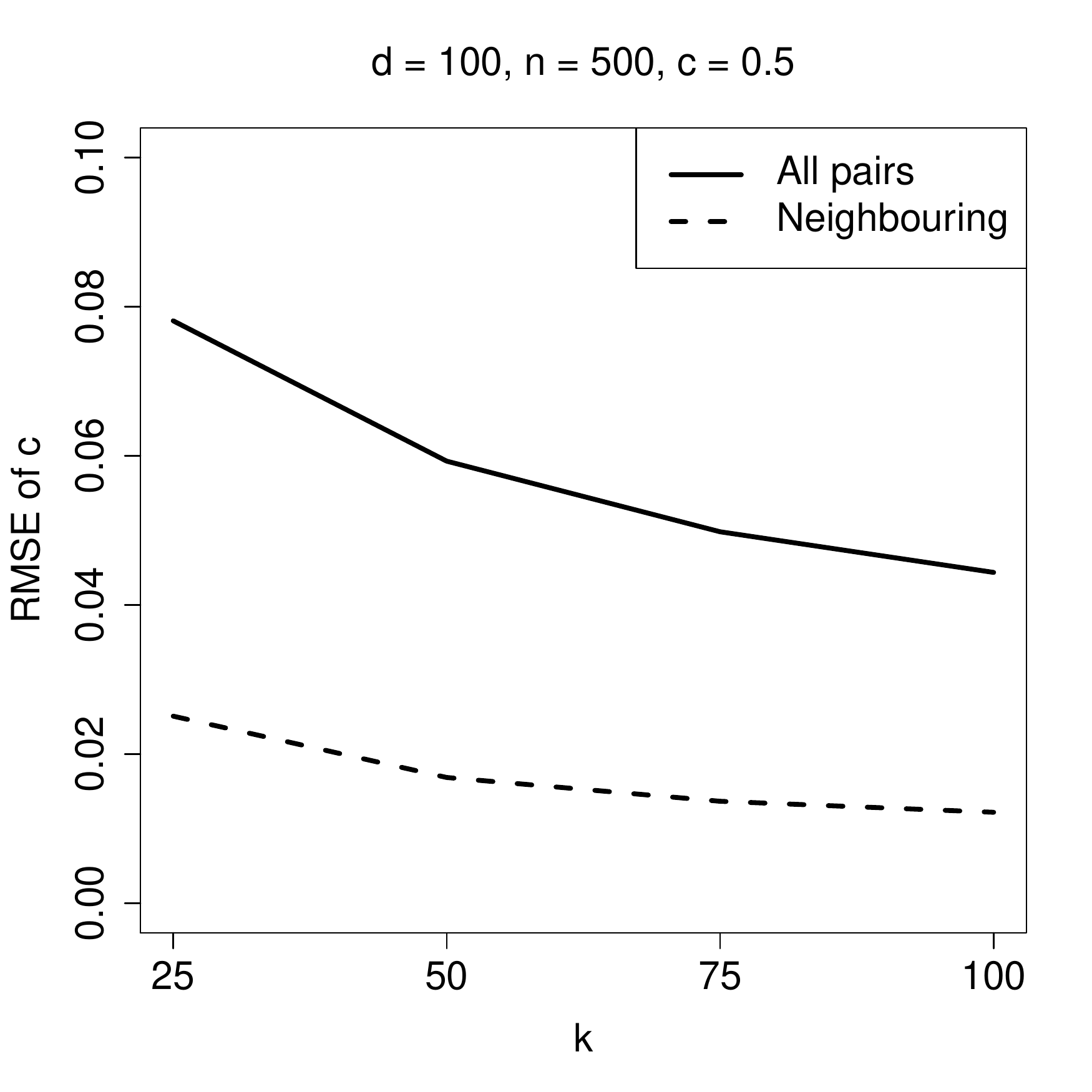}}
\caption{Bias and RMSE for estimators of $\alpha = 1$ (top), $\rho=3$ (top middle), $\beta = 0.5$ (bottom middle) and $c = 0.5$ (bottom) for the perturbed 100-dimensional Brown--Resnick process with identity weight matrix; 500 samples of $n=500$.}
\label{fig:brerr}
\end{figure}

\paragraph{A perturbed Brown--Resnick process on a small grid with optimal weight matrix.\\}
We consider $d=12$ locations on an equally spaced unit distance $4 \times 3$ grid. We simulate 500 samples of size $n=1000$ from an anisotropic Brown--Resnick process with parameters $\alpha = 1.5$, $\rho = 1$, $\beta= 0.25$ and $c = 1.5$. We study the bias, standard deviation, and RMSE for $k \in \{25,75,125\}$. In Figure~\ref{fig:brerr2}, three estimation methods are compared: one involving all pairs ($q = 66$), one involving only pairs of neighbouring locations ($q=29$), and one using optimal weight matrices chosen according to the two-step procedure described after Corollary~3.3, based on the $29$ pairs of neighbouring locations. In line with Corollary~3.3, the weighted estimators have lower (or equal) standard deviation (and RMSE) than the unweighted estimators. The difference is clearest for low $k$ for $\alpha$ and $\rho$.

\begin{figure}[p]
\centering
\subfloat{\includegraphics[width=0.3\textwidth]{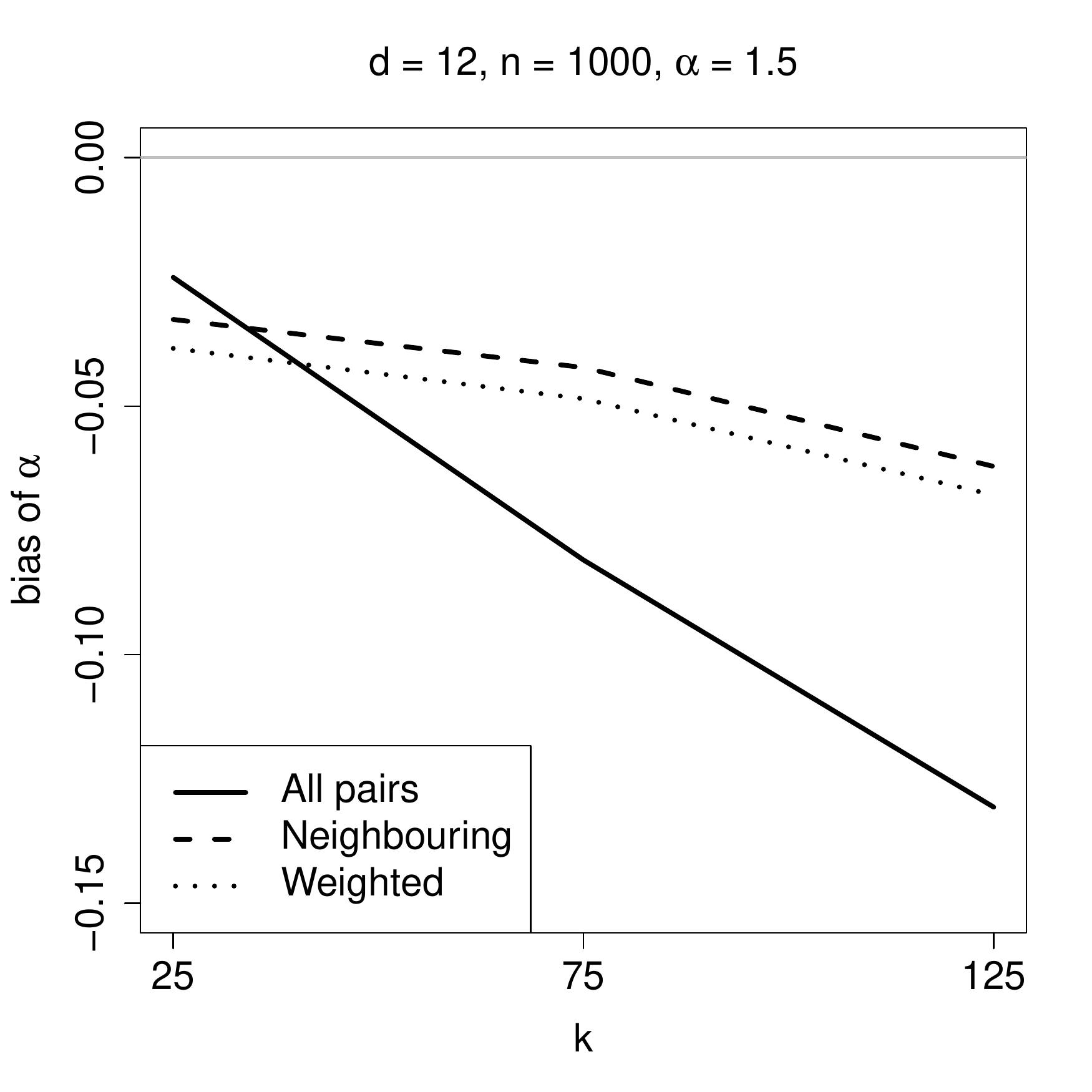}}
\subfloat{\includegraphics[width=0.3\textwidth]{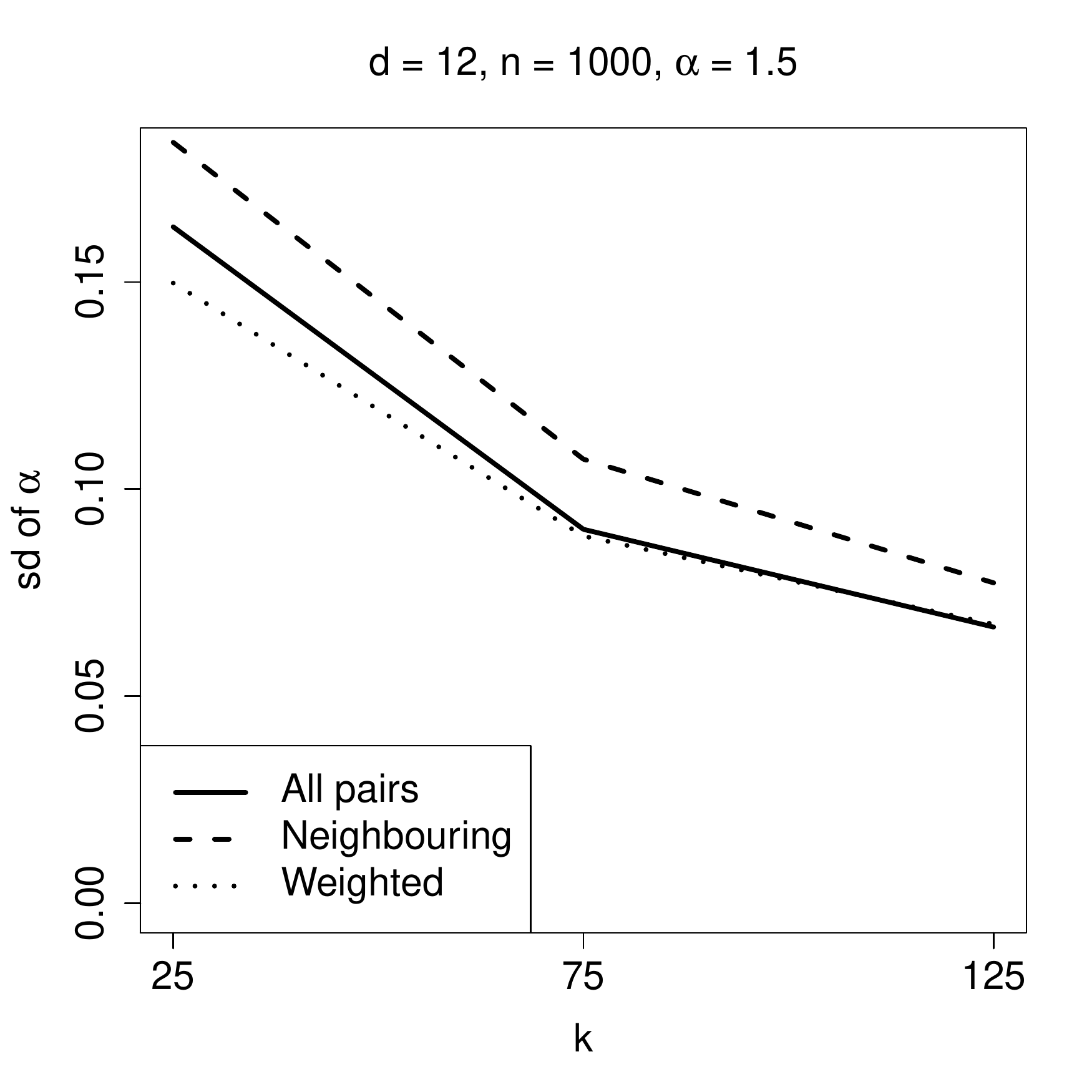}}
\subfloat{\includegraphics[width=0.3\textwidth]{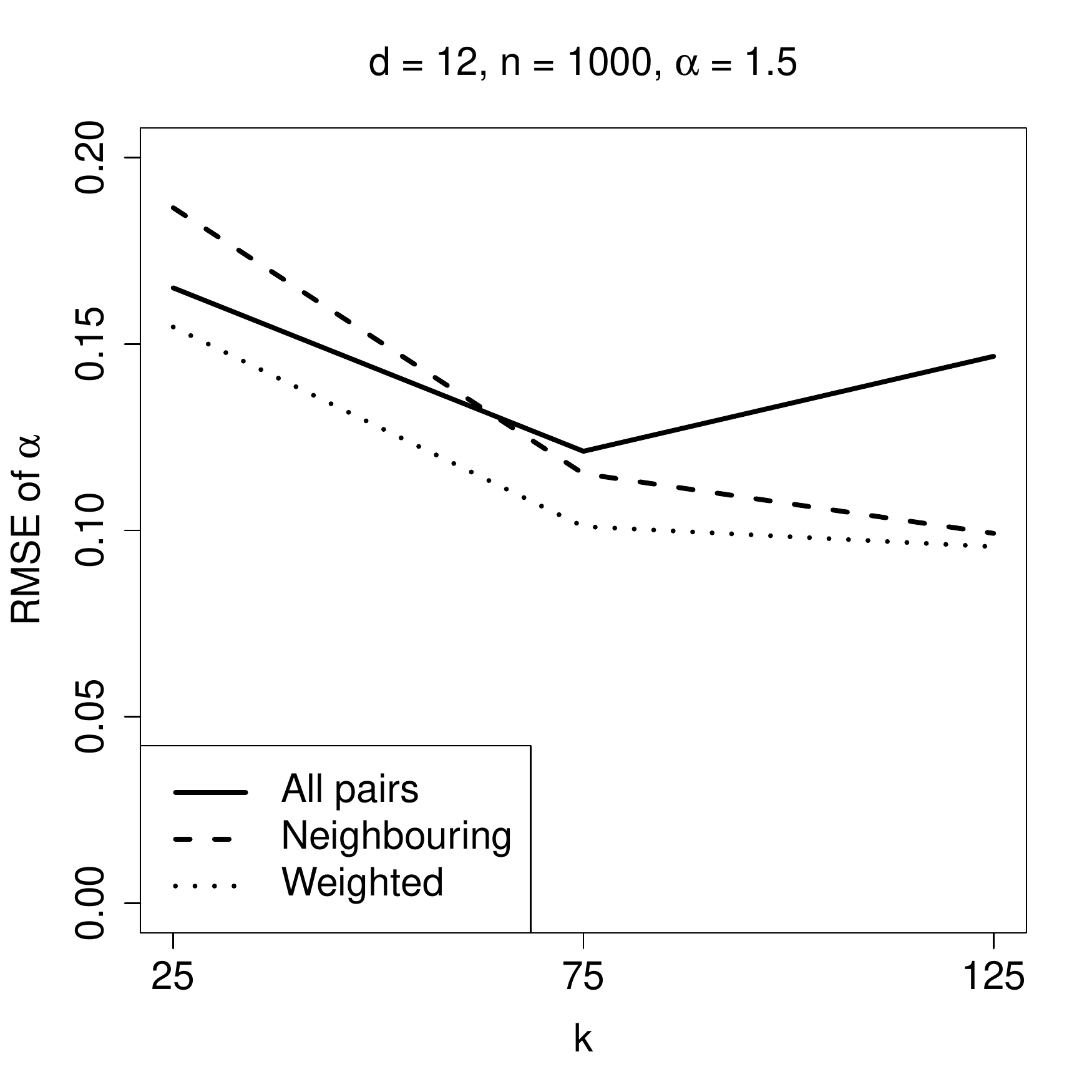}} \\
\subfloat{\includegraphics[width=0.3\textwidth]{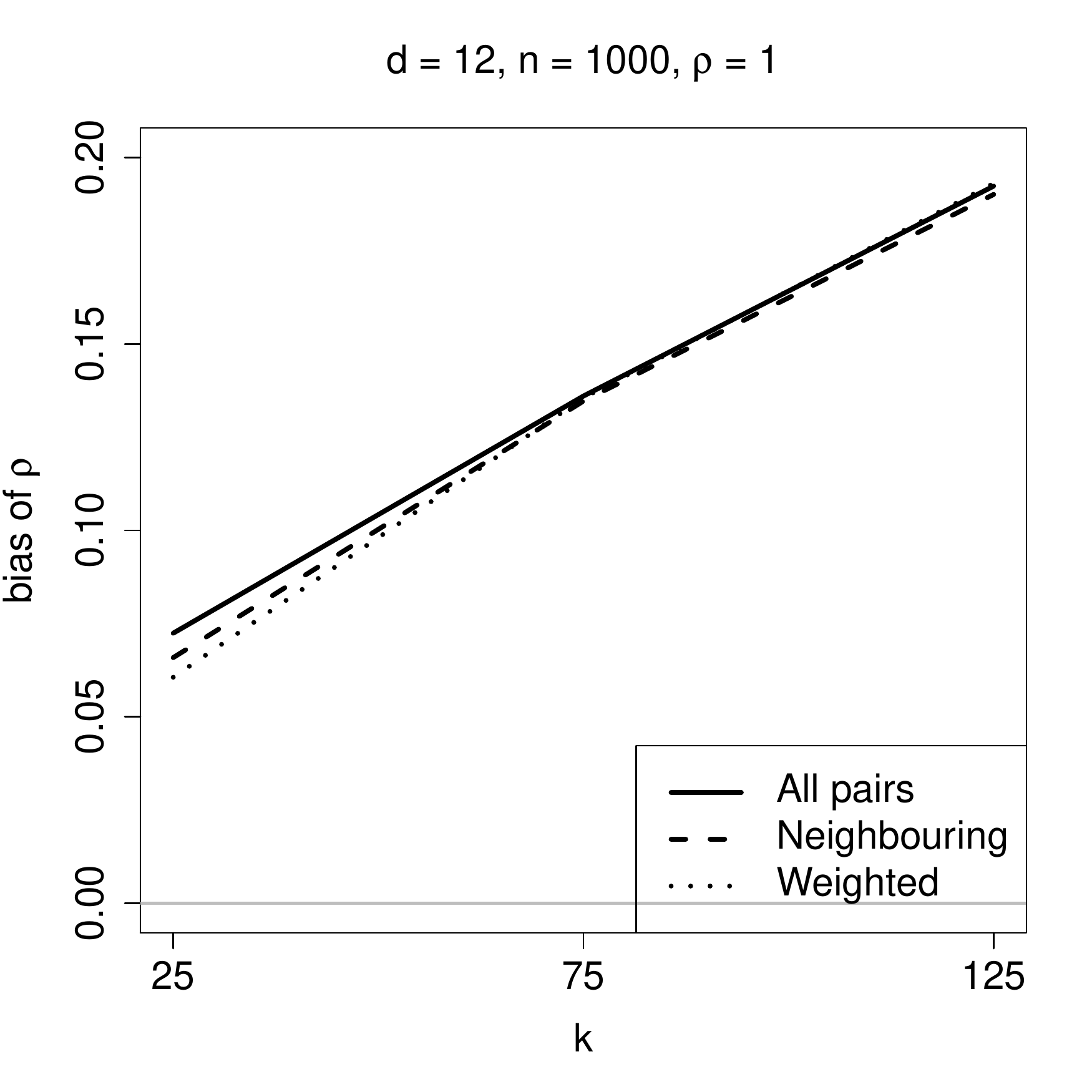}}
\subfloat{\includegraphics[width=0.3\textwidth]{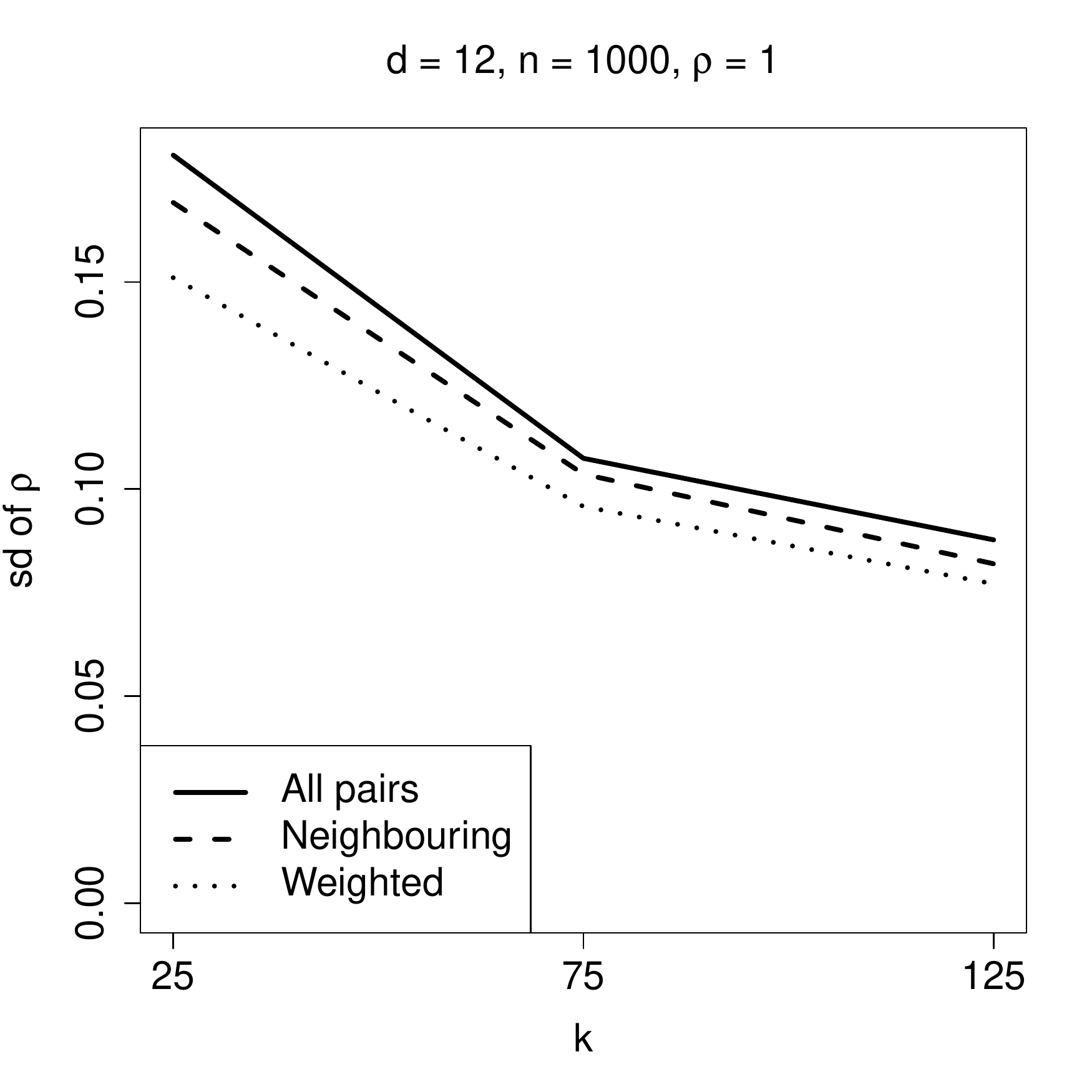}}
\subfloat{\includegraphics[width=0.3\textwidth]{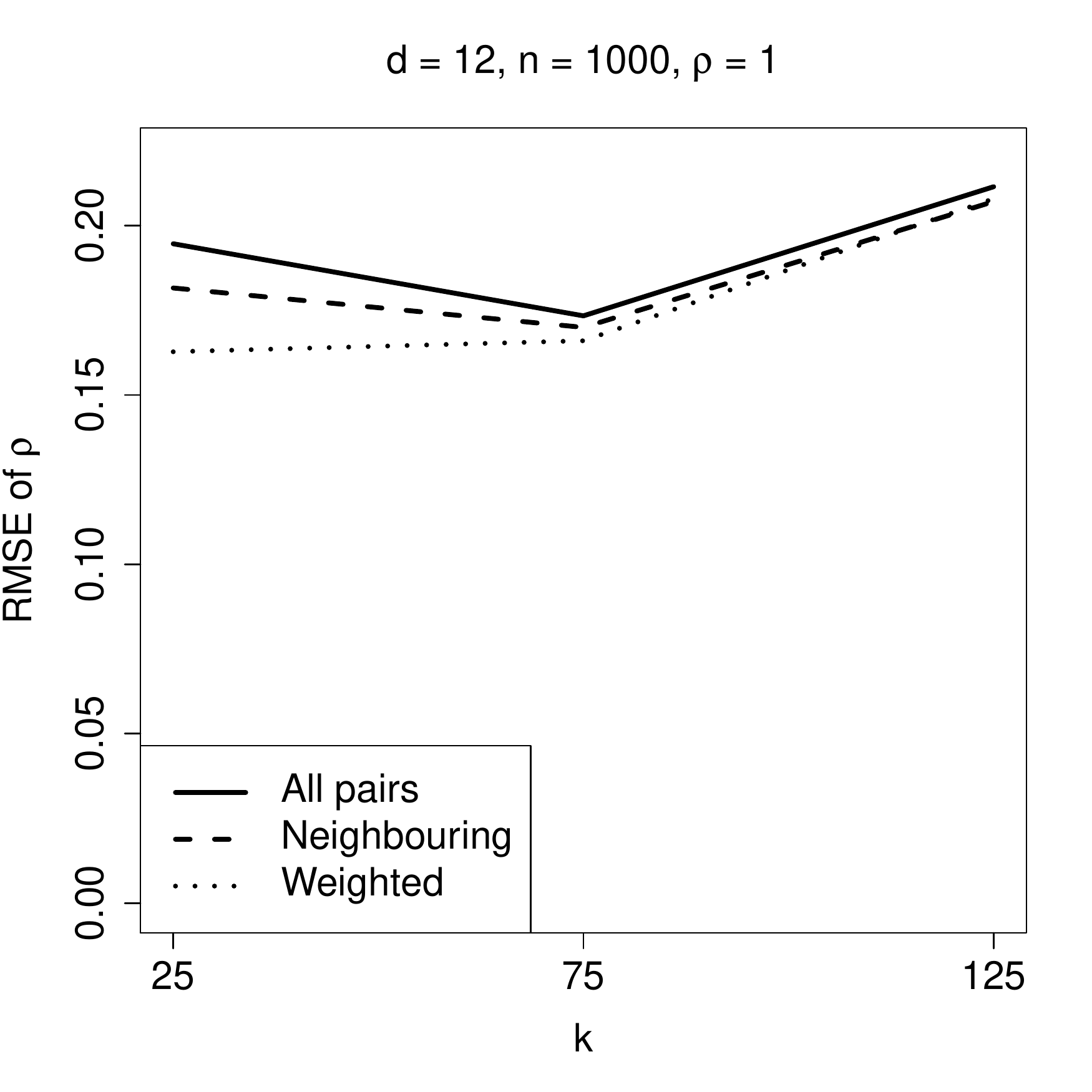}} \\
\subfloat{\includegraphics[width=0.3\textwidth]{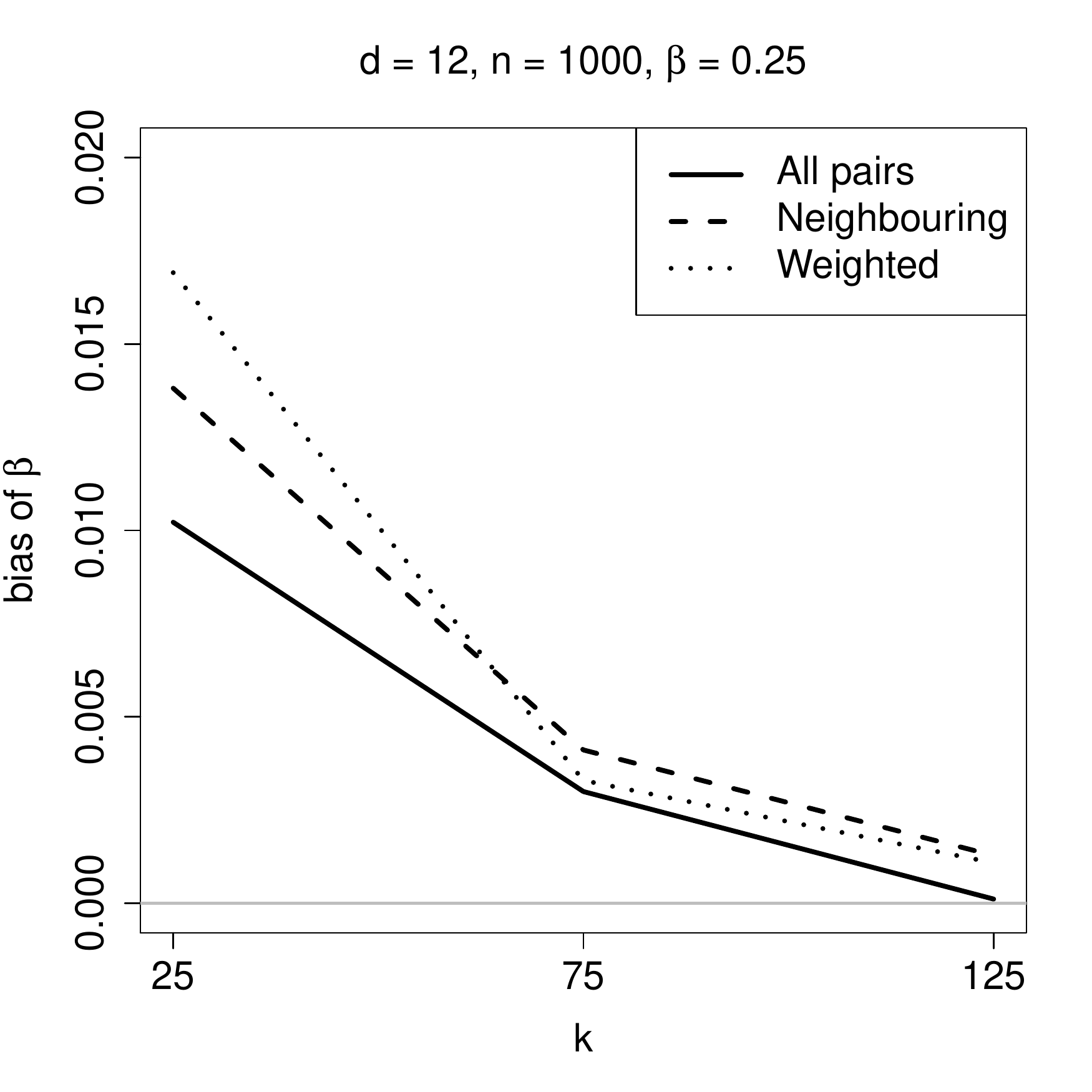}}
\subfloat{\includegraphics[width=0.3\textwidth]{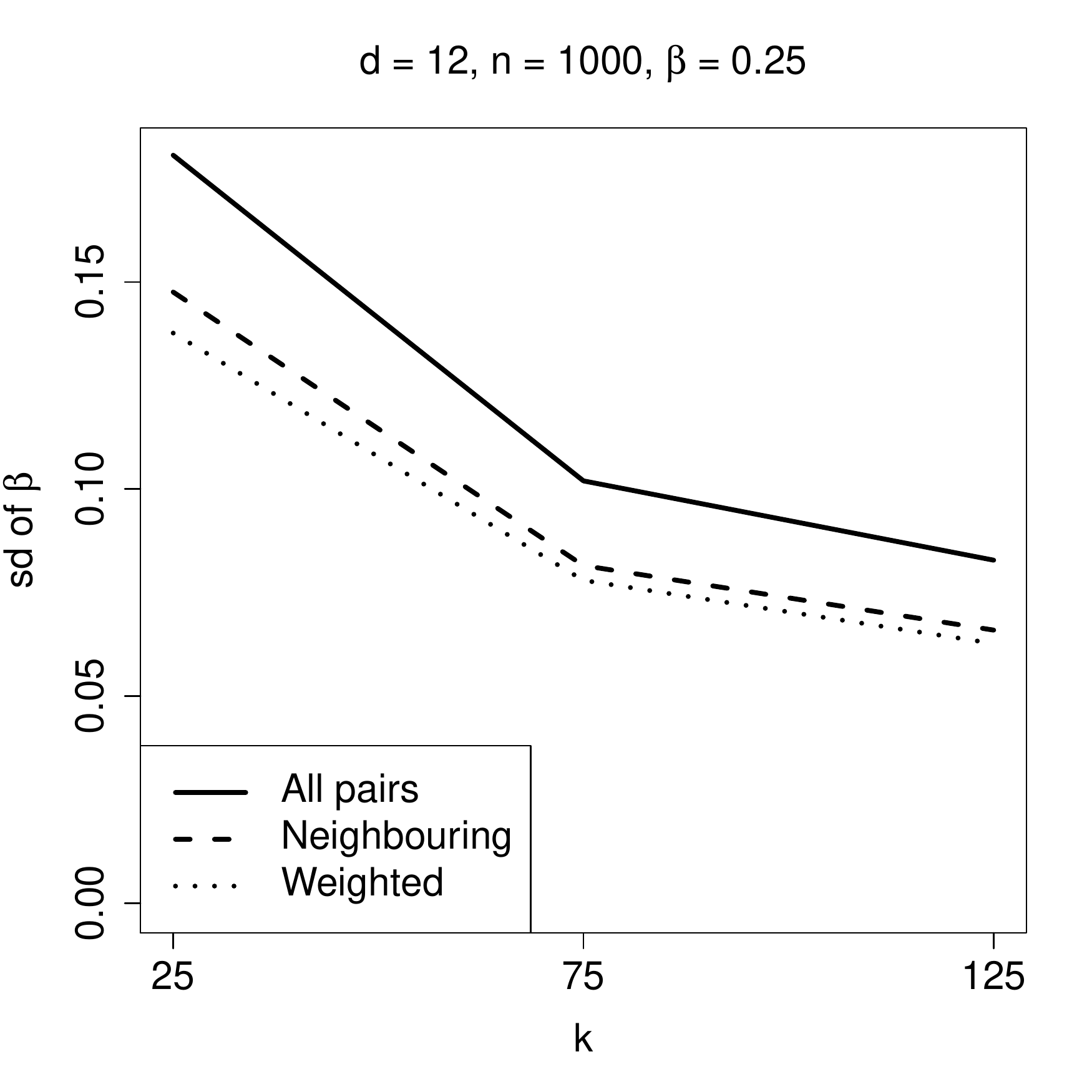}}
\subfloat{\includegraphics[width=0.3\textwidth]{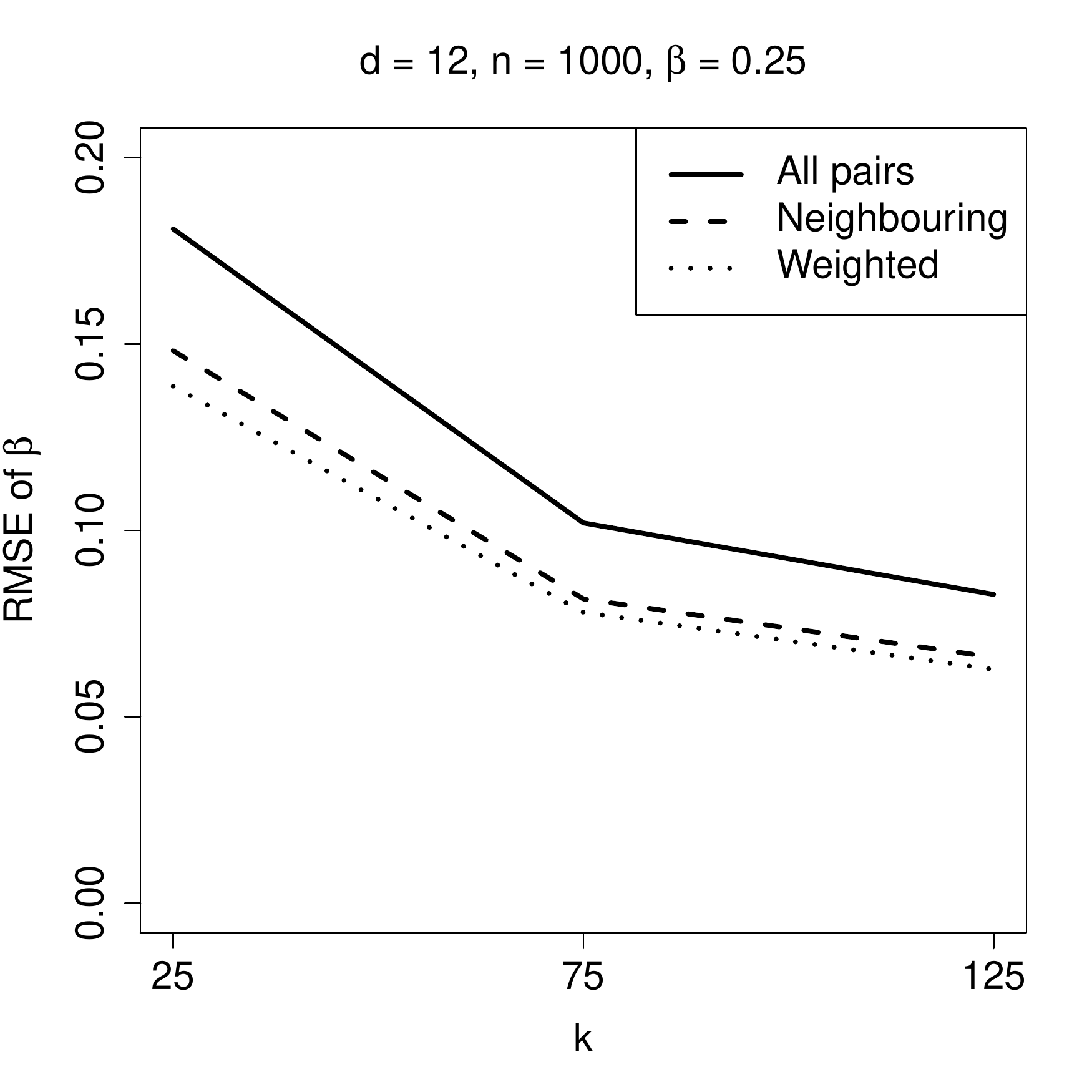}} \\
\subfloat{\includegraphics[width=0.3\textwidth]{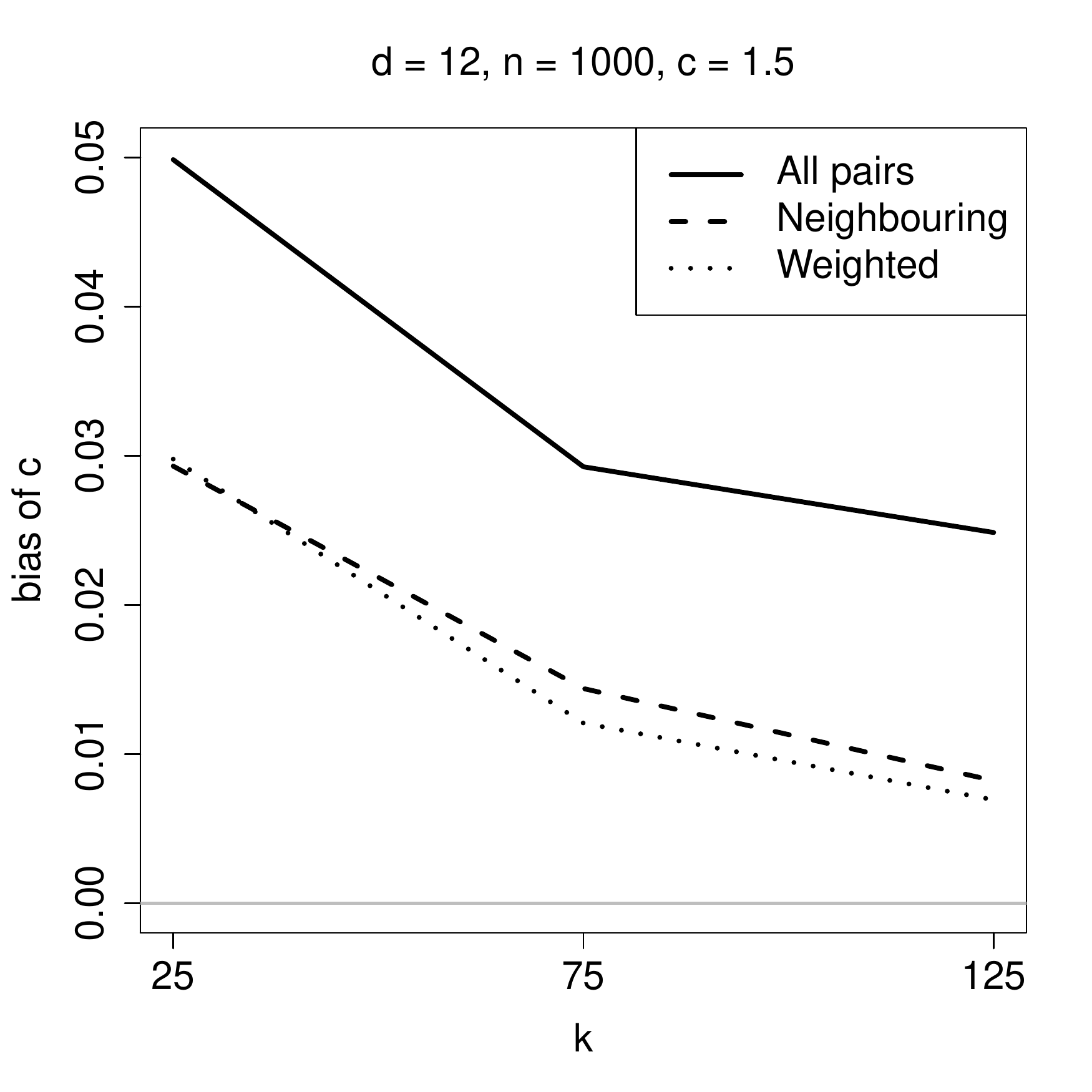}}
\subfloat{\includegraphics[width=0.3\textwidth]{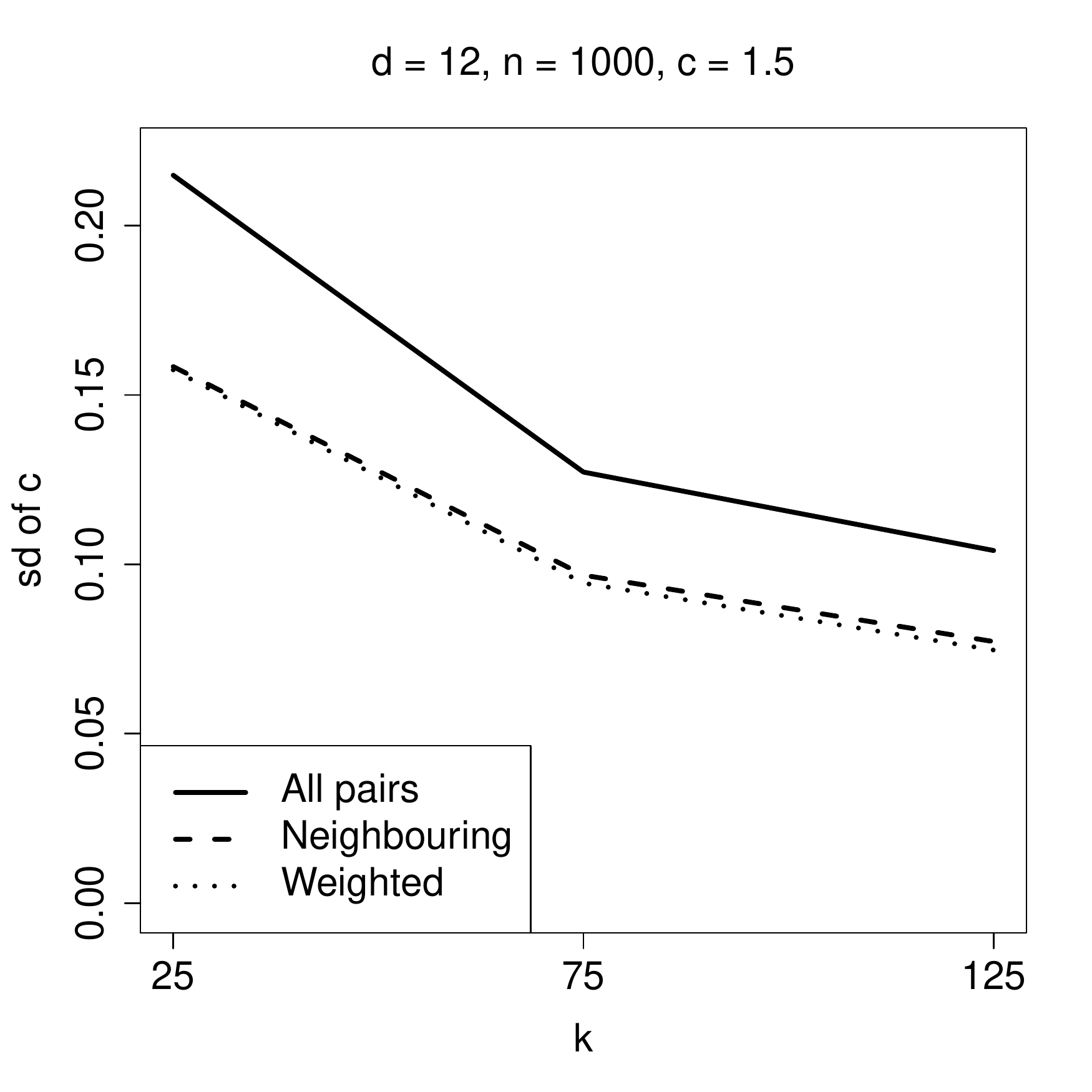}}
\subfloat{\includegraphics[width=0.3\textwidth]{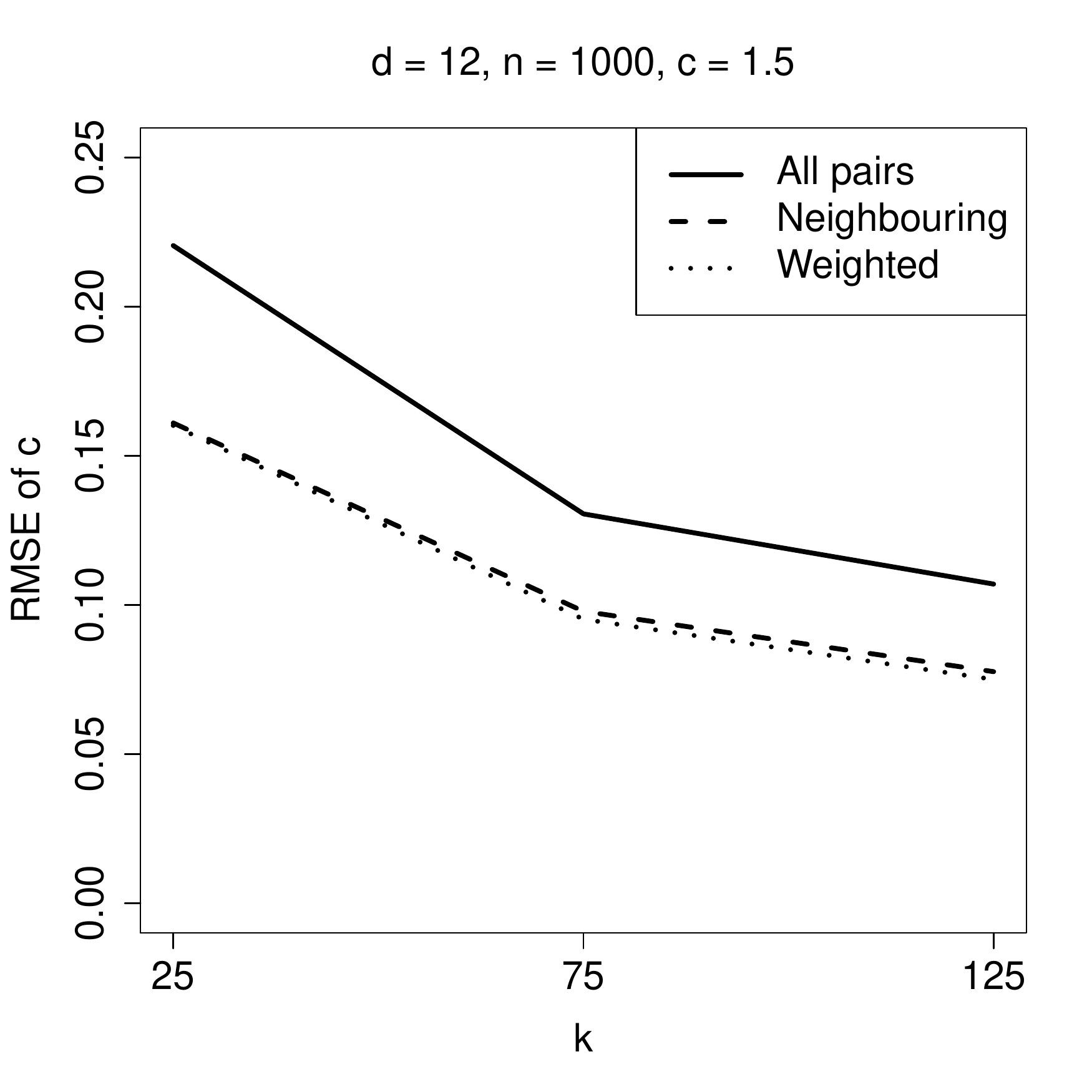}}
\caption{Bias, standard deviation and RMSE for estimators of $\alpha = 1.5$ (top), $\rho=1$ (top middle), $\beta = 0.25$ (bottom middle) and $c = 1.5$ (bottom) for the perturbed 12-dimensional Brown--Resnick process; 500 samples of $n=1000$.}
\label{fig:brerr2}
\end{figure}

\paragraph{Comparison with \citet{engelke2014}.\\}
To compare the pairwise M-estimator with the one from \citet{engelke2014}, we consider the setting used in the simulation study of the latter paper: we simulate 500 samples of size $8000$ of the \textit{univariate} Brown--Resnick process on an equidistant grid on the interval $[0,3]$  with step size $0.1$. The parameters of the model are $(\alpha, \rho)=(1,1)$. We estimate the unknown parameters for $k=500$ and 140 pairs, so that the locations of the selected pairs are at most a distance $0.5$ apart. We use the identity weight matrix, since in this particular setting the weight matrix is very large and, as far as we could tell from some preliminary experiments, it leads to only a small reduction in estimation error. Asymptotically we see a reduction of the standard deviations
of about $ 13 \%$ for $\alpha$ and $3 \%$ for $\rho$.
In Figure~\ref{fig:boxplots} below, the results are presented in the form of boxplots, to facilitate comparison with Figure~$2$ in \citet{engelke2014}.
Our procedure turns out to perform equally well for the estimation of $\alpha$ and only slightly worse when estimating $\rho$.
It is to be kept in mind that, whereas the method in \citet{engelke2014} is tailor-made for the Brown--Resnick process, our method is designed to work for general parametric models.

\begin{figure}[ht]
\centering
\subfloat{\includegraphics[width=0.3\textwidth]{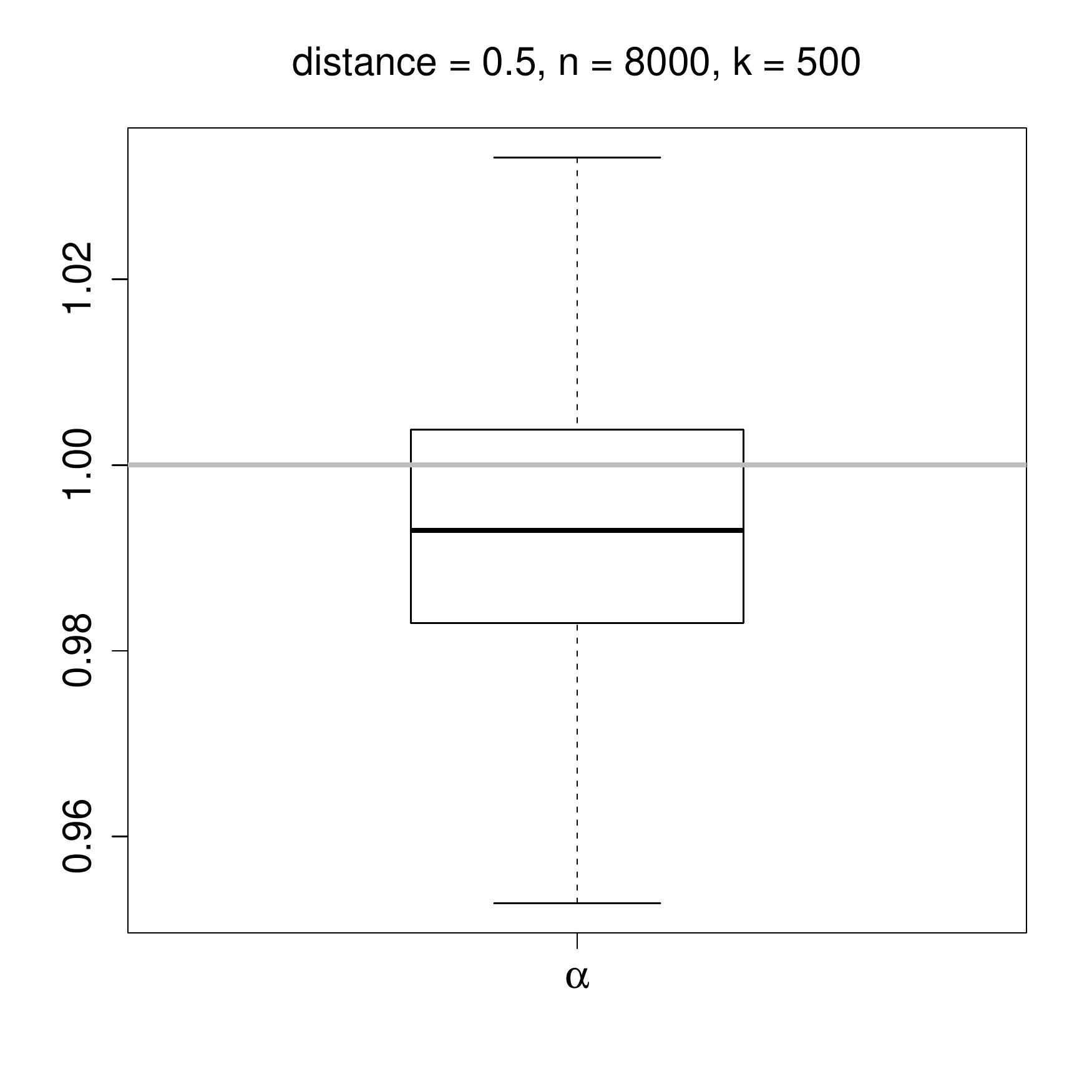}}
\subfloat{\includegraphics[width=0.3\textwidth]{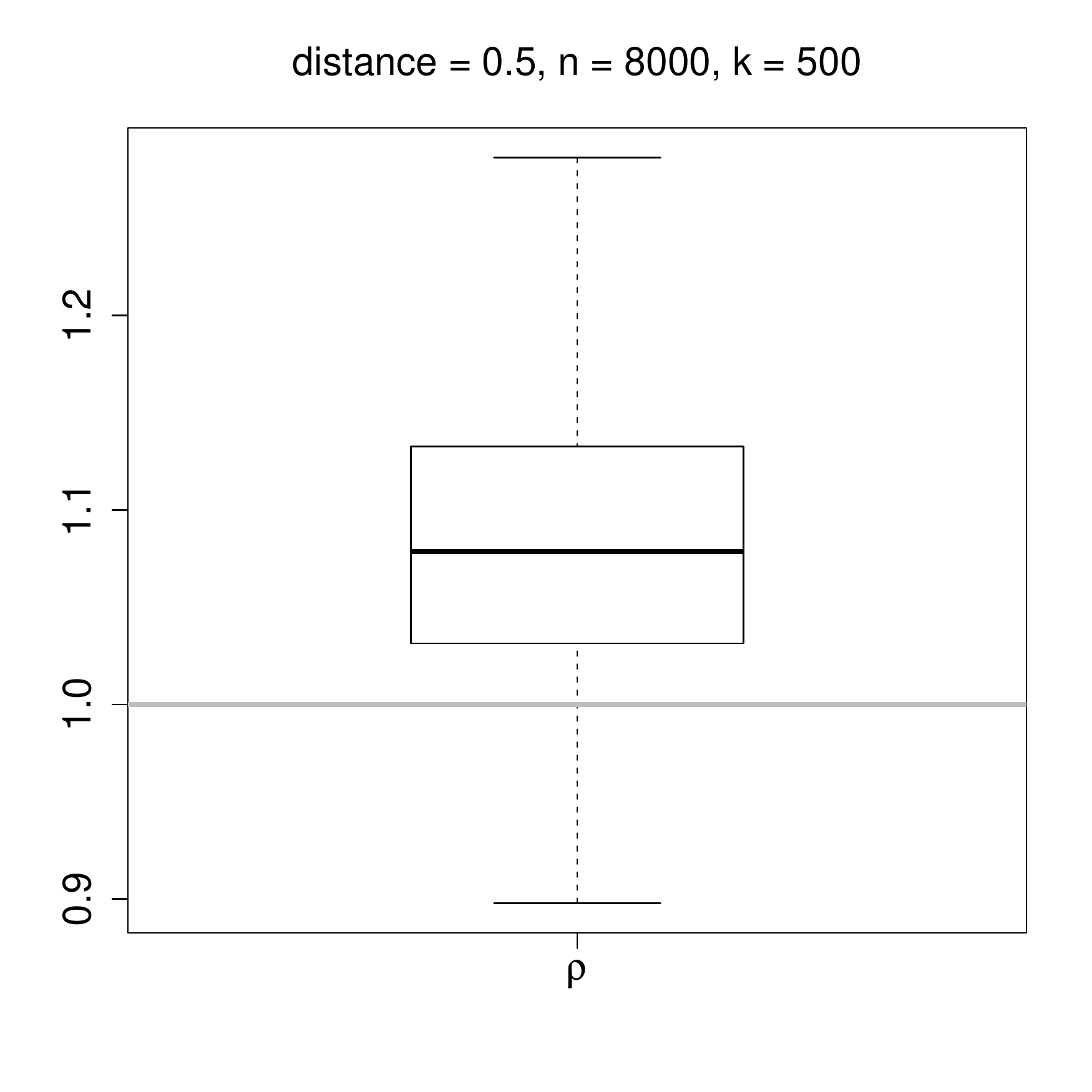}}
\caption{Boxplots of estimators of $\alpha = 1$ (left) and $\rho=1$ (right) for a univariate Brown--Resnick process on the interval $[0,3]$ with $d=31$; 500 replications of $n = 8000$, $k=500$.}
\label{fig:boxplots}
\end{figure}

\paragraph{Discussion.\\}
We have seen in the 100-dimensional simulation study that the computation of the unweighted pairwise estimator is fast even for a large number of pairs. However, calculating an entry for the optimal weight matrix takes about 15 seconds on a standard computer. Since we have to calculate $q(q+1)/2$ entries of the weight matrix, this method gets more time-consuming when the number of pairs $q$ is large.

We also noticed that for large dimensions, a relatively small sample size of $n=500$ is sufficient to obtain good results.
However, the smaller the dimension, the larger the sample size needs to be, i.e., a decrease of information in space must be compensated by an increase of information in ``time''.
We have observed that the choice of the starting value hardly affects the outcome of the optimisation procedure, unless the dimension is less than five.
More guidelines and rules-of-thumb for practical use of the estimator can be found in the reference manual of the \textsf{spatialTailDep} package.

Another interesting feature is that, for both the Smith model and the Brown--Resnick process, considering only neighbouring pairs leads to better results than considering all pairs. As the distance between two locations increases, they become tail independent, so that including pairs of distant locations adds little information about the model parameters.

Finally, to assess the quality of the normal approximation to the sampling distribution of the estimator, we have conducted simulation experiments for the Smith model. For sample sizes $n=5000$ and $n= 10\,000$, multivariate normality was not rejected for any of the values of $k$ we considered.

\section{Efficiency comparisons}\label{sec:EKS12comp}

\subsection{Finite-sample comparisons}

A natural question that arises is whether the quality of estimation decreases when making
the step from the $d$-dimensional estimator $\widehat{\theta}'_n$ in \eqref{eq:theta}
to the pairwise estimator $\widehat{\theta}_n$ in \eqref{eq:thetafinal}. We
will demonstrate for the multivariate logistic model and the Smith model that this is not the case, necessarily in a dimension where $\widehat{\theta}'_n$ can be computed. The $d$-dimensional logistic model has stable tail dependence function
\begin{equation}\label{eq:logistic}
\ell(\vc{x}; \theta) = \left( x_1^{1/\theta} + \cdots + x_d^{1/\theta} \right)^\theta, \qquad \theta \in [0,1].
\end{equation}
We simulate 200 samples of size $n=1500$ from the logistic model in dimension $d=5$ with parameter value $\theta_0=0.5$ and we assess the quality of our estimates via the bias and root mean squared error (RMSE) for $k \in \{40,80,\ldots,320\}$. The top panels of Figure~\ref{fig:smith} show the bias and RMSE for the M-estimator of \citet{einmahl2012} with the function $g \equiv 1$ (dashed lines). 
\begin{figure}[ht]
\centering
\subfloat{\includegraphics[width=0.3\textwidth]{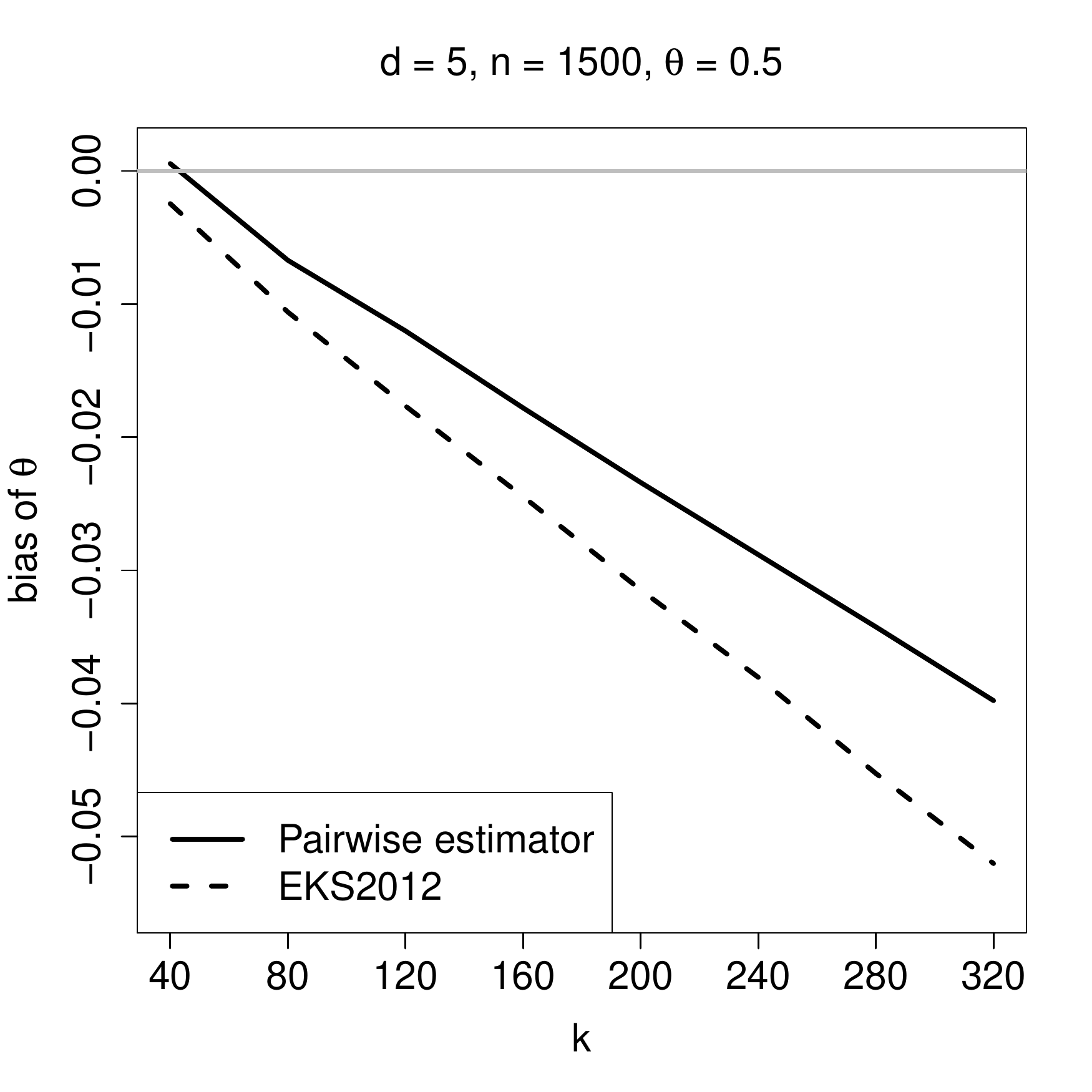}}
\subfloat{\includegraphics[width=0.3\textwidth]{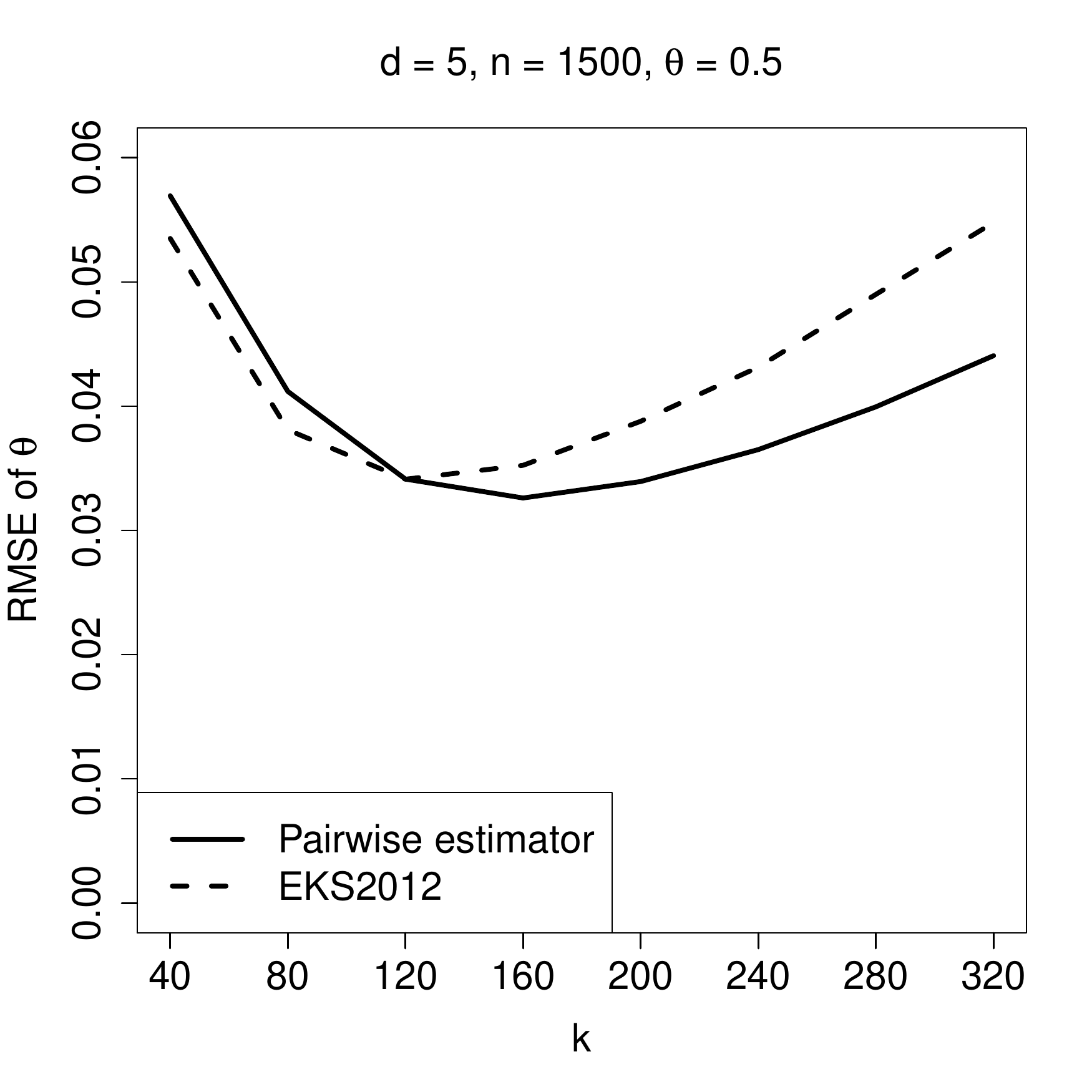}} \\
\subfloat{\includegraphics[width=0.3\textwidth]{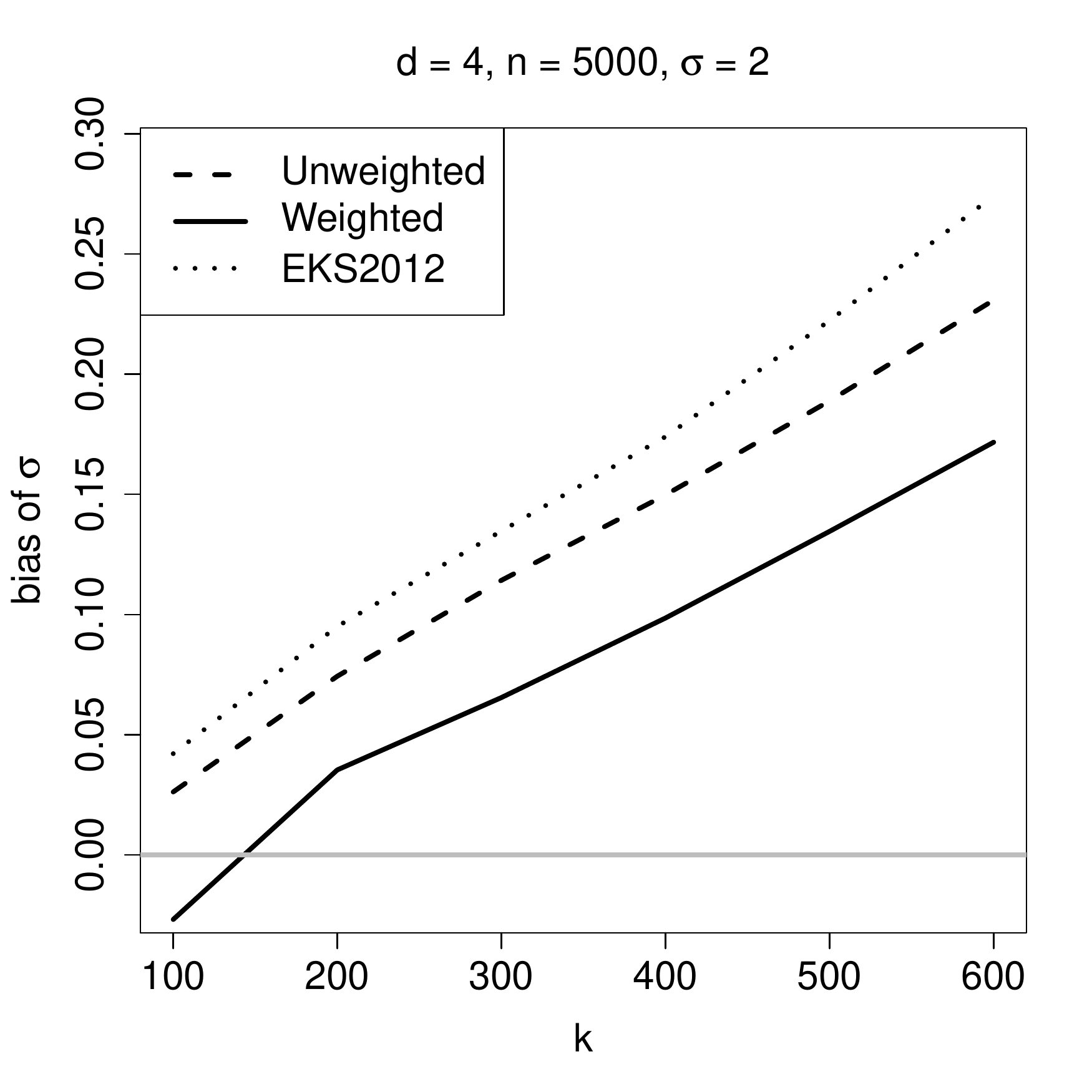}}
\subfloat{\includegraphics[width=0.3\textwidth]{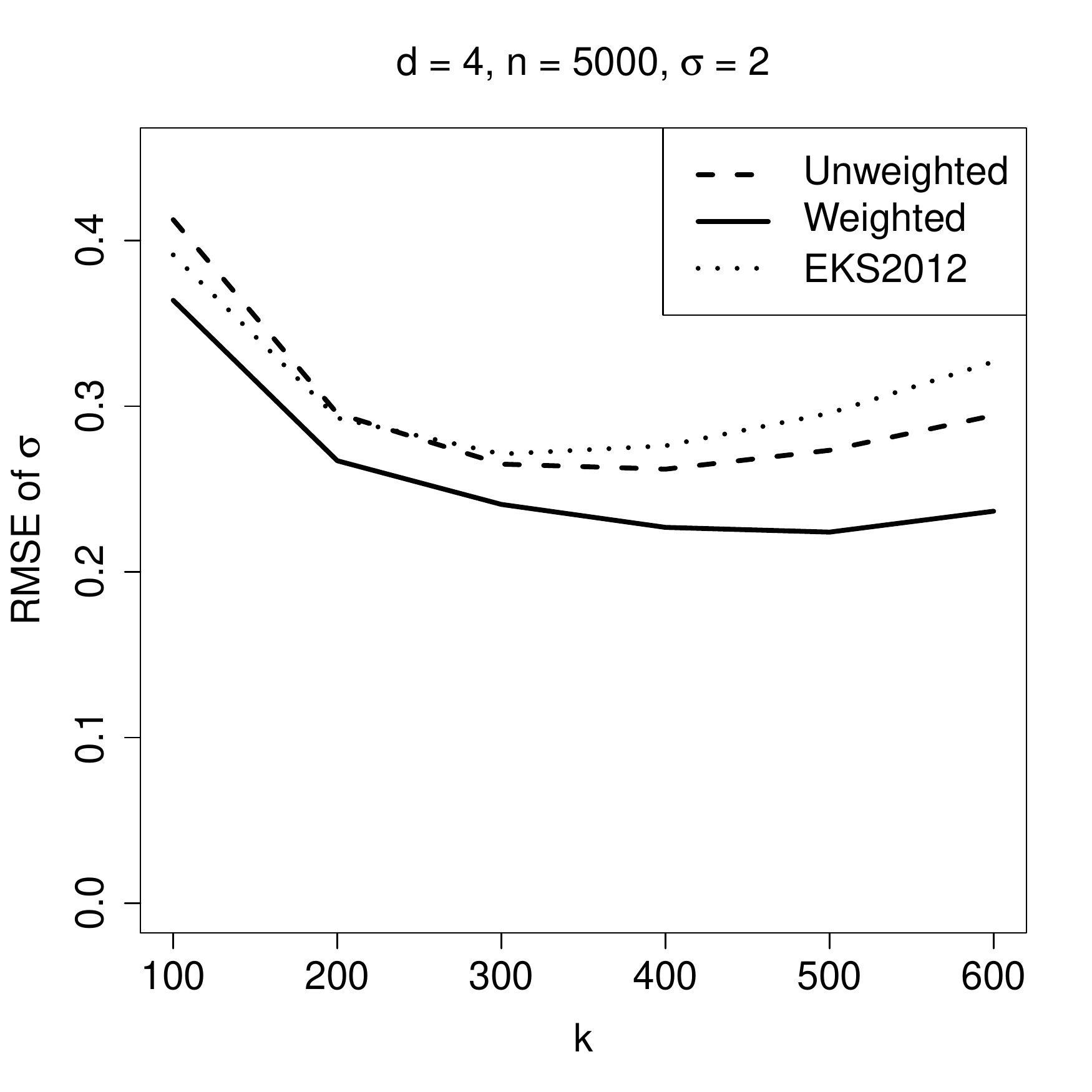}}
\caption{Top: bias and RMSE for estimators of $\theta_0 = 0.5$ for the logistic model; the pairwise estimator and the $5$-dimensional M-estimator from \citet{einmahl2012}; 200 samples of $n=1500$. Bottom: bias and RMSE for estimators of $\sigma = 2$ for the Smith model; the pairwise estimator with identity weight matrix (unweighted), the pairwise estimator with optimal weight matrix (weighted) and the $4$-dimensional M-estimator from \citet{einmahl2012}; 200 samples of $n=5000$.}
\label{fig:smith}
\end{figure}
The results are the same as in \citet[Figure 1]{einmahl2012}. The solid lines show the bias and RMSE for the pairwise M-estimator with $q=10$ and $\widehat{\Omega}_n = I_q$. We see that the pairwise estimator performs somewhat better in terms of bias and also has the lower minimal RMSE, for $k=160$. Note that we only show results for the pairwise estimator with identity weight matrix since using the optimal weight matrix has no effect on the estimator.

Next, consider the Smith model with $d=4$ locations on an equally spaced unit distance $2 \times 2$ grid. We simulate 200 samples of size $n=5000$ from an isotropic Smith model with parameter value $\theta_0 = \sigma = 2$, i.e., $\Sigma = \sigma I_2$.
The bottom panels of Figure~\ref{fig:smith} show the bias and RMSE for $k \in \{100,\ldots,600\}$ for the four-dimensional M-estimator with $q=5$ weight functions, given by $g_m (\vc{x}) = x_m$ for $m=1,\ldots,4$ and $g_m \equiv 1$ for $m=5$,
the pairwise M-estimator with identity weight matrix, and the pairwise M-estimator with optimal weight matrix. We see clearly that the pairwise weighted method is the best one in terms of both bias and RMSE.

\subsection{Asymptotic variances}

Another question is whether the asymptotic variance increases when switching to the pairwise estimator. First, we consider the Smith model on the line with $d$ equidistant locations, i.e.,
\begin{equation*}
a^2_{uv} = \frac{(s_u - s_v)^2}{\sigma}, \qquad s_u, s_v \in \{1,\ldots,d\}.
\end{equation*} 
The left and middle panels of Figure~\ref{fig:smith2} show values for the asymptotic variances of a number of estimators when $\sigma \in \{0.5,1,1.5,2\}$ and $d \in \{4,6\}$. For the $d$-dimensional estimator $\widehat{\theta}'_n$, we used $g \equiv 1$ as before, and thus $q = 1$; the formula for the asymptotic variance is given in (4.6) in \citet{einmahl2012}. For the pairwise estimator, we computed the asymptotic variance in \eqref{eq:asym} in two cases: first, neighbouring pairs only and identity weight matrix, and second, all pairs and the optimal weight matrix. Throughout, both pairwise estimators have a slightly lower asymptotic variance than the $d$-dimensional estimator.

When the dimension, $d$, is large, say 100, the method from \citet{einmahl2012} involves intractable, high-dimensional integrals. For the sake of comparison, we construct a computationally tractable variant of the logistic model that mimics the property of the Smith model that tail dependence vanishes as the distance between locations increases. 

Consider $d$ locations in $r$ ``regions", every region containing $d/r$ locations. Within all regions, assume a logistic stable tail dependence function as in \eqref{eq:logistic}, with a common value of $\theta_0 \in [0,1]$ for all regions; locations in different regions are assumed to be tail independent. The right panel of Figure~\ref{fig:smith2} shows the asymptotic variances of a number of estimators for $\theta_0 \in \{0.1,0.2,\ldots,0.9\}$, $d=100$, and $r = 20$. For the $d$-dimensional estimator, we used again $g \equiv 1$ and $q=1$. For the pairwise estimator, we used all 10 pairs in each of the 20 regions, yielding $q = 200$ pairs in total; because of symmetry, the optimal weight matrix produces the same asymptotic variance as the identity weight matrix. For most of the parameter values, using the pairwise estimator entails only a modest increase in asymptotic variance. For some parameter values, it even leads to a small decrease.
\begin{figure}[ht]
\centering
\subfloat{\includegraphics[width=0.3\textwidth]{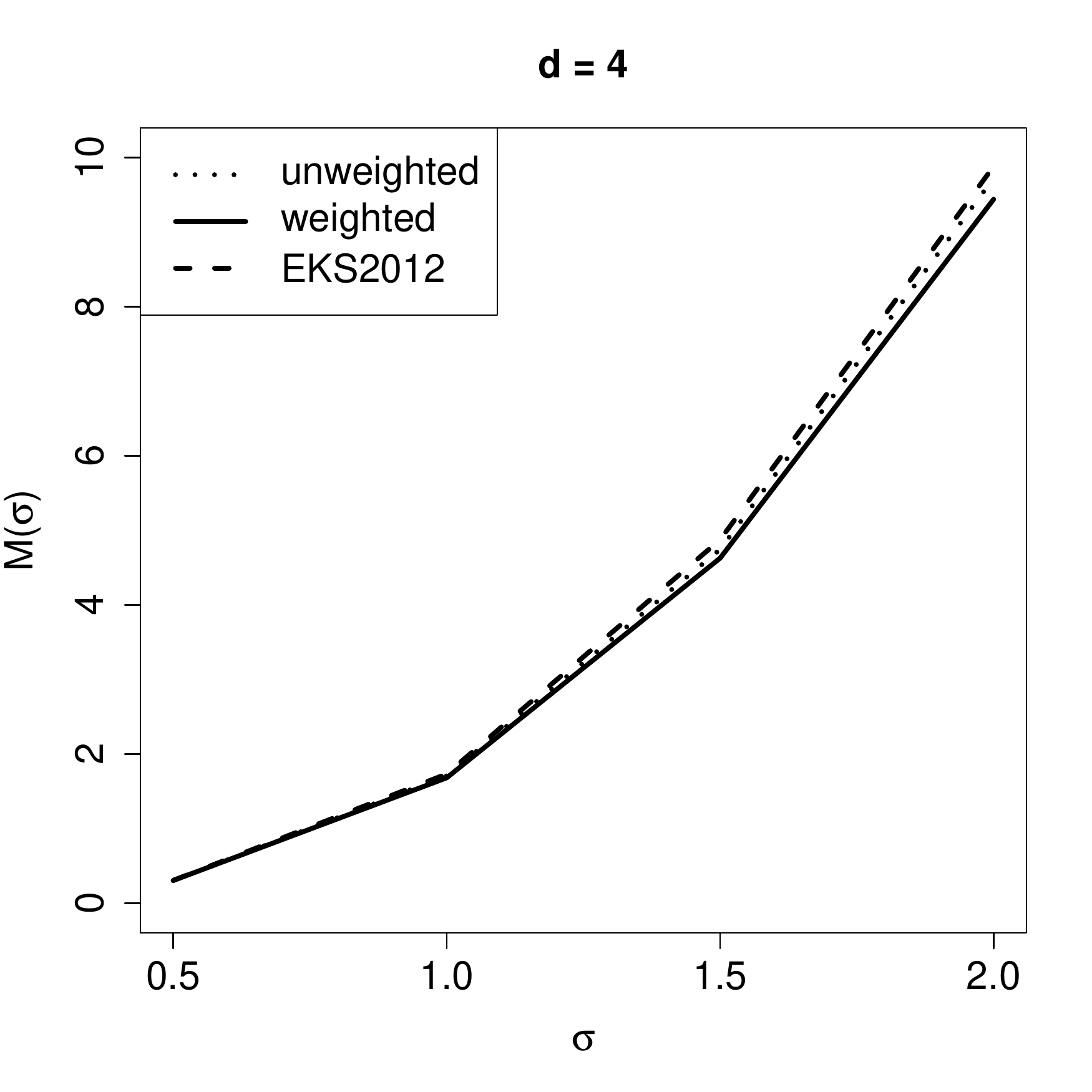}}
\subfloat{\includegraphics[width=0.3\textwidth]{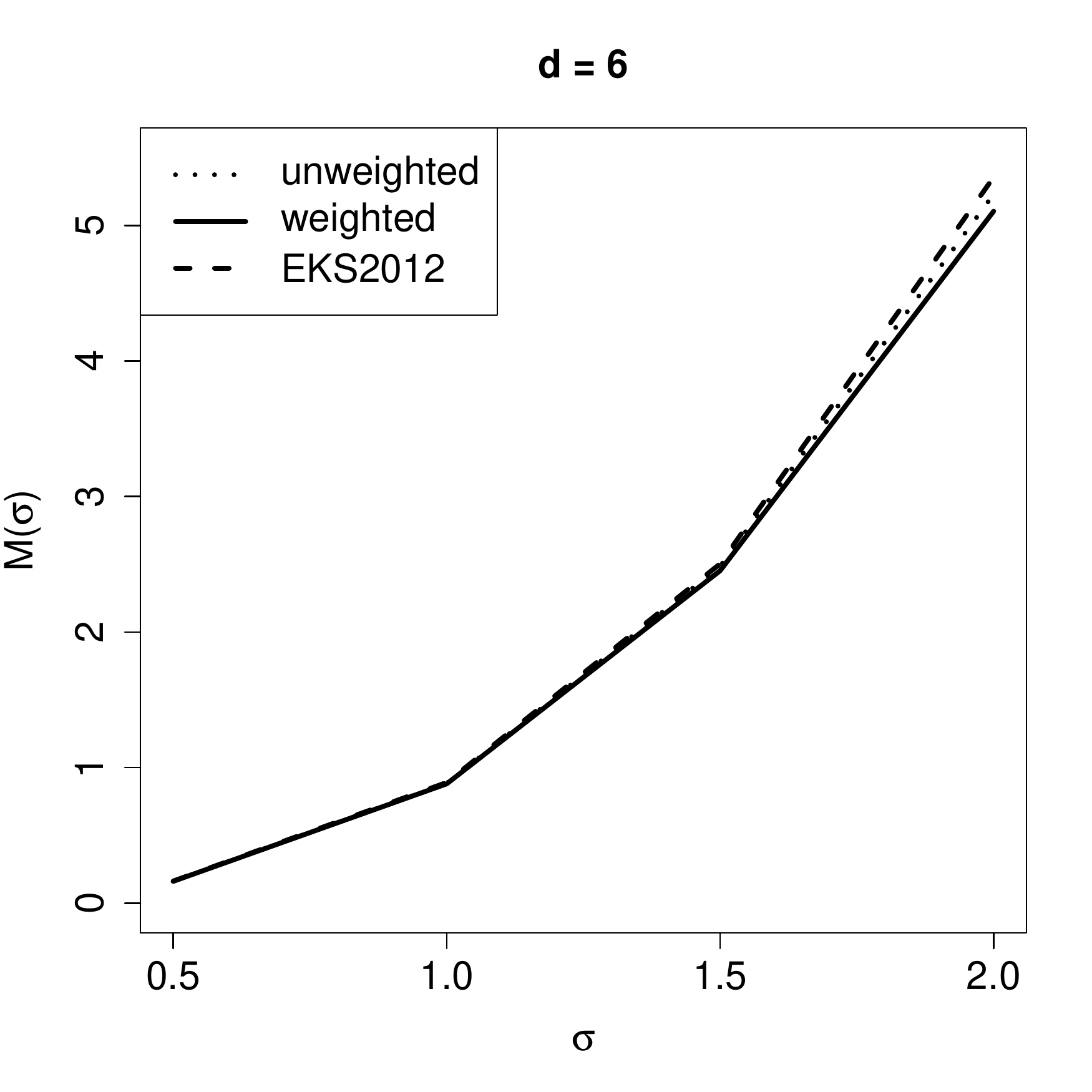}}
\subfloat{\includegraphics[width=0.3\textwidth]{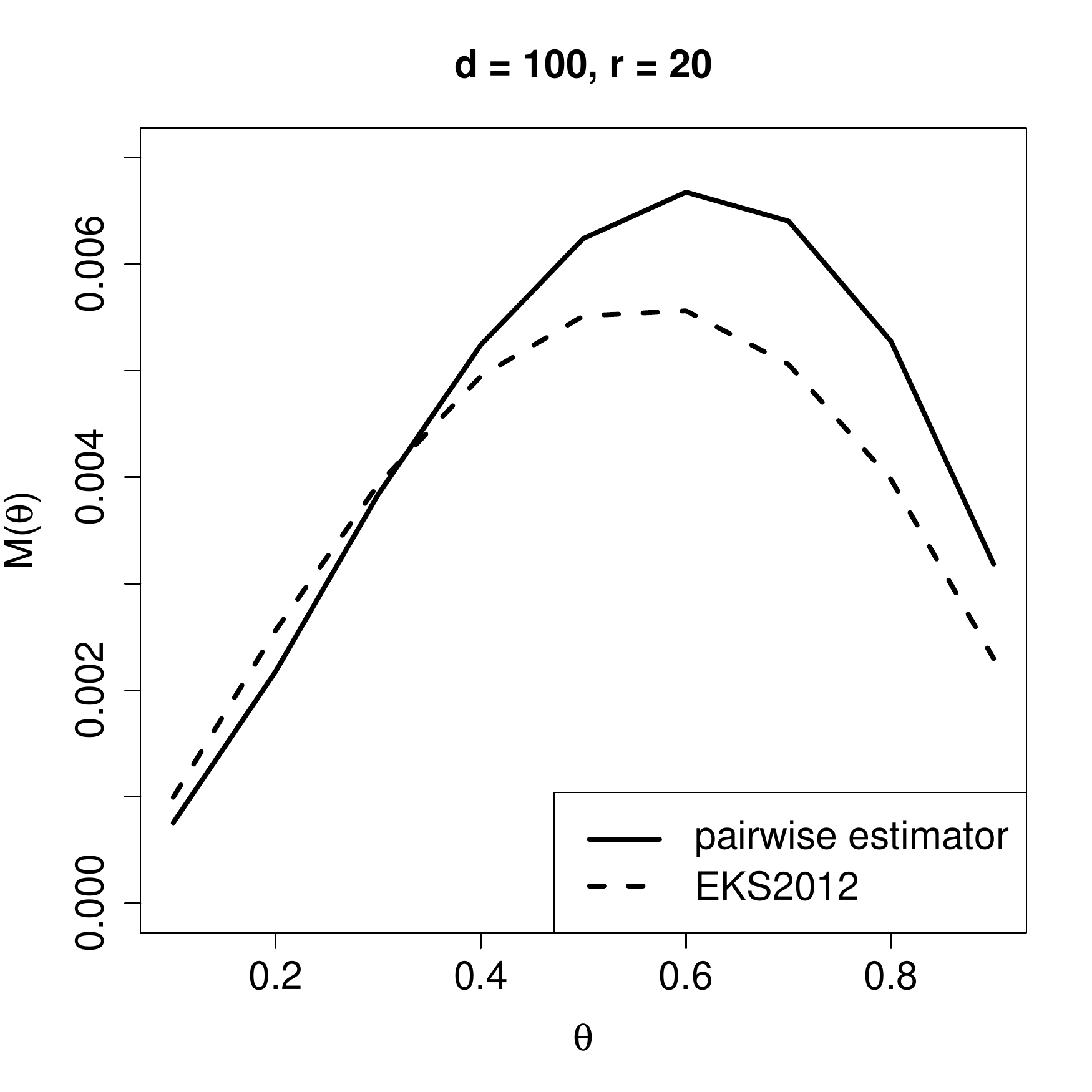}}
\caption{Left and middle: asymptotic variance $M(\sigma)$ for $\sigma \in \{0.5,1,1.5,2 \}$ and $d \in \{4,6\}$ for the $d$-dimensional Smith model on the line; the pairwise estimator with identity weight matrix (unweighted), the pairwise estimator with optimal weight matrix (weighted) and the $d$-dimensional M-estimator from \citet{einmahl2012}. Right: asymptotic variance $M(\theta_0)$ for $\theta_0 \in \{0.1,\ldots,0.9 \}$, $d = 100$ and $r = 20$ for the logistic model; the pairwise estimator with identity weight matrix and the $d$-dimensional M-estimator from \citet{einmahl2012}.}
\label{fig:smith2}
\end{figure}

\section{Application: speeds of wind gusts}
\label{sec:applic}

Using  extreme-value theory to estimate the frequency and magnitude of extreme wind events or to estimate the return levels for (extremely) long return periods is not a novelty in the fields of meteorology and climatology.
Numerous research papers published in the last 20--25 years are applying methods from extreme-value theory to treat those estimation problems, see, for example, \citet{karpa2013,ceppi2008,palutikof1999} and the references therein.
However, until very recently, all statistical approaches were univariate.
The scientific and computational advancements nowadays facilitate the usage of high-dimensional or spatial models.
In \citet{engelke2014} and \citet{oesting2013}, for instance, Brown--Resnick processes are used to model wind speed data.

We consider a data set from the Royal Netherlands Meteorological Institute (KNMI), consisting of the daily maximal speeds of wind gusts, which are measured in 0.1 m/s. The data are observed at 35 weather stations in the Netherlands, over the time period from January 1, 1990 to May 16, 2012. The data set is freely available from \url{http://www.knmi.nl/climatology/daily_data/selection.cgi}.
Due to the strong influence of the sea on the wind speeds in the coastal area, we only consider the inland stations, of which we removed three stations with more than 1000 missing observations. The thus obtained 22 stations and the remaining amount of missing data per station are shown in the left panel of Figure \ref{Fig:AllTogetherNow}.
 We aggregate the daily maxima to three-day maxima in order to minimize temporal dependence and we also restrict our observation period to the summer season (June, July and August) to obtain more or less equally distributed data. To treat the missing data, if at least one of the observations for the three-day maximum is present, we define this to be a valid three-day maximum, thus ignoring these missing observations. We consider a three-day maximum missing only if all three constituting daily maxima are missing. In this way only a few data are missing. We use the ``complete deletion approach" for these data and obtain a data set with $n=672$ observations. This data set is available from the \textsf{spatialTailDep} package.

We consider the stable tail dependence function corresponding to the Brown--Resnick process (see Section~\ref{sec:theory}). It is frequently argued, see e.g. \citet{engelke2014} or \citet{ribatet2013}, that an anisotropic model is needed to describe the spatial tail dependence of wind speeds.
Using Corollary~\ref{cor3} we first test, based on the $q = 29$ pairs of stations that are at most 50 kilometers apart,
if the isotropic process suffices for the above data. In the reparametrization introduced in Section \ref{applic}, the case  $\tau_{11}=\tau_{22}$ and $\tau_{12}=0$ corresponds to isotropy. The test statistic
    \begin{equation*}
	k\left(\widehat\tau_{11}-\widehat\tau_{22},\widehat\tau_{12}\right) M_2\left(\widehat{\alpha},\widehat\tau_{11}+\widehat\tau_{22},0,0\right)^{-1}\left(\widehat\tau_{11}-\widehat\tau_{22},\widehat\tau_{12}\right)^T
    \end{equation*}
is computed for $k=60$. We obtain a value of $0.180$, leading to a $p$-value of $0.914$ against the $\chi_2^2$-distribution (Corollary~\ref{cor3}), so we can not reject the null hypothesis. Although the stable tail dependence function corresponding to the more complicated anisotropic Brown--Resnick process is usually assumed for this type of data, the test result shows that the more simple isotropic Brown--Resnick process suffices for the Dutch inland summer season wind speeds.

The estimate of the parameter vector $(\alpha, \rho)$ corresponding to the isotropic Brown--Resnick process is obtained for $k=60$, with $q=29$ pairs and using the optimal weight matrix chosen according to the two-step procedure described after Corollary~\ref{cor1}. The estimates, with standard errors in parentheses, are $\widehat\alpha = 0.398$ $(0.020)$ and $\widehat\rho = 0.372$ $(0.810)$. We also see that the Smith model would not fit these data well since $\alpha$ is much smaller than $2$.

\begin{figure}[ht]
\centering
    \subfloat{\includegraphics[angle=0,width=.385\textwidth]{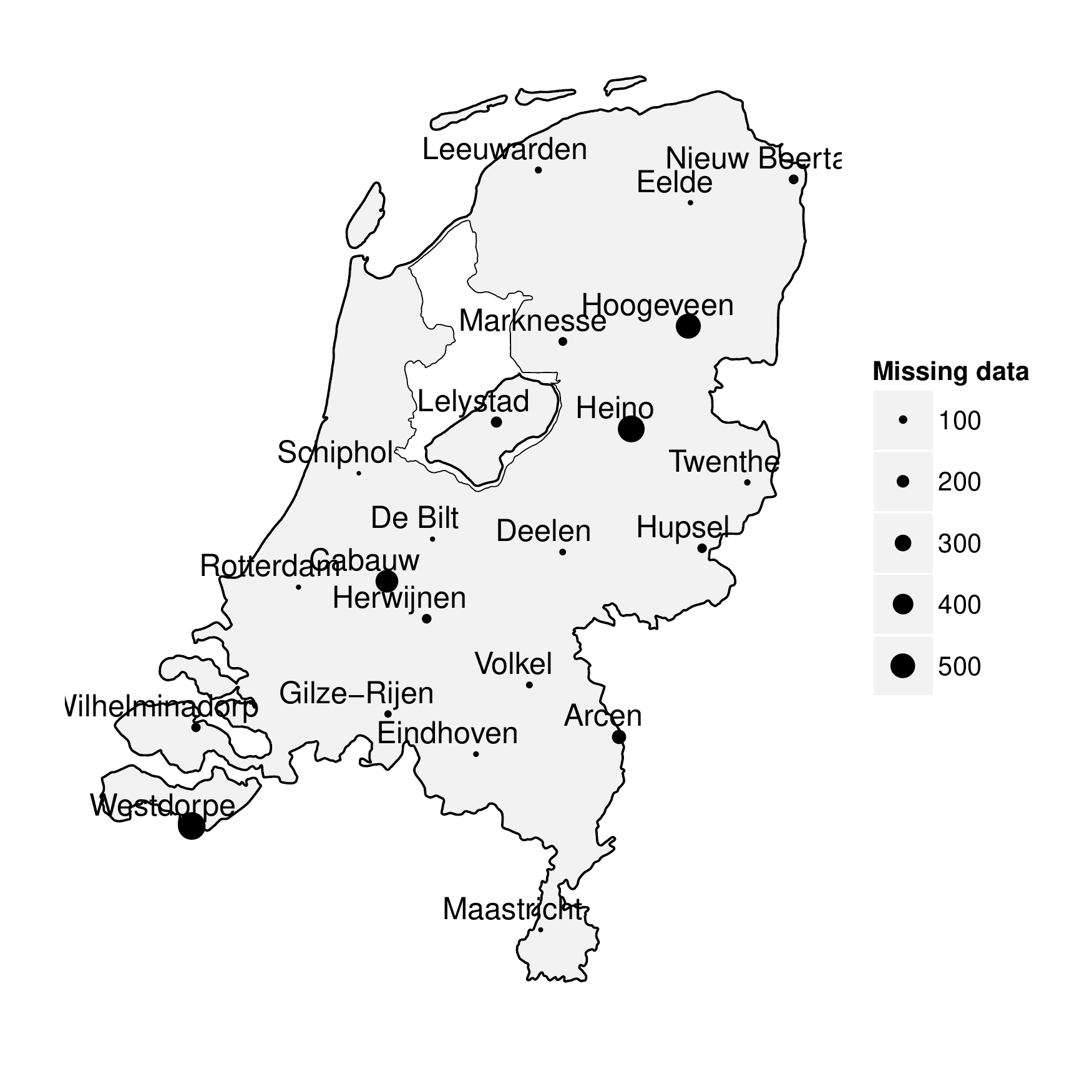}}
    \subfloat{\includegraphics[angle=0,width=.385\textwidth]{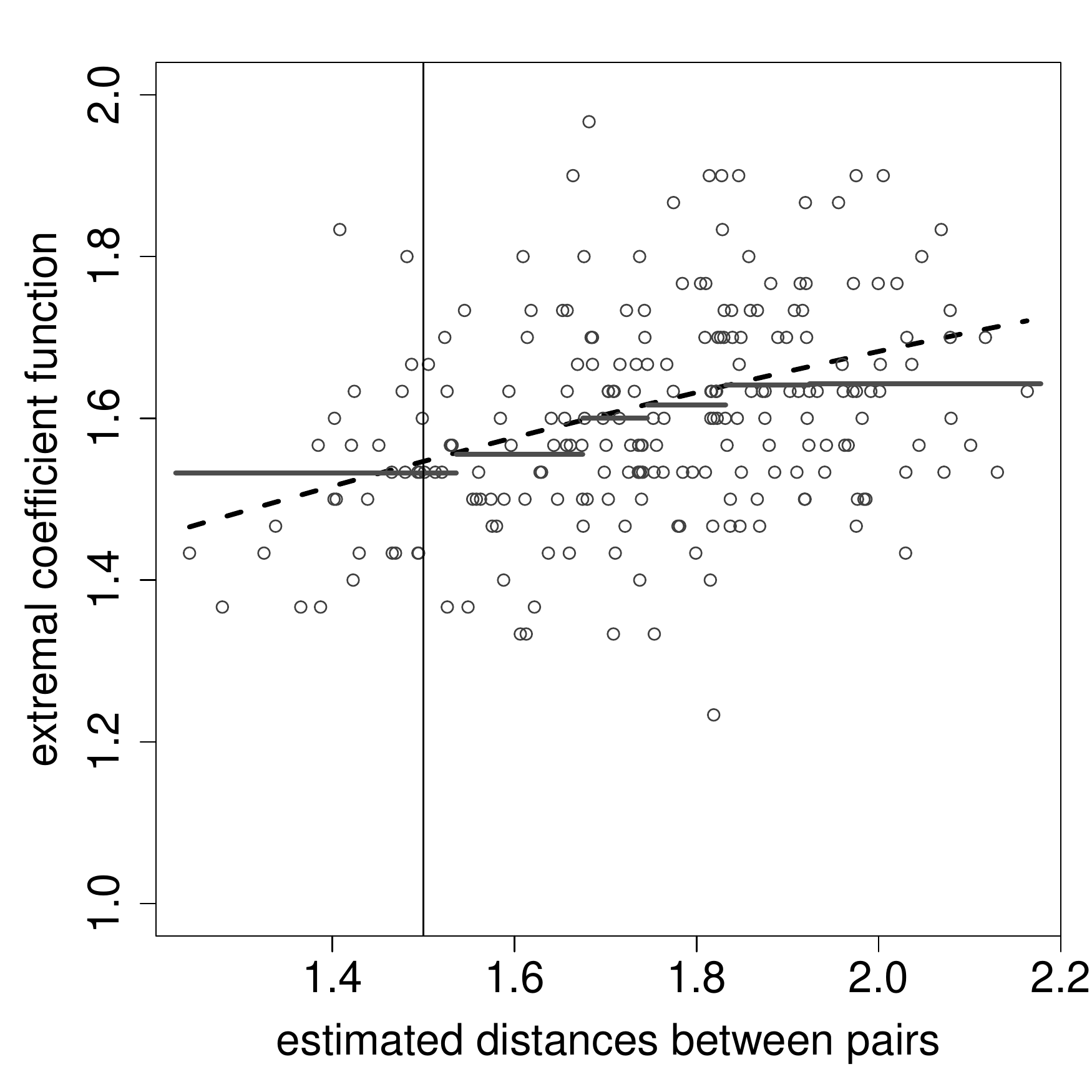}}
    \caption{KNMI weather stations (left). Estimates of the extremal coefficient function (right).}
    \label{Fig:AllTogetherNow}
\end{figure}

To visually assess the goodness-of-fit, we compare the nonparametric and the Brown--Resnick model based estimates of the extremal coefficient function, $\ell(1,1)$. Instead of presenting them as a function of the actual distance between stations, we exploit the simple expression $\ell(1,1) = 2\Phi\left(a_{uv}/2\right)$ for the extremal coefficient function of the Brown-Resnick process, see Section~\ref{sec:theory}.

In the right panel of Figure \ref{Fig:AllTogetherNow}, the following are depicted:
\begin{compactitem}
 \item the 231 nonparametric estimates of the extremal coefficient function $\ell(1,1)$, based on all pairs of stations (circles),
 \item 6 per-bin averages of the nonparametric estimates of $\ell(1,1)$ (solid line), and
 \item the extremal coefficient function values computed from the model (dashed line),
\end{compactitem}
against the estimated distances
  \[\widehat{a}_{uv} = \sqrt{2 \widehat\gamma (\mathbf{s}_u - \mathbf{s}_v)} = \sqrt{2} \left(\frac{\|\mathbf{s}_u - \mathbf{s}_v\|}		 {\widehat{\rho}}\right)^{\widehat{\alpha}/2}.
  \]

The vertical line in the plot represents the 50 km threshold. It is more in line with our M-estimator, which uses integration over $[0,1]^2$, to focus on the center  $(1/2, 1/2)$ instead of the vertex $(1,1)$ of the unit square. Hence, we use the homogeneity of $\ell$ to replace $\ell(1,1)$ with $2\ell(1/2, 1/2)$ and then estimate the latter with $2\widehat \ell_{n,k}(1/2, 1/2)$,  see \eqref{eq:npstdf}. The nonparametric estimates of $\ell(1,1)$ in the figure are obtained in this way. We see that the estimated $\ell(1,1)$ of the Brown--Resnick process is quite close to the average $2\widehat \ell_{n,k}(1/2, 1/2)$ per-bin, supporting the adequacy of the model.

\appendix

\section{Proofs}
\label{sec:proofs}

The notations are as in Section~\ref{sec:Mestimator}. Let $\widehat{\Theta}_n$ denote the (possibly empty) set of minimizers of the function
\begin{equation*}
f_{n,k,\Omegan} (\theta) =  L_{n,k} (\theta)^T \, \Omegan \, L_{n,k} (\theta) \eqqcolon \lVert L_{n,k} (\theta) \rVert^2_{\Omegan}.
\end{equation*}
Write $\delta_0$ for the Dirac measure concentrated at zero. Recall that to each $m \in \{1, \ldots, q\}$ there corresponds a pair of indices $\pi(m) = (u, v)$ with $1 \le u < v \le d$. Let $\mu = (\mu_1,\ldots,\mu_q)^T$ denote a column vector of measures on $\RR^\dm$ whose $m$-th element is defined as
\begin{equation*}
  \mu_m (\diff \vc{x})
  = \mu_m (\diff x_1 \times \ldots \times \diff x_d)
  = \mu_{m1} (\diff x_1) \times \ldots \times \mu_{md} (\diff x_d)
  \coloneqq \diff x_u \, \diff x_v \prod_{j \neq u,v} \delta_0 (\diff x_j),
\end{equation*}
so that $\mu_{mj}$ is the Lebesgue measure if $j=u$ or $j=v$, and $\mu_{mj}$ is the Dirac measure at zero for $j \neq u,v$. Using the measures $\mu_m$ allows us to write
\begin{equation*}
L_{n,k} (\theta) = \left( \int_{[0,1]^\dm} \left\{ \widehat{\ell}_{n, k} (\vc{x}) - \ell (\vc{x} ; \theta) \right\} \, \mu_m(\diff \vc{x}) \right)_{m=1}^q = \int \widehat{\ell}_{n,k} \, \mu - \psi(\theta).
\end{equation*}

\begin{lem}\label{firstlem}
If $0 < \lambda_{n,1} \le \ldots \le \lambda_{n,q}$ and $0 < \lambda_{1} \le \ldots \le \lambda_{q}$ denote the ordered eigenvalues of the symmetric matrices $\Omegan$ and $\Omega \in \mathbb{R}^{q \times q}$, respectively, then, as $n \to \infty$,
\begin{equation*}
  \Omegan \pto \Omega
  \qquad \text{implies} \qquad
  (\lambda_{n,1},\ldots,\lambda_{n,q})
  \pto
  (\lambda_{1},\ldots,\lambda_{q}).
\end{equation*}
\end{lem}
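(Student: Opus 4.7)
The statement is essentially a continuous mapping result: ordered eigenvalues depend continuously on the matrix, so convergence in probability of $\widehat\Omega_n$ to $\Omega$ transfers to the eigenvalue vectors.

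First I would reduce entry-wise convergence in probability to convergence in probability in a matrix norm. Since we work in the finite-dimensional space $\mathbb{R}^{q\times q}$, the Frobenius (or operator) norm is equivalent to the sup-norm of entries, so $\widehat\Omega_n \pto \Omega$ entry-wise implies $\|\widehat\Omega_n - \Omega\|_{\mathrm{op}} \pto 0$. A short union-bound argument on the $q^2$ entries suffices.

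Next I would invoke Weyl's perturbation inequality for symmetric matrices: if $A$ and $B$ are symmetric with ordered eigenvalues $\lambda_1(A)\le\cdots\le\lambda_q(A)$ and $\lambda_1(B)\le\cdots\le\lambda_q(B)$, then
\[
  \max_{i=1,\ldots,q} |\lambda_i(A) - \lambda_i(B)|
  \;\le\; \|A - B\|_{\mathrm{op}}.
\]
Applied to $A = \widehat\Omega_n$ (which is symmetric by assumption on a set of probability tending to one, since the hypothesis posits a symmetric weight matrix) and $B = \Omega$, this shows that the map sending a symmetric matrix to its ordered eigenvalue vector is $1$-Lipschitz from $(\mathrm{Sym}_q, \|\cdot\|_{\mathrm{op}})$ to $(\mathbb{R}^q, \|\cdot\|_\infty)$.

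Finally I would apply the continuous mapping theorem for convergence in probability to this Lipschitz map, concluding
\[
  \max_{i=1,\ldots,q} |\lambda_{n,i} - \lambda_i|
  \;\le\; \|\widehat\Omega_n - \Omega\|_{\mathrm{op}}
  \;\pto\; 0,
\]
which is exactly the joint convergence in probability of the ordered eigenvalue vectors.

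There is no real obstacle here: the only things to be careful about are (i) using an \emph{ordered} listing of eigenvalues so that the perturbation bound applies without ambiguity, and (ii) handling the fact that $\widehat\Omega_n$ is only assumed symmetric (and positive definite) with probability tending to one if at all -- but since the limit $\Omega$ is positive definite and convergence in probability preserves positivity on events of probability tending to one, this causes no difficulty.
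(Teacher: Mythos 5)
Your proposal is correct and follows essentially the same route as the paper: entry-wise convergence in probability is upgraded to convergence in the spectral norm via equivalence of norms on $\mathbb{R}^{q\times q}$, and Weyl's perturbation theorem then gives $\max_{i}|\lambda_{n,i}-\lambda_i| \le \lVert \Omegan - \Omega\rVert \pto 0$. No gaps to report.
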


\begin{proof}[Proof of Lemma \ref{firstlem}]
The convergence $\Omegan \pto \Omega$ elementwise implies $\lVert \Omegan - \Omega \rVert \pto 0$ for any matrix norm $\lVert \, \cdot \, \rVert$ on $\mathbb{R}^{q \times q}$.
If we take the spectral norm $\lVert \Omega \rVert$ (i.e., $\lVert \Omega \rVert^2$ is the largest eigenvalue of $\Omega^T \Omega$), then Weyl's perturbation theorem \citep[page 145]{jiang2010} states that
\begin{equation*}
\max_{i=1,\ldots,q} \left| \lambda_{n,i} - \lambda_{i} \right| \leq \lVert \Omegan - \Omega \rVert,
\end{equation*}
so that the desired result follows immediately.
\end{proof}

By the diagonalization of $\Omegan$ in terms of its eigenvectors and eigenvalues, the norm $\lVert \, \cdot \, \rVert_{\Omegan}$ is equivalent to the Euclidian norm $\lVert \, \cdot \, \rVert$ in the sense that
\begin{equation*}
\lambda_{n,1}  \lVert L_{n,k} (\theta) \rVert^2 \leq  \lVert L_{n,k} (\theta) \rVert^2_{\Omegan} \leq \lambda_{n,q}  \lVert L_{n,k} (\theta) \rVert^2.
\end{equation*}

\begin{proof}[Proof of Theorem \ref{resultmain1}]
Let $\eps_0 > 0$ be such that the closed ball $B_{\eps_0} (\theta_0) = \{\theta : \left\Vert \theta - \theta_0 \right\Vert \leq \eps_0 \}$ is a subset of $\Theta$; such an $\eps_0$ exists since $\theta_0$ is an interior point of $\Theta$. Fix $\eps > 0$ such that $0 < \eps \leq \eps_0$.
We show first that
\begin{equation}
\label{eq:consistency}
  \Pr[
    \widehat{\Theta}_n \neq \varnothing
    \text{ and }
    \widehat{\Theta}_n \subset B_{\eps} (\theta_0)
  ]
  \rightarrow 1, \qquad n \to \infty.
\end{equation}
Because $\psi$ is a homeomorphism, there exists $\delta >0$ such that for $\theta \in \Theta$, $\left\Vert \psi(\theta) - \psi(\theta_0) \right\Vert \leq \delta$ implies
$ \left\Vert \theta - \theta_0 \right\Vert \leq \eps$.
Equivalently, for every $\theta \in \Theta$ such that $\left\Vert \theta - \theta_0 \right\Vert > \eps$ we have $\left\Vert \psi(\theta) - \psi(\theta_0) \right\Vert > \delta$. Define the event
\begin{equation*}
A_n = \left\{ \left\Vert  \psi(\theta_0) - \int \widehat{\ell}_{n,k} \, \mu \right\Vert \leq \frac{\delta \sqrt{\lambda_{n,1}}}{ 2 + \sqrt{\lambda_{n,q}}}\right\}.
\end{equation*}
If $\theta \in \Theta$ is such that $\left\Vert \theta - \theta_0 \right\Vert > \eps$, then on the event $A_n$, we have
\begin{align*}
\left\Vert L_{n,k} (\theta) \right\Vert_{\Omegan} & = \left\Vert \psi(\theta_0) - \psi(\theta) - \left(  \psi (\theta_0) - \int \widehat{\ell}_{n,k} \, \mu \right)   \right\Vert_{\Omegan} \\
& \geq \left\Vert \psi(\theta_0) - \psi (\theta) \right\Vert_{\Omegan}  - \left\Vert \psi(\theta_0) - \int  \widehat{\ell}_{n,k} \, \mu \right\Vert_{\Omegan} \\
& \geq \sqrt{\lambda_{n,1}} \left\Vert \psi(\theta_0) - \psi (\theta) \right\Vert  -  \sqrt{\lambda_{n,q}} \left\Vert  \psi(\theta_0) - \int \widehat{\ell}_{n,k} \, \mu \right\Vert  \\
& >   \delta \sqrt{\lambda_{n,1}} - \delta \frac{\sqrt{\lambda_{n,1} \lambda_{n,q}}}{2 + \sqrt{\lambda_{n,q}}} = \frac{2 \delta \sqrt{\lambda_{n,1}}}{2 + \sqrt{\lambda_{n,q}}}
.
\end{align*}
It follows that on $A_n$,
\begin{equation*}
  \inf_{\theta: \left\Vert \theta - \theta_0 \right\Vert > \eps}
  \left\Vert L_{n,k} (\theta) \right\Vert_{\Omegan}
  \ge \frac{2 \delta \sqrt{\lambda_{n,1}}}{2 + \sqrt{\lambda_{n,q}}}
  > \left\Vert \psi(\theta_0) - \int \widehat{\ell}_{n,k} \, \mu \right\Vert
  \geq \inf_{\theta: \left\Vert \theta - \theta_0 \right\Vert \leq \eps}
  \left\Vert \psi(\theta) - \int \widehat{\ell}_{n,k} \, \mu \right\Vert.
\end{equation*}
The infimum on the right-hand side is actually a minimum since $\psi$ is continuous and $B_{\eps} (\theta_0) $ is compact.
Hence on $A_n$ the set $\widehat{\Theta}_n$ is non-empty and $\widehat{\Theta}_n \subset B_{\eps} (\theta_0)$.

To show \eqref{eq:consistency}, it remains to be shown that $\Pr[A_n] \to 1$ as $n \to \infty$.
Uniform consistency of $\widehat{\ell}_{n,k}$ for $\dm = 2$ was shown in \citet{huang1992}; see also \citet[page 237]{dehaanferreira2006}. The proof for $\dm > 2$ is a straightforward extension. By the continuous mapping theorem, it follows that $\int \widehat{\ell}_{n,k} \, \mu$ is consistent for $\int \ell \, \mu = \psi(\theta_0)$. By Lemma~\ref{firstlem}, $\lambda_{n,m}$ is consistent for $\lambda_{m}$ for all $m \in \{1, \ldots, q\}$. This yields $\Pr[A_n] \rightarrow 1$ and thus \eqref{eq:consistency}.

Next we wish to prove that, with probability tending to one, $\widehat{\Theta}_n$ has exactly one element, i.e., the function $f_{n,k,\Omegan}$ has a unique minimizer. To do so, we will show that there exists $\eps_1 \in (0, \eps_0]$ such that, with probability tending to one, the Hessian of $f_{n,k,\Omegan}$ is positive definite on $B_{\eps_1}(\theta_0)$ and thus $f_{n,k,\Omegan}$ is strictly convex on $B_{\eps_1}(\theta_0)$. In combination with \eqref{eq:consistency} for  $\varepsilon \in (0,\varepsilon_1]$, this will yield the desired conclusion.

For $\theta \in \Theta$, define the symmetric $p \times p$ matrix $\Hessian(\theta; \theta_0)$ by
\[
  \bigl( \Hessian(\theta; \theta_0) \bigr)_{i,j}
  \coloneqq
  2  \left( \frac{\partial \psi (\theta)}{\partial \theta_j} \right)^T \Omega \left( \frac{\partial \psi (\theta)}{\partial \theta_i}  \right) - 2 \left( \frac{\partial^2 \psi (\theta)}{\partial \theta_j \partial \theta_i}  \right) \, \Omega \,
  \bigl( \psi(\theta_0) - \psi (\theta) \bigr)
\]
for $i, j \in \{1, \ldots, p\}$. The map $\theta \mapsto \Hessian(\theta; \theta_0)$ is continuous and
\[
  \Hessian(\theta_0)
  \coloneqq
  \Hessian(\theta_0; \theta_0)
  = 2 \, \dot{\psi}(\theta_0)^T \, \Omega \, \dot{\psi}(\theta_0),
\]
is a positive definite matrix. Let $\lVert \, \cdot \, \rVert$ denote a matrix norm. By an argument similar to that in the proof of Lemma~\ref{firstlem}, there exists $\eta > 0$ such that every symmetric matrix $A \in \RR^{p \times p}$ with $\norm{A - \Hessian(\theta_0)} \le \eta$ has positive eigenvalues and is therefore positive definite. Let $\eps_1 \in (0, \eps_0]$ be sufficiently small such that the second-order partial derivatives of $\psi$ are continuous on $B_{\eps_1}(\theta_0)$ and such that $\norm{ \Hessian(\theta; \theta_0) - \Hessian(\theta_0)} \le \eta/2$ for all $\theta \in B_{\eps_1}(\theta_0)$.

Let $\mathcal{H}_{n,k,\Omegan} (\theta) \in \mathbb{R}^{p \times p}$ denote the Hessian matrix of $f_{n,k,\Omegan}$. Its $(i,j)$-th element is
\begin{align*}
\bigl( \mathcal{H}_{n,k,\Omegan} (\theta) \bigr)_{ij}
& = \frac{\partial^2}{\partial \theta_j \partial \theta_i}
\left[ L_{n,k} (\theta)^T \, \Omegan \, L_{n,k} (\theta) \right]
 = \frac{\partial}{\partial \theta_j}
\left[ - 2 L_{n,k} (\theta)^T \, \Omegan \frac{\partial \psi (\theta)}{\partial \theta_i} \right] \\
& = 2  \left( \frac{\partial \psi (\theta)}{\partial \theta_j} \right)^T \Omegan \left( \frac{\partial \psi (\theta)}{\partial \theta_i}  \right) - 2 \left( \frac{\partial^2 \psi (\theta)}{\partial \theta_j \partial \theta_i}  \right) \Omegan \,  L_{n,k} (\theta).
\end{align*}
Since $L_{n,k}(\theta) = \int \widehat{\ell}_{n,k} \, \mu - \psi(\theta)$ and since $\int \widehat{\ell}_{n,k} \, \mu$ converges in probability to $\psi(\theta_0)$, we obtain
\begin{equation}
\label{eq:unicon}
  \sup_{\theta \in B_{\eps_1}(\theta_0)}
  \norm{ \Hessian_{n,k,\Omegan}(\theta) - \Hessian(\theta; \theta_0) }
  \pto 0,
  \qquad n \to \infty.
\end{equation}
By the triangle inequality, it follows that
\[
  \Pr \biggl[ \sup_{\theta \in B_{\eps_1}(\theta_0)}
  \norm{ \Hessian_{n,k,\Omegan}(\theta) - \Hessian(\theta_0) } \le \eta \biggr]
  \to 1,
  \qquad n \to \infty.
\]
In view of our choice for $\eta$, this implies that, with probability tending to one, $\Hessian_{n,k}(\theta)$ is positive definite for all $\theta \in B_{\eps_1}(\theta_0)$, as required.
\end{proof}

\begin{proof}[Proof of Theorem \ref{resultmain2}]
First note that, as $n \rightarrow \infty$,
\begin{equation*}
\sqrt{k} \, L_{n,k} (\theta_0) \dto \widetilde{B}, \qquad \text{ where } \widetilde{B} \sim \mathcal{N}_q (\vc{0},\Gamma(\theta_0)).
\end{equation*}
This follows directly from \citet[Proposition 7.3]{einmahl2012} by replacing $g(\vc{x}) \,\text{d}\vc{x}$ with $\mu (\text{d} \vc{x})$. Also, from $(C2)$ and Slutsky's lemma, we have
\begin{align*}
  \sqrt{k} \, \nabla f_{n,k,\Omegan} (\theta_0)
  &= - 2 \sqrt{k} \, L_{n,k} (\theta_0)^T \, \Omegan \, \dot{\psi} (\theta_0) \\
  &\dto - 2 \, \widetilde{B}^T \, \Omega \, \dot{\psi}(\theta_0)
  \sim \mathcal{N}_p
  \bigl(
    \vc{0}, \;
    4 \, \dot{\psi}(\theta_0)^T \, \Omega \, \Gamma(\theta_0) \, \Omega \, \dot{\psi} (\theta)
  \bigr).
\end{align*}
Since $\widehat{\theta}_n$ is a minimizer of $\widehat{f}_{k,n}$ we have $\nabla f_{n,k,\Omegan} (\widehat{\theta}_n) = 0$. Applying the mean value theorem to the function $t \mapsto \nabla f_{n,k,\Omegan} (\theta_0 + t (\widehat{\theta}_n - \theta_0))$ at $t = 0$ and $t = 1$ yields
\begin{equation*}
0 = \nabla f_{n,k,\Omegan}  (\widehat{\theta}_n) = \nabla f_{n,k,\Omegan}  (\theta_0) + \Hessian_{n,k,\Omegan} (\widetilde{\theta}_n) \, (\widehat{\theta}_n - \theta_0)
\end{equation*}
where $\widetilde{\theta}_n$ is a random vector on the segment connecting $\theta_0$ and $\widehat{\theta}_n$. As $\widehat{\theta}_n \pto \theta_0$, we have $\widetilde{\theta}_n \pto \theta_0$ too. By \eqref{eq:unicon} and continuity of $\theta \mapsto \Hessian(\theta; \theta_0)$, it then follows that $\mathcal{H}_{n,k,\Omegan} (\widetilde{\theta}_n) \pto \mathcal{H} (\theta_0)$. Putting these facts together, we conclude that
\begin{equation*}
\sqrt{k} (\widehat{\theta}_n - \theta_0)
= - \bigl( \mathcal{H}_{n,k,\Omegan} (\widetilde{\theta}_n) \bigr)^{-1} \, \sqrt{k} \, \nabla f_{n,k,\Omegan}  (\theta_0) \dto \mathcal{N}_p \bigl( 0, M(\theta_0) \bigr),
\end{equation*}
as required.
\end{proof}

\begin{proof}[Proof of Corollary \ref{cor1}]
Assumption (C6) implies that the map $\theta \mapsto \Gamma (\theta)$ is continuous at $\theta_0$ \citep[Lemma 7.2]{einmahl2008}. Further, $\Gamma (\widehat{\theta}^{(0)}_n)^{-1}$ converges in probability to $\Gamma (\theta_0)^{-1}$,
because of the continuous mapping theorem and the fact that $\widehat{\theta}^{(0)}_n$ is a consistent estimator of $\theta_0$. Finally, the choice $\Omega_{\textnormal{opt}} = \Gamma (\theta)^{-1}$ in \eqref{eq:asym} leads to the minimal matrix $M_{\textnormal{opt}}(\theta)$ in \eqref{eq:Mopt}; see for example \citet[page 339]{abadir2005}.
\end{proof}

\section*{Acknowledgments}
The authors are grateful to the editor, the associate editor, and two referees for their helpful suggestions. This research is supported by contract ``Projet d'Actions de Recherche Concert\'ees" No.~12/17-045 of the ``Communaut\'{e} fran\c{c}aise de Belgique'' and by IAP research network grant nr.~P7/06 of the Belgian government (Belgian Science Policy). The second author gratefully acknowledges funding from the Belgian Fund for Scientific Research (F.R.S.-FNRS).

\renewcommand\refname{REFERENCES}
\bibliographystyle{chicago}
\bibliography{library}

\end{document}